\newcommand{\CC}[1]{\cellcolor{gray!#1}}
\tikzset{nature/.style={draw,rectangle}}
\tikzset{>={Stealth[scale=1.2]}} 
\newcommand{\until}{\,\mathsf{U}\,}
\newcommand{\MVal}{\mathcal{M}[\val]}
\newcommand{\MDPValSigma}{\mathcal{M}_{\val}^{\sigma}}
\newcommand{\MDPValSigmaN}{\mathcal{N}_{\val}^{\sigma}}
\newcommand{\MDPValprimeSigma}{\mathcal{M}_{\val'}^{\sigma}}
\newcommand{\MDPValprimeSigmaN}{\mathcal{N}_{\val'}^{\sigma}}
\newcommand{\GraphPresVals}{\mathsf{Val}^{\mathrm{gp}}_{\mathcal{M}}}
\newcommand{\GraphPresValsN}{\mathsf{Val}^{\mathrm{gp}}_{\mathcal{N}}}
\newcommand{\ReachProb}[3]{\mathbb{P}^{#1}_\mathcal{#2}[\Diamond #3]}
\newcommand{\ReachProbM}[1]{\mathbb{P}^{#1}_{\mathcal{M}^\val}[\Diamond \fin]}
\newcommand{\ReachProbu}[4]{\mathbb{P}^{#1}_\mathcal{#2}[#3\until #4]}
\newcommand{\mmec}{\mathcal{M}_{/\textnormal{MEC}}}
\newcommand{\RewardNu}{\mathrm{Rew}^{\nu}}
\newcommand{\Reward}[2]{\mathrm{Rew}^{#1}_{#2}}
\newcommand{\RewardVal}{\mathrm{Rew}^\sigma_{\val}}
\newcommand{\RewardValM}{\mathrm{Rew}^\sigma_{\mathcal{M}[\val]}}
\newcommand{\RewardValN}{\mathrm{Rew}^\sigma_{\mathcal{N}[\val]}}
\newcommand{\RewardValprimeN}{\mathrm{Rew}^\sigma_{\mathcal{N}[\val']}}
\newcommand{\RewardOptVal}{\mathrm{Rew}^*_{\val}}
\newcommand{\RewardOptValM}{\mathrm{Rew}^*_{\mathcal{M}[\val]}}
\newcommand{\RewardOptValN}{\mathrm{Rew}^*_{\mathcal{N}[\val]}}
\newcommand{\RewardOptValprimeN}{\mathrm{Rew}^*_{\mathcal{N}[\val']}}
\newcommand{\fin}{\mathit{fin}}
\newcommand{\fail}{\mathit{fail}}
\newcommand{\val}{\mathsf{val}}
\newcommand{\valprime}{\mathsf{val'}}
\newclass{\ETR}{ETR}
\newcommand{\textETR}{$\ETR$}
\newcommand{\textCoETR}{$\co\ETR$}
\newcommand{\textCoNP}{$\co\NP$}
\newcommand{\Set}[1]{\{#1\}}
\newcommand{\set}[1]{\Set{#1}}
\newcommand{\multiset}[1]{ \{\!\{ #1 \}\!\} }
\newcommand{\mecclass}[1]{ [\![ #1 ]\!]_{\mathrm{MEC}} }
\newcommand{\say}[1]{``#1''}
\crefname{property}{Property}{Properties}
\begin{document}

\title{Graph-Based Reductions for Parametric and Weighted MDPs}
%
%
\author{Kasper Engelen
\and
Guillermo A. P\'{e}rez
\and
Shrisha Rao
}

\authorrunning{K. Engelen et al.}

\institute{University of Antwerp -- Flanders Make, Belgium\\ \email{\{kasper.engelen, guillermo.perez, shrisha.rao\}@uantwerpen.be}}

\maketitle

\begin{abstract}
  We study the complexity of reductions for weighted reachability in
  parametric Markov decision processes. That is, we say a state $p$ is never
  worse than $q$ if for all valuations of the polynomial indeterminates it is
  the case that the maximal expected weight that can be reached from $p$ is
  greater than the same value from $q$. In terms of computational complexity,
  we establish that determining whether $p$ is never worse than $q$ is
  \textCoETR-complete. On the positive side, we give a polynomial-time
  algorithm to compute the equivalence classes of the order we study for
  Markov chains. Additionally, we describe and implement two inference rules
  to under-approximate the never-worse relation and empirically show that it
  can be used as an efficient preprocessing step for the analysis of large
  Markov decision processes.
  \keywords{Markov decision process  \and sensitivity analysis \and model reduction}
\end{abstract}

\section{Introduction}


Markov decision processes (MDPs, for short) are useful mathematical models to
capture the behaviour of systems involving some random components and discrete
(non-deterministic) choices. In the field of verification, they are studied as
formal models of randomised algorithms, protocols,
etc.~\cite{baier2008,DBLP:reference/mc/2018} In artificial intelligence, MDPs
provide the theoretical foundations upon which reinforcement learning
algorithms are based~\cite{DBLP:books/aw/RN2020,DBLP:books/lib/SuttonB98}.


From the verification side, efficient tools have been implemented for the analysis
of MDPs. These include, for instance, PRISM~\cite{DBLP:conf/cav/KwiatkowskaNP11},
Storm~\cite{DBLP:journals/sttt/HenselJKQV22}, and 
Modest~\cite{DBLP:conf/tacas/HartmannsH14}. While those
tools can handle ever larger and complexer MDPs, as confirmed by the most
recent Comparison of Tools for the Analysis of Quantitative Formal
Models~\cite{DBLP:conf/isola/BuddeHKKPQTZ20},
they still suffer from the state-explosion problem and the worst-case
exponential complexity of value iteration. While most tools can check the MDP
against a rich class of specifications (linear temporal logic, probabilistic
computation-tree logic, etc.), they often reduce the task to reachability
analysis and solve the latter using some version of value
iteration~\cite{DBLP:journals/corr/abs-2301-10197}.


In this work we focus on reduction techniques to improve the runtime of MDP
analysis algorithms as implemented in the aforementioned verification tools.
We are particularly interested in reductions based on the graph underlying the
MDP which do not make use of the concrete probability distributions. This,
because such reductions can also be applied to partially known MDPs that arise
in the context of reinforcement learning~\cite{bharadwaj2017}. Several graph
and automata-theoretic reduction techniques with these properties already
exist. They focus on the computation of states with specific properties, for
instance: extremal-value states (i.e. states whose probability of reaching the
target set of states is $0$ or $1$), maximal end components (i.e. sets of
states which can almost-surely reach each other), and essential
states~\cite{DBLP:conf/papm/DArgenioJJL01,DBLP:conf/qest/CiesinskiBGK08} (i.e.
sets of states which form a graph-theoretic separator between some given state
and the target set), to name a few. In~\cite{NWR_paper}, Le Roux and P\'erez
have introduced a partial order among states, called the \emph{never-worse
relation} (NWR, for short), that subsumes all of the notions listed above.
Their main results in that work were establishing that the NWR the natural
decision problem of comparing two states is \coNP-complete, and giving a few
inference rules to under-approximate the full NWR.

\paragraph{Our contribution.}
In this paper, we have extended the NWR to weighted reachability in parametric MDPs.
Formally, we have assigned weights to the target set of states and allow for transitions to be labelled with polynomials instead of just concrete
probability values. Then, we say a state $p$ is never worse than $q$ if for
all valuations of the indeterminates, it is the case that the maximal expected
weight that can be reached from $p$ is at least the same value from $q$.
We have shown that determining whether $p$ is NWR than $q$ is \textCoETR-complete, so it
is polynomial-time interreducible with determining the truth value of a
statement in the existential theory of the reals. Along the way, we prove that
the NWR for weighted reachability reduces in polynomial time to the NWR for
(Boolean) reachability. Regarding theoretical results, we further establish
that the equivalence classes of the NWR can be computed in polynomial time for a special class of
parametric Markov chains, while this is just as hard as the NWR decision
problem for parametric MDPs. Finally, we have significantly improved (and corrected small errors in) the inference
rules from Le Roux and P\'erez and provided a concrete algorithm to use them in
order to reduce the size of a given MDP. We have implemented our algorithm and
evaluated its performance on a number of benchmarks from the Quantitative
Verification Benchmark Set~\cite{DBLP:conf/tacas/HartmannsKPQR19}.

\section{Preliminaries}

For a directed graph $\mathcal{G} = (V, E)$ and a vertex $u \in V$, we write
$uE = \Set{ v \in V \mid (u,v) \in E}$ to denote the set of \emph{(immediate)
successors} of $u$. A \emph{sink} is a vertex $u$ such that $uE = \emptyset$.

For a finite alphabet $\Sigma$, $\Sigma^*$ is the set of all finite words over
$\Sigma$ including the empty word $\varepsilon$, and $\Sigma^+$ is the set of
all non-empty words.

Let $X = \Set{x_1, \dots, x_k}$ be a finite set of variables and
$p(x_1,\dots,x_k)$ be a polynomial with rational coefficients on $X$.
For a valuation $\val : X \to \mathbb{R}$ of the variables, we write $p[\val]$
for the image of $p$ given $\val(X)$, i.e.
$p(\val(x_1),\dots,\val(x_k))$. We write $p \equiv 0$ to denote the fact that
$p$ is syntactically equal to $0$ and $p = 0$ to denote that its image is $0$ for all
valuations. Finally, we write $\mathbb{Q}[X]$ to denote the set of all
polynomials with rational coefficients on $X$. 

\subsection{Stochastic models}


A distribution on a finite set $S$ is a function $f : S
\rightarrow \mathbb{R}_{\geq 0}$, with $\sum_{s \in S} f(s) = 1$. 


\begin{definition}[Markov chains]
  A (weighted) Markov chain is a tuple $\mathcal{C} = (S, \mu, T, \rho)$ with
  a finite set of states $S$, a set of target states $T\subseteq S$, a
  probabilistic transition function $\mu : (S\setminus T) \times S \rightarrow
  \mathbb{R}_{\geq 0}$ with $ \forall s \in S \setminus T : \sum_{s' \in S}
  \mu(s,s') = 1$ (i.e. $\mu$ maps non-target states onto probability
  distributions over states), and a weight function
  $\rho:T\rightarrow\mathbb{Q}$ that assigns weights to the target states.
\end{definition}
Note that all states in $T$ are sinks: $\mu(t,s)$ is
undefined for all $t \in T,s \in S$.

A \emph{run} of $\mathcal{C}$ is a finite non-empty word $\pi = s_0 \dots s_n$
over $S$ such that $0 < \mu(s_i,s_{i+1})$ for all $0 \leq i < n$. The run $\pi$
\emph{reaches} $s' \in S$ if $s' = s_n$. The probability associated with a run
is defined as $\mathbb{P}_{\mu}(\pi) = \prod_{0 \leq i < n} \mu(s_i,
s_{i+1})$.

We also define the probability of reaching a set of states.
Below, we write $\multiset{x_1, \dots, x_n}$ to denote the \emph{multiset} consisting of the elements $x_1, \dots, x_n$.

\begin{definition}[Reachability]
Given a Markov chain $\mathcal{C}=(S,\mu,T,\rho)$, an initial state $s_0\in S$
and a set of states $B\subseteq S$, we denote by
$\ReachProb{s_0}{\mathcal{C}}{B}$ the probability of reaching $B$
from $s_0$. If $s_0 \in B$ then $\mathbb{P}^{s_0}_\mathcal{C}[\Diamond B] = 1$,
otherwise:
  \[
    \mathbb{P}^{s_0}_\mathcal{C}[\Diamond B] = \sum \multiset{
    \mathbb{P}_{\mu}(\pi) \mid \pi = s_0 \dots s_n \in (S \setminus
    B)^+ B }.
  \]
For brevity, when $B=\Set{a}$, we write $\ReachProb{s_0}{\mathcal{C}}{a}$ instead of $\ReachProb{s_0}{\mathcal{C}}{\Set{a}}$.
\end{definition}
Now, we use the weights of targets states to define a weighted reachability value.

\begin{definition}[Value of a state]
Given a Markov chain $\mathcal{C}=(S,\mu,T,\rho)$ and a state $s_0\in
S$, the (expected reward) value of $s_0$ is
\(
  \Reward{s_0}{\mathcal{C}}=\sum_{t\in T} \ReachProb{s_0}{\mathcal{C}}{t}\cdot\rho(t).
\)
\end{definition}
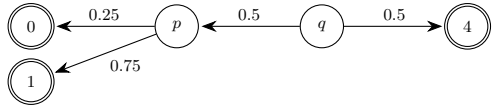
\begin{figure}[b]
    \centering
        
    \begin{tikzpicture}[shorten >=1pt,auto,node distance=1.9 cm, scale = 0.7, transform shape]

        \node[state,accepting](0){$0$};
        \node[state](p)[right=of 0]{$p$};
        \node[state,accepting](1)[below=0.2cm of 0]{$1$};
        \node[state](q)[right=of p]{$q$};
        \node[state,accepting](4)[right=of q]{$4$};
                
        \path[->,auto] 
          (p)   edge node[swap] {$0.25$} (0)
          (p)   edge node {$0.75$} (1)
          (q)   edge node[swap] {$0.5$} (p)
              edge node {$0.5$} (4)
        ;
    \end{tikzpicture}
    \caption{Example of a Markov chain $\mathcal{C}$ with
      $\Reward{p}{\mathcal{C}}=0.75$ and
      $\Reward{q}{\mathcal{C}}=2.375$. Circles depict states; arrows, transitions; double
  circles, target states with their integer labels being their weights.}
    \label{fig:MC_Rew}
\end{figure}

Finally, we recall a definition of parametric Markov decision processes.

\begin{definition}[Markov decision processes]
\label{def:W_pMDP}
A weighted and parametric Markov decision process (wpMDP) is a tuple
$\mathcal{M} = (S, A, X, \delta, T, \rho)$ with finite sets $S$ of states; $A$
of actions; $X$ of parameters; and $T \subseteq S$ of target states. $\delta : (S\setminus T)\times A \times S \to \mathbb{Q}[X]$ is a probabilistic transition function that maps transitions onto polynomials
and $\rho : T \rightarrow \mathbb{Q}$ is a weight function.
\end{definition}
We assume, without loss of generality, that there is an extra state $\fail\in T$, with $\rho(\fail) = 0$. Intuitively, this is a ``bad'' target state that we want to avoid. Throughout the paper, we sometimes only specify partial probabilistic transition functions. This slight abuse of notation is no loss of generality as (for all of our purposes) we can have all unspecified transitions lead to $\fail$ with probability $1$.

In this paper, we will work with the following subclasses of wpMDPs.
\begin{description}
    \item[pMDPs] or (non-weighted) parametric MDPs are the subclass of wpMDPs
      which have only two states in $T$, namely, $T = \Set{\fin,\fail}$ where
      $\rho(\fin)=1$ and $\rho(\fail)=0$. For pMDPs, we omit $\rho$ from the
      tuple defining them since we already know the weights of the targets.
    \item[w\~pMDPs] or weighted \emph{trivially parametric} MDPs have, for
      each transition, a unique variable as polynomial. That is, 
      the
      probabilistic transition function is
      such that 
      $\delta(s,a,s')\equiv0$ or $\delta(s,a,s')=x_{s,a,s'}\in X$, for all
      $(s,a,s') \in (S \setminus T) \times A \times S$.
      Since all parameters are trivial in such
      wpMDPs, we omit their use. Instead, an w\~pMDPs is a tuple $(S, A,
      \Delta, T, \rho)$ where $\Delta \subseteq (S\setminus T) \times A
      \times S$ represents all transitions that do not have probability $0$.
    \item[\~pMDPs] or (non-weighted) trivially parametric MDPs are both
      non-weighted and trivially parametric. We thus omit
      $\rho$ and $X$ from their tuple representation $(S,A,\Delta,T)$ where
      $\Delta$ is as defined for w\~pMDPs.
\end{description}

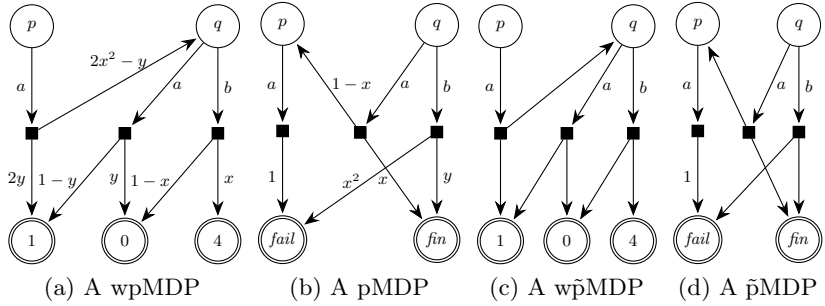
\begin{figure}[htbp]
\centering
\subfloat[A wpMDP\label{fig:wpMDPs-a}]{
\begin{tikzpicture}[shorten >=1pt,auto,node distance=.9 cm, scale = 0.7, transform shape]
        \tikzstyle{action} = [fill=black, shape=rectangle, draw]

        \node[state,accepting](1){$1$};
        \node[action](ap)[above=1.5cm of 1]{};
        \node[state](p)[above=1.5cm of ap]{$p$};
        \node[state,accepting](0)[right=of 1]{$0$};
        \node[action](aq)[above=1.5cm of 0]{};
        \node[state,accepting](4)[right=of 0]{$4$};
        \node[action](bq)[above=1.5cm of 4]{};
        \node[state](q)[above=1.5cm of bq]{$q$};
                
        \path[->] 
        (p)   edge [above] node [left] {$a$} (ap)
        (q)   edge [above] node [right] {$a$} (aq)
              edge [above] node [right] {$b$} (bq)
        (ap)  edge [above] node [left] {$2y$} (1)
              edge [above] node [above=.15cm] {$2x^2-y$} (q)
        (aq)  edge [above] node [left] {$1-y$} (1)
              edge [above] node [left] {$y$} (0)
        (bq)  edge [above] node [left] {$1-x$} (0)
              edge [above] node [right] {$x$} (4)
        ;
    \end{tikzpicture}
}
\subfloat[A pMDP] {
\begin{tikzpicture}[shorten >=1pt,auto,node distance=.9 cm, scale = 0.7, transform shape]
        \tikzstyle{action} = [fill=black, shape=rectangle, draw]

        \node[state,accepting](fail){$\fail$};
        \node[action](ap)[above=1.5cm of fail]{};
        \node[state](p)[above=1.5cm of ap]{$p$};
        \node[](0)[right=of fail]{};
        \node[action](aq)[above=1.8cm of 0]{};
        \node[state,accepting](fin)[right=of 0]{$\fin$};
        \node[action](bq)[above=1.5cm of fin]{};
        \node[state](q)[above=1.5cm of bq]{$q$};
                
        \path[->] 
        (p)   edge [above] node [left] {$a$} (ap)
        (q)   edge [above] node [right] {$a$} (aq)
              edge [above] node [right] {$b$} (bq)
        (ap)  edge [above] node [left] {$1$} (fail)
        (aq)  edge [above] node [right] {$1-x$} (p)
              edge [above] node [left] {$x$} (fin)
        (bq)  edge [above] node [left] {$x^2$} (fail)
              edge [above] node [right] {$y$} (fin)
        ;
    \end{tikzpicture}
}
\subfloat[A w\~pMDP] {
\begin{tikzpicture}[shorten >=1pt,auto,node distance=.4 cm, scale = 0.7, transform shape]
        \tikzstyle{action} = [fill=black, shape=rectangle, draw]

        \node[state,accepting](1){$1$};
        \node[action](ap)[above=1.5cm of 1]{};
        \node[state](p)[above=1.5cm of ap]{$p$};
        \node[state,accepting](0)[right=of 1]{$0$};
        \node[action](aq)[above=1.5cm of 0]{};
        \node[state,accepting](4)[right=of 0]{$4$};
        \node[action](bq)[above=1.5cm of 4]{};
        \node[state](q)[above=1.5cm of bq]{$q$};
                
        \path[->] 
        (p)   edge [above] node [left] {$a$} (ap)
        (q)   edge [above] node [right] {$a$} (aq)
              edge [above] node [right] {$b$} (bq)
        (ap)  edge [above] node [left] {} (1)
              edge [above] node [align=center] {} (q)
        (aq)  edge [above] node [left] {} (1)
              edge [above] node [left] {} (0)
        (bq)  edge [above] node [left] {} (0)
              edge [above] node [right] {} (4)
        ;
    \end{tikzpicture}
}
\subfloat[A \~pMDP] {
\begin{tikzpicture}[shorten >=1pt,auto,node distance=.4 cm, scale = 0.7, transform shape]
        \tikzstyle{action} = [fill=black, shape=rectangle, draw]

        \node[state,accepting](fail){$\fail$};
        \node[action](ap)[above=1.5cm of fail]{};
        \node[state](p)[above=1.5cm of ap]{$p$};
        \node[](0)[right=of fail]{};
        \node[action](aq)[above=1.8cm of 0]{};
        \node[state,accepting](fin)[right=of 0]{$\fin$};
        \node[action](bq)[above=1.5cm of fin]{};
        \node[state](q)[above=1.5cm of bq]{$q$};
                
        \path[->] 
        (p)   edge [above] node [left] {$a$} (ap)
        (q)   edge [above] node [right] {$a$} (aq)
              edge [above] node [right] {$b$} (bq)
        (ap)  edge [above] node [left] {$1$} (fail)
        (aq)  edge [above] node [right] {} (p)
              edge [above] node [left] {} (fin)
        (bq)  edge [above] node [left] {} (fail)
              edge [above] node [right] {} (fin)
        ;
    \end{tikzpicture}
}
\caption{Examples of all wpMDP subclasses we consider. Solid squares represent state-action pairs (see \cref{sec:graph-of-wpmdp}).} \label{fig:wpMDPs}
\end{figure}

For the rest of this section, let us fix a wpMDP $\mathcal{M} = (S,A,X,\delta,T,\rho)$.

\begin{definition}[Graph preserving valuation, from \cite{junges2021}]
  A valuation $\val: X \rightarrow \mathbb{R}$ 
  is graph preserving if the following hold for all
  $s\in (S\setminus T),s' \in S, a \in A$:
  \begin{itemize}
      \item probabilities are non-negative: $\delta(s,a,s')[\val] \geq
        0$,
      \item outgoing probabilities induce a distribution: $\sum_{s'' \in S}
        \delta(s,a,s'')[\val] = 1$,
      \item $\delta(s,a,s') \not \equiv 0 \implies \delta(s,a,s')[\val] \neq 0$.
  \end{itemize}
  The set of all such graph-preserving valuations is written as $\GraphPresVals$.
\end{definition}
A graph-preserving valuation of any wpMDP $\mathcal{M}$ gives an MDP (with real
probabilities!) that we call $\MVal$ by substituting each parameter with the corresponding real number and computing the value of the polynomials on each transition. Such an MDP is equivalent to the ones that appear in the literature \cite{baier2008,NWR_paper,junges2021}. 



\begin{definition}[Strategies]
A (memoryless deterministic) strategy $\sigma$ is a function $\sigma : (S\setminus T) \rightarrow A$ that maps states to actions. The set of all such deterministic memoryless strategies is written as $\Sigma^\mathcal{M}$.
\end{definition}


The wpMDP $\mathcal{M}$ with a valuation $\val$
and strategy $\sigma$ induce a Markov chain $\mathcal{M}^\sigma_\val=(S, \mu, T,
\rho)$ where $\mu(s,s') = \delta(s,\sigma(s),s')[\val]$ for all $s\in (S\setminus T),s' \in S$.

\remark
Note that we only consider memoryless deterministic strategies. This is because we are interested in strategies that maximise the
expected reward value for a given valuation. Since all targets
are trapping, one can prove, e.g., by reduction to
quantitative reachability or expected mean payoff~\cite{baier2008}, that
for all valuations, there is an optimal memoryless and deterministic strategy.

\subsection{The graph of a wpMDP}\label{sec:graph-of-wpmdp}
It is convenient to work with a graph representation of a
wpMDP. In that regard, we consider states $s$ (depicted as circles $\bigcirc$)
and state-action pairs $(s,a)$ (depicted as squares $\square$) to be vertices.
We call the vertices in $S_N = (S\setminus T)\times A$ \emph{nature vertices} and we will
use these to denote state-action pairs throughout the paper. We use $V = S\cup
S_N$ to denote the set of vertices of a wpMDP. The graph
$\mathcal{G}(\mathcal{M}) = (V,E)$ of a wpMDP is thus a bipartite graph with $S$ and
$S_N$ being the two partitions. We define the set of edges as follows.
\begin{align*}
    E = {} & \set{(s,(s,a))\in S\times S_N \mid \exists s'\in S\setminus \{\fail\}:\delta(s,a,s')\not\equiv0 }\\
    &{}\cup\set{((s,a),s')\in S_N\times S \mid \delta(s,a,s')\not\equiv0}
\end{align*}
That is, there is an edge from a state to a nature vertex if the nature vertex has that state as the first component of the tuple and if it can reach some state other than $\fail$ with non-zero probability; there is one from a nature vertex to a state if the polynomial on the corresponding transition is not syntactically zero.


Let
$\val \in \GraphPresVals$ be a graph-preserving valuation
and $\sigma \in \Sigma^{\mathcal{M}}$ a strategy. 
We now define the \textit{(reward) value} of a vertex in
$\mathcal{G}(\mathcal{M})$ with respect to $\val$ and $\sigma$.
\[
  \RewardValM(v) = \begin{cases}
    \sum_{s' \in vE} \delta(s,a,s')[\val] \cdot \RewardValM(s') &
    \text{if } v = (s,a) \in S_N\\
    %
    \Reward{v}{\MDPValSigma} &
    \text{otherwise}
  \end{cases}
\]
Further, we write $\RewardOptValM(v)$ for the value $\max_{\sigma \in
\Sigma^{\mathcal{M}}} \RewardValM(v)$, i.e. when the strategy is chosen to
maximise the value of the vertex. We use $\RewardVal$ and $\RewardOptVal$ when the wpMDP being referred to is clear from context.

\begin{remark}
    In the case of pMDPs and \~pMDPs, $\RewardValM(v)=\ReachProb{v}{\MDPValSigma}{\fin}$. The reader can easily verify this by looking at the definition of reward.
\end{remark}

\begin{example}
In \cref{fig:wpMDPs-a}, the valuation $\val=\set{x=0.6,y=0.28}$ 
is a graph-preserving one. We then get $\RewardOptVal(q,a)=1-0.28=0.72$ and
$\RewardOptVal(q,b)=0.6(4) = 2.4$. Hence, the optimal strategy for state $q$ will
be to choose action $b$ and $\RewardOptVal(q)=2.4$. Now, we get that
$\RewardOptVal(p)=\RewardOptVal(p,a)=2(0.28)+(2(0.6)^2-0.28)\RewardOptVal(q) = 1.616$.
\end{example}

\subsection{The never-worse relation}


Let $\mathcal{M} = (S, A, X, \delta, T, \rho)$ be a wpMDP and
$\mathcal{G}(\mathcal{M}) = (V,E)$. We will
now take the never-worse relation (NWR) \cite{NWR_paper} and generalise it to take
into account polynomials on transitions and weighted target
states.

\begin{definition}[Never-worse relation for wpMDPs]
\label{def:W_pNWR}
A subset $W \subseteq V$ of vertices is \emph{never worse} than a vertex $v
\in V$, written $v \trianglelefteq W$, if and only if:
\[
\forall \, \val \in \GraphPresVals, \exists \, w \in W : \RewardOptVal(v) \leq
\RewardOptVal(w).
\]
We write $v \sim w$ if $v \trianglelefteq \Set{w}$ and $w \trianglelefteq
\Set{v}$, and $\Tilde{v}$ for the equivalence class of $v$. 
\end{definition}
For brevity, we write $v \trianglelefteq w$ instead of $v \trianglelefteq
\Set{w}$. 
For two subsets $U,W\subseteq V$, we write $U\trianglelefteq W$ if and only if
$U \neq \emptyset$ and $u\trianglelefteq W$ for all $u\in U$.


\begin{example}
In \cref{fig:wpMDPs-a}, we have $p\trianglelefteq q$. This is because the constraints $2x^2 - y+2y=1$ and $0<2y<1$ on any graph preserving valuation ensure that $\frac{1}{2}<x$. Thus, $\RewardOptVal(q)=\RewardOptVal(q,a)>2$. Now, the value of $p$ will be a convex combination of $1$ and $\RewardOptVal(q)$. Hence, it's value will be at most the value of $q$.
\end{example}

\section{From Weighted to Non-Weighted MDPs}
\label{chap:Weighted_to_nonWeighted}
In this section, we show that we can efficiently transform any wpMDP
into a non-weighted one with a superset of states and such that the NWR is preserved.
Technically, we have two reductions: one to transform wpMDPs into pMDPs and one to transform w\~pMDPs into \~pMDPs. The former is easier but makes use of the non-trivial polynomials labelling transitions while the latter is more involved.
It follows that algorithms and complexity upper-bounds for the NWR as studied in~\cite{NWR_paper} can be applied to the trivially parametric weighted case.

\begin{remark}
\label{rem:ordering_target_weights}
Below, we will make the assumption that the target states all have distinct weights. That is $T = \{t_0, \dots, t_n\}$ with $0=\rho(t_0)<\rho(t_1)< \dots <\rho(t_n)$. This is no loss of generality. Indeed, if $t_i$ and $t_j$ have $\rho(t_i) = \rho(t_j)$ then, because all target states are trapping, we can add (on any action $a \in A$) a transition such that $\delta(t_i,a,t_j) = 1$ and remove remove $t_i$ from $T$. This can be realised using logarithmic space, does not add new states, and it clearly preserves the NWR.
\end{remark}

\subsection{Removing weights from parametric MDPs}

Let $\mathcal{M}$ be a wpMDP. We construct a pMDP $\mathcal{N}$ equivalent to $\mathcal{M}$ as described in \cref{fig:parametric-a-and-aprime}. The idea is to add transitions from the target states to freshly added $\fail$ and $\fin$ states to preserve the ratio between the values of the original target states. That is, for all states $p,q \in T \setminus \{t_0\}$, our construction guarantees the following:
\(
    {\Reward{p}{\MDPValSigma}}/{\Reward{q}{\MDPValSigma}} = {\Reward{p}{\mathcal{N}^\sigma_\val}}/{\Reward{q}{\mathcal{N}^\sigma_\val}},
\)
for all parameter valuations $\val \in \GraphPresVals$ and all strategies $\sigma \in \Sigma^{\mathcal{M}}$.
It is then easy to see, by the definition of values, that all other pairs of vertices in $\mathcal{G}(\mathcal{M})$ --- excluding $t_0$, which preserves its exact value --- also have the ratio between their values preserved.
\begin{figure}[t]
    \centering
        
    \begin{tikzpicture}[shorten >=1pt,auto,node distance=1.9 cm, scale = 0.7, transform shape]
    
        \node[]     (inv1){};
        \node[state](0)    []{$t_0$};
        \node[state](1)    [right=1.9cm of 0]{$t_1$};
        \node[state](2)    [right= of 1]{$t_2$};
        \node[]     (dots) [right=3cm of 2]{$\dots$};
        \node[state](n_1)  [right=3cm of dots]{$t_{n-1}$};
        \node[state](n)    [right= of n_1]{$t_n$};
        \node       (X)    [draw=red, fit= (inv1) (0) (n), inner sep=0.1cm, fill=red!20, fill opacity=0.2] {};
        \node[text opacity=0.2] at (X.center) {\Huge{$\mathcal{M}$}};
        
        \node[nature](x0)[below left=1cm and 1.7cm of 1]{$a_0$};
        \node[nature](x1)[right= of x0]{$a_1$};
        \node[nature](x2)[right= of x1]{$a_2$};
        \node[]      (xdots)[right=3cm of x2]{$\dots$};
        \node[nature](xn_1)[right=3cm of xdots]{$a_{n-1}$};
        \node[nature](xn)[right= of xn_1]{$a_n$};
        \node[]      (Ap)[below= of xn, text opacity=0.2]{\Huge{$\mathcal{N}$}};
        
        \node[state](fail)[below right=3cm of x1]{$\fail$};
        \node[state](fin)[right=3cm of fail]{$\fin$};
        
        \node (Y) [draw=blue, fit= (inv1) (0) (n) (fail), inner sep=0.2cm, fill=blue!20, fill opacity=0.2, dashed] {};
        
        \path[->,color=blue,dashed] 
        (0)   edge [above] node [align=center] {} (x0)
        (1)   edge [above] node [align=center] {} (x1)
        (2)   edge [above] node [align=center] {} (x2)
        (n_1) edge [above] node [align=center] {} (xn_1)
        (n)   edge [above] node [align=center] {} (xn)
        
        (x1)   edge [above, color=blue, dashed] node [above,pos=0.2] {$\rho(t_1)z$} (fin)
        (x2)   edge [above, color=blue, dashed] node [right,pos=0.2] {$\rho(t_2)z$} (fin)
        (xn_1) edge [above, color=blue, dashed] node [right,pos=0.8] {$\rho(t_{n-1})z$} (fin)
        (xn)   edge [below, color=blue, dashed, bend left] node [align=center] {$1$} (fin)
        
        (x1)   edge [below, color=blue, dashed,bend right] node [left,pos=0.2] {$1-\rho(t_1)z$} (fail)
        (x2)   edge [above, color=blue, dashed] node [left,pos=0.8] {$1-\rho(t_2)z$} (fail)
        (xn_1) edge [above, color=blue, dashed] node [left,pos=0.2] {$1-\rho(t_{n-1})z$} (fail)
        (x0)   edge [below, color=blue, dashed,bend right] node [align=center] {1} (fail)
        ;
    \end{tikzpicture}
    \caption{The pMDP $\mathcal{N}$ constructed from $\mathcal{M}$, where $z = 1/\rho(t_n)$ and $a_i$ represents the state-action pair $(t_i,a)$. The ratio of the reward values in $\mathcal{N}$ of all the target states in $\mathcal{M}$ is the same as in $\mathcal{N}$.}
    \label{fig:parametric-a-and-aprime}
\end{figure}
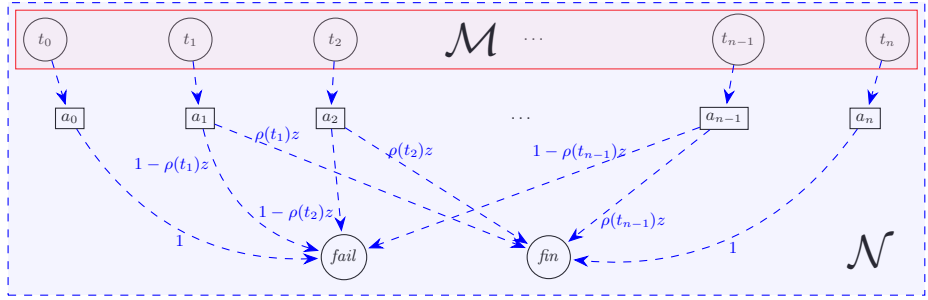

\begin{theorem}
\label{thm:w_to_b_param}
Given a wpMDP $\mathcal{M}$, let $\mathcal{G}(\mathcal{M}) = (V,E)$ and $\mathcal{N}$ be the pMDP obtained by the above construction. If $v\in V$ and $W\subseteq V$, we have $v\trianglelefteq W$ in $(\mathcal{M})$ if and only if $v\trianglelefteq W$ in $(\mathcal{N})$.
\end{theorem}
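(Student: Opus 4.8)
The plan is to show that the construction of \cref{fig:parametric-a-and-aprime} rescales the value of every vertex by the fixed positive constant $z = 1/\rho(t_n)$ while leaving the set of graph-preserving valuations unchanged. Since multiplying by a positive constant preserves every inequality, and the quantifier $\forall\,\val$ of \cref{def:W_pNWR} then ranges over the same set in both models, the equivalence of $v\trianglelefteq W$ in $\mathcal{M}$ and in $\mathcal{N}$ follows at once.

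First I would record two structural facts. (i) The transitions $\mathcal{N}$ adds carry only the rational constants $\rho(t_i)z$ and $1-\rho(t_i)z$; these lie in $[0,1]$ and form valid distributions for every $\val$ (indeed $0<\rho(t_i)z<1$ for $1\le i\le n-1$, while $\rho(t_0)z=0$ and $\rho(t_n)z=1$). As the original transitions are copied verbatim, the graph-preservation constraints of $\mathcal{N}$ are exactly those of $\mathcal{M}$, so $\GraphPresValsN=\GraphPresVals$. (ii) Each former target $t_i$ carries a single gadget action (leading to $a_i$); selecting it while keeping the choice on $S\setminus T$ embeds $\Sigma^{\mathcal{M}}$ into $\Sigma^{\mathcal{N}}$, and I write $\MDPValSigmaN$ for the chain induced by the image of $\sigma$. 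Any strategy that deviates at some $t_i$ only diverts reward into $\fail$, hence cannot beat this embedding when optimising.

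The core is the scaling identity $\RewardValN(v)=z\cdot\RewardValM(v)$ for all $v\in V$, $\sigma\in\Sigma^{\mathcal{M}}$ and $\val\in\GraphPresVals$. I would establish it for a state $s$ first. The dynamics on $S\setminus T$ coincide in $\MDPValSigma$ and $\MDPValSigmaN$, and in both chains the only way to leave $S\setminus T$ is to enter some $t_i$: in $\mathcal{M}$ the $t_i$ are trapping, whereas in $\mathcal{N}$ they funnel into the $\fin/\fail$ gadget and never return to $S\setminus T$. Hence the first-passage distribution over $\{t_0,\dots,t_n\}$ is identical, giving $\ReachProb{s}{\MDPValSigmaN}{t_i}=\ReachProb{s}{\MDPValSigma}{t_i}$ for every $i$. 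As every run from $s$ to $\fin$ in $\mathcal{N}$ passes through exactly one $t_i$, and the gadget yields $\ReachProb{t_i}{\MDPValSigmaN}{\fin}=\rho(t_i)z$, a first-passage decomposition gives
\[
\RewardValN(s)=\ReachProb{s}{\MDPValSigmaN}{\fin}=\sum_{i}\ReachProb{s}{\MDPValSigma}{t_i}\,\rho(t_i)\,z=z\cdot\RewardValM(s).
\]
The case of a former target $t_i$ is the instance $\RewardValN(t_i)=\rho(t_i)z=z\,\RewardValM(t_i)$, and for a nature vertex $(s,a)$ the identity follows by linearity of its defining sum over successors.

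Finally I would lift the identity to optimal values and conclude. As $z>0$ and $\Sigma^{\mathcal{M}}$ is finite, fact (ii) gives $\RewardOptValN(v)=\max_{\sigma\in\Sigma^{\mathcal{M}}}\RewardValN(v)=z\max_{\sigma\in\Sigma^{\mathcal{M}}}\RewardValM(v)=z\cdot\RewardOptValM(v)$ for every $v\in V$. Thus, for each fixed $\val$, we have $\RewardOptValM(v)\le\RewardOptValM(w)$ iff $\RewardOptValN(v)\le\RewardOptValN(w)$; quantifying universally over the common set $\GraphPresVals=\GraphPresValsN$ exactly matches the two instances of \cref{def:W_pNWR}, so $v\trianglelefteq W$ holds in $\mathcal{M}$ iff it holds in $\mathcal{N}$. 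The step needing the most care is the first-passage identity $\ReachProb{s}{\MDPValSigmaN}{t_i}=\ReachProb{s}{\MDPValSigma}{t_i}$: working from the run-based definition of reachability, one must check that turning the trapping targets of $\mathcal{M}$ into the one-way gadget of $\mathcal{N}$ alters neither the contributing runs up to the first target visited nor their probabilities, and that no run reaches $t_i$ after having already visited some other $t_j$.
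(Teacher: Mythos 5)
Your proposal is correct and follows essentially the same route as the paper's own proof: both establish the scaling identity $\mathrm{Rew}^*_{\mathcal{N}[\val]}(v)=z\cdot\mathrm{Rew}^*_{\mathcal{M}[\val]}(v)$ with $z=1/\rho(t_n)$ via a first-passage decomposition through $T$ (the paper invokes \cref{lem:essential_reach} for exactly the step you verify by hand) and then conclude since multiplication by a positive constant preserves all value comparisons. Your explicit checks that $\GraphPresValsN=\GraphPresVals$ and that strategies deviating at a former target into $\fail$ cannot be optimal are points the paper's proof glosses over (``single choice for states in $T$''), so they are welcome but do not constitute a different argument.
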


In the next section, we give an alternative reduction which preserves the property of being trivially parametric. That is, it does not add transitions with non-trivial probabilities.

\subsection{Removing weights from trivially parametric MDPs}

From a given w\~pMDP $\mathcal{M}$, we construct a \~pMDP $\mathcal{N}$ by adding to it $n+3$ vertices and $3n+2$ edges as depicted in \cref{fig:a-and-aprime}.
To simplify things, we make the assumption that any nature vertex $v\in S_N$ which does not have a path to $\fin$ has an edge to $t_0$. This is no loss of generality as such vertices can be detected in polynomial time (see, e.g.~\cite[Algorithm 46]{baier2008}) and this preserves their value.

\begin{figure}[t]
    \centering
    \begin{tikzpicture}[shorten >=1pt,auto,node distance=1.9 cm, scale = 0.7, transform shape]
    
    \node[](inv1){};
    \node[state](0)[]{$t_0$};
    \node[state](1)[right=1.9cm of 0]{$t_1$};
    \node[state](2)[right= of 1]{$t_2$};
    \node[](dots)[right=3cm of 2]{$\dots$};
    \node[state](n_1)[right=3cm of dots]{$t_{n-1}$};
    \node[state](n)[right= of n_1]{$t_n$};
    \node (X) [draw=red, fit= (inv1) (0) (n), inner sep=0.1cm, fill=red!20, fill opacity=0.2]{};
    \node[text opacity=0.2] at (X.center) {\Huge{$\mathcal{M}$}};
    
    \node[nature](x0)[below=1cm of 0]{$a_0$};
    \node[nature](x1)[below=1cm of 1]{$a_1$};
    \node[nature](x2)[below=1cm of 2]{$a_2$};
    \node[](xdots)[below=1cm of dots]{$\dots$};
    \node[nature](xn_1)[below=1cm of n_1]{$a_{n-1}$};
    \node[nature](xn)[below=1cm of n]{$a_n$};
    \node[](Ap)[below=0.8cm of xn, text opacity=0.2]{\Huge{$\mathcal{N}$}};

    \node[accepting,state](fin)[below=1cm of xdots]{$\fin$};
    \node[accepting,state](fail)[left=3cm of fin]{$\fail$};

    \node (Y) [draw=blue, fit= (inv1) (0) (n) (fail), inner sep=0.2cm, fill=blue!20, fill opacity=0.2, dashed] {};
    
    \path[->,color=blue,dashed] 
    (0)   edge [above] node [align=center] {} (x0)
    (1)   edge [above] node [align=center] {} (x1)
    (2)   edge [above] node [align=center] {} (x2)
    (n_1) edge [above] node [align=center] {} (xn_1)
    (n)   edge [above] node [align=center] {} (xn)
    
    (x1)   edge [above, color=blue, dashed] node [align=center] {} (fin)
    (x2)   edge [above, color=blue, dashed] node [align=center] {} (fin)
    (xn_1) edge [above, color=blue, dashed] node [align=center] {} (fin)
    (xn)   edge [above, color=blue, dashed] node [align=center] {} (fin)
    
    (x1) edge [above, color=blue, dashed] node [align=center] {} (0)
    (x2) edge [above, color=blue, dashed] node [align=center] {} (1)
    (xn) edge [above, color=blue, dashed] node [align=center] {} (n_1)
    
    (x0) edge [above, color=blue, dashed] node [align=center] {} (fail)
    ;
    \end{tikzpicture}
    \caption{The \~pMDP $\mathcal{N}$ constructed from $\mathcal{M}$ where $a_i$ represents the state-action pair $(t_i,a)$. The ordering of the reward values of the vertices is preserved but not the ratio.}
    \label{fig:a-and-aprime}
\end{figure}
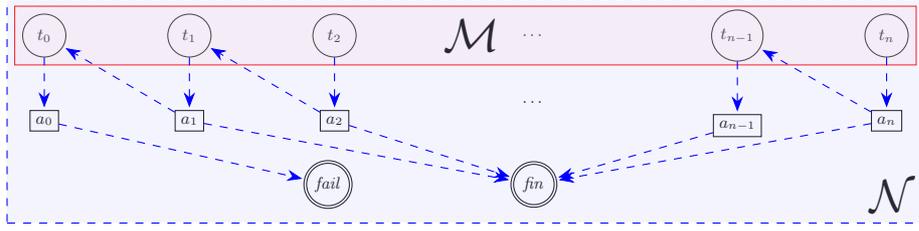

Unlike in the parametric case, the ratio of the reward values of the target vertices is not preserved but their ordering is: $\rho(t_i)<\rho(t_j)$ in $\mathcal{G}(\mathcal{M})$ if and only if
$\ReachProb{t_i}{\mathcal{N}_\mathsf{\val}^\sigma}{\fin}< \ReachProb{t_j}{\mathcal{N}_\mathsf{\val}^\sigma}{\fin}$ for all valuations $\val \in \GraphPresValsN$ and all strategies $\sigma \in \Sigma^{\mathcal{N}}$.
%
%

\begin{lemma}
\label{lem:strict-order-on-targets}
In the non-weighted \~pMDP $\mathcal{N}$, for all valuations $\val \in \GraphPresValsN$ 
and all strategies $\sigma \in \Sigma^{\mathcal{N}}$, we have $0=\ReachProb{t_0}{\mathcal{N}_\mathsf{\val}^\sigma}{\fin}<\ReachProb{t_1}{\mathcal{N}_\mathsf{\val}^\sigma}{\fin}<\dots<\ReachProb{t_n}{\mathcal{N}_\mathsf{\val}^\sigma}{\fin}$.
\end{lemma}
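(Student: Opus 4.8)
The plan is to reduce the claim to a one-dimensional linear recurrence running along the chain $t_n, t_{n-1}, \dots, t_0$ and then argue by induction. First I would fix an arbitrary valuation $\val \in \GraphPresValsN$ and strategy $\sigma \in \Sigma^{\mathcal{N}}$ and abbreviate $p_i := \ReachProb{t_i}{\mathcal{N}_\mathsf{\val}^\sigma}{\fin}$. The crucial structural observation is that in $\mathcal{N}$ each former target $t_i$ is now an ordinary state with a \emph{single} outgoing action $a$ leading to the nature vertex $a_i=(t_i,a)$, so $\sigma$ plays no role at these states; moreover $a_0$ deterministically reaches $\fail$, while for $i \geq 1$ the vertex $a_i$ has exactly two successors, $\fin$ and $t_{i-1}$. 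Writing $y_i := \delta(t_i,a,\fin)[\val]$ for the induced probability of the edge $a_i \to \fin$, graph-preservation forces both $y_i$ and $1-y_i = \delta(t_i,a,t_{i-1})[\val]$ to be strictly positive (the syntactic-nonzero condition together with the distribution condition of a graph-preserving valuation), i.e. $0 < y_i < 1$. Hence the reachability probabilities satisfy $p_0 = 0$ and, for $i \geq 1$,
\[
  p_i = y_i + (1-y_i)\,p_{i-1}.
\]

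Next I would prove, by induction on $i$, the strengthened invariant ``$p_{i-1} < p_i$ and $p_i < 1$''. The base case is $p_0 = 0 < 1$. For the step, assuming $p_{i-1} < 1$, the two identities
\[
  p_i - p_{i-1} = y_i\,(1 - p_{i-1}) \qquad\text{and}\qquad 1 - p_i = (1-y_i)(1 - p_{i-1})
\]
follow at once from the recurrence; since $0 < y_i < 1$ and $1 - p_{i-1} > 0$, both right-hand sides are strictly positive, giving $p_{i-1} < p_i < 1$. Chaining these strict inequalities over $i = 1, \dots, n$ yields $0 = p_0 < p_1 < \dots < p_n$, which is exactly the statement; the auxiliary bound $p_i < 1$ is needed only internally, to keep the next convex combination strict.

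I expect the only genuinely delicate point to be the justification that $0 < y_i < 1$ \emph{strictly}: this is precisely where the graph-preserving hypothesis is indispensable, since the requirement that non-syntactically-zero polynomials evaluate to nonzero probabilities guarantees that both edges out of $a_i$ carry strictly positive probability, which in turn makes each $p_i$ a strict convex combination of $1$ and $p_{i-1}$. Everything else is routine: the independence from $\sigma$ comes for free from the forced single action at each $t_i$, and the two displayed identities are elementary algebra. The one piece of bookkeeping to be careful about is carrying the invariant $p_i < 1$ through the induction, as the strictness of $p_{i-1} < p_i$ would degenerate if ever $p_{i-1} = 1$.
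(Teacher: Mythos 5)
Your proof is correct and takes essentially the same route as the paper's: both rest on the observations that each $t_i$ has only the single action $a$ (so strategies are irrelevant there), that graph-preservation forces $0 < y_i < 1$ on the two edges out of $(t_i,a)$, and that consequently $p_i$ is a \emph{strict} convex combination of $1$ and $p_{i-1}$ anchored at $p_0 = 0$. The only cosmetic difference is that you obtain strictness by a forward induction carrying the invariant $p_i < 1$ through the identities $p_i - p_{i-1} = y_i(1-p_{i-1})$ and $1 - p_i = (1-y_i)(1-p_{i-1})$, whereas the paper argues by contradiction --- supposing some $(t_i,a)$ has reachability value $1$ and propagating value $1$ down the chain until it contradicts $p_0 = 0$ --- which is the same fact, differently packaged.
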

We show that this property, although it may not seem intuitive, is sufficient to preserve all the never-worse relations.

\begin{theorem}
\label{thm:w_to_b_trivparam}
Given a w\~pMDP $\mathcal{M}$, let $\mathcal{G}(\mathcal{M}) = (V,E)$ and $\mathcal{N}$ be the \~pMDP obtained by the above construction. If $v\in V$ and $W\subseteq V$, we have $v\trianglelefteq W$ in $(\mathcal{M})$ if and only if $v\trianglelefteq W$ in $(\mathcal{N})$.
\end{theorem}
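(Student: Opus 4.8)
\emph{Setup.} The plan is to first exploit two structural features of the gadget added in \cref{fig:a-and-aprime}: it is \emph{choice-free} (each old target $t_i$ enables a single action) and \emph{absorbing} (once a run enters $\{t_0,\dots,t_n,a_0,\dots,a_n,\fin,\fail\}$ it never returns to an original non-target state). Hence strategies of $\mathcal{M}$ and $\mathcal{N}$ are in natural bijection, and for every original vertex $u$, every valuation $\val$ (restricted to the parameters of $\mathcal{M}$) and every strategy $\sigma$, the reachability profile $\vec P^{u,\sigma}=(\mathbb{P}^{u}[\Diamond t_0],\dots,\mathbb{P}^{u}[\Diamond t_n])$ over the old targets is computed entirely inside $\mathcal{M}$ and is therefore the same in both models. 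Setting $r_i(\val)=\ReachProb{t_i}{N}{\fin}$, I would record the two value identities $\RewardValM(u)=\langle \vec P^{u,\sigma},\vec\rho\rangle$ and $\ReachProb{u}{N}{\fin}=\langle \vec P^{u,\sigma},\vec r(\val)\rangle$, where $\vec\rho=(\rho(t_0),\dots,\rho(t_n))$. So the \emph{only} difference between the two models is that the fixed weight vector $\vec\rho$ is replaced by the valuation-dependent vector $\vec r(\val)$; and by \cref{lem:strict-order-on-targets}, together with the short computation $r_i=1-\prod_{j\le i}(1-x_j)$, the achievable $\vec r(\val)$ are exactly the strictly increasing sequences $0=r_0<r_1<\dots<r_n<1$.

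\emph{Reformulation.} These observations reduce \cref{thm:w_to_b_trivparam} to a single claim about $\mathcal{M}$: comparing optimal values under the fixed increasing weights $\vec\rho$ yields the same never-worse relation as comparing them under \emph{every} strictly-increasing-from-$0$ weighting $\vec r$. Concretely, I want $v\trianglelefteq W$ with weights $\vec\rho$ (quantified over all graph-preserving $\val$) to be equivalent to: for all graph-preserving $\val$ and all such $\vec r$, some $w\in W$ satisfies $\max_\sigma\langle \vec P^{v,\sigma},\vec r\rangle\le \max_\sigma\langle \vec P^{w,\sigma},\vec r\rangle$.

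\emph{The easy direction.} For $\mathcal{N}\Rightarrow\mathcal{M}$ I would simply specialise: given any graph-preserving $\val$ of $\mathcal{M}$, extend it to $\mathcal{N}$ by choosing the gadget parameters so that $\vec r=\vec\rho/c$ for a constant $c>\rho(t_n)$ (a valid strictly increasing sub-$1$ vector). Every vertex value is then scaled by the positive constant $1/c$, so the maximising strategies and all $\le$-comparisons are preserved and the witness $w$ supplied by the hypothesis in $\mathcal{N}$ works verbatim in $\mathcal{M}$.

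\emph{The hard direction and the main obstacle.} The crux is $\mathcal{M}\Rightarrow\mathcal{N}$: assuming the relation for the single weighting $\vec\rho$, derive it for every increasing $\vec r$. The difficulty is that the optimal strategy genuinely depends on the weight vector, so one cannot transport a witness by freezing the valuation. My plan is to prove that the never-worse relation is \emph{invariant} as the weights range over the connected family of strictly-increasing-from-$0$ vectors. Using Abel summation I would rewrite each value as a positive combination $\langle\vec P^{u,\sigma},\vec r\rangle=\sum_{i\ge 1}(r_i-r_{i-1})\,\mathbb{P}^{u}[\Diamond U_i]$ of reachability probabilities to the nested, weight-independent upper sets $U_i=\{t_i,\dots,t_n\}$. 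The key leverage is that, since $\mathcal{M}$ is trivially parametric, every transition carries its \emph{own} variable, so the universal quantifier over graph-preserving valuations can perturb each branching independently; I expect this independence to rule out any order flip, since the only way the relation could change as $\vec r$ varies is through a tie at some valuation that would break in $v$'s favour for a nearby weighting, and such a tie can be perturbed away by an arbitrarily small change of valuation, contradicting $v\trianglelefteq W$ under $\vec\rho$. Turning this continuity-and-perturbation intuition into a rigorous argument — in particular handling the interplay between the maximisation over strategies and the boundary of the graph-preserving region — is the step I expect to demand the most care.
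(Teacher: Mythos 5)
Your setup, the identity $r_i=1-\prod_{j\le i}(1-x_j)$ characterising the achievable gadget values, the reformulation of \cref{thm:w_to_b_trivparam} as constancy of the NWR over the family of strictly-increasing-from-$0$ weight vectors, and your easy direction are all sound; indeed the paper's ``if'' direction is exactly your argument, extending a witness valuation of $\mathcal{M}$ to $\mathcal{N}$ by choosing gadget probabilities so that the value of each $t_i$ becomes $\varepsilon\cdot\rho(t_i)$, i.e.\ your $\vec r=\vec\rho/c$ up to notation.

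The hard direction, however, contains a genuine gap, and it sits precisely where you flag uncertainty. The assertion that the relation ``could change as $\vec r$ varies only through a tie'' and that ``such a tie can be perturbed away by an arbitrarily small change of valuation'' is the entire crux, and the surrounding topological scaffolding does not close it: the set of weight vectors admitting a witness valuation (one making $v$ strictly beat all of $W$) is open, but your connectedness argument also needs it to be closed within the family, and closedness fails under naive limits --- as $\vec r$ approaches a putative flip point, the witness valuations typically degenerate toward the boundary of $\GraphPresVals$ (transition probabilities tending to $0$), the limit is not graph preserving, and only non-strict inequalities survive. Nor does trivial parametricity rescue the perturbation step: each transition carries its own variable, but the values of $v$ and the $w\in W$ are coupled through shared descendants, so you cannot independently nudge $\RewardOptValM(v)$ upward; some ties are structurally unbreakable (exactly the case $v\sim w$), and deciding which ties break, and in whose favour, is a global question. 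The paper's proof supplies exactly this missing content constructively: from a witness valuation in $\mathcal{N}$ it partitions the vertices into layers by value, establishes six structural properties of the partition (e.g.\ a nature vertex has an edge to a higher layer iff it has one to a lower layer, and every non-target state keeps a successor in its own layer), and then hand-builds a valuation for $\mathcal{M}$ --- probability $1-(|uE|-1)\varepsilon$ along shortest in-layer paths to targets inside the top layer, and toward lower layers elsewhere --- proving the explicit bounds of \cref{lem:lower-bound-on-top-layer,lem:upper-bound-on-lower-layers}, namely $\RewardOptValM(u)\ge(1-m\varepsilon)^{\ell_u}\rho(t_{p+1})$ for top-layer vertices and $\RewardOptValM(u)\le\rho(t_p)+jm\varepsilon\,\rho(t_n)$ for layer-$j$ vertices, which separate $v$ from $W$ under the original weights once $\varepsilon$ is small, using $\rho(t_p)<\rho(t_{p+1})$. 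Some transport of witnesses of this constructive kind is needed; your continuity-and-perturbation plan, as stated, would not go through.
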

\begin{proof}[Sketch]
We first prove the contrapositive of the ``if'' direction: Suppose $v\not\trianglelefteq W$ in the w\~pMDP $\mathcal{M}$, then there is a valuation $\val \in \GraphPresVals$ for which $\RewardOptValM(v)>\RewardOptValM(w)$ for all $w\in W$. We can now construct a valuation $\val' \in \GraphPresValsN$ so that $\Reward{*}{\mathcal{N}[\val']}(v) > \Reward{*}{\mathcal{N}[\val']}(w)$ for all $w\in W$. The existence of such a $\val'$ will show that $v\not\trianglelefteq W$ in $\mathcal{N}$, thus concluding the proof in this direction. 
More concretely, $\val'$ is obtained from $\val$
by adding probabilities to the dashed edges in $\mathcal{G}(\mathcal{N})$
so that the ratio of the reward values of states in $T$ in $\mathcal{N}$ is
the same as the ratio of their weights in \(\mathcal{M}\).

Second, we prove the contrapositive of the ``only if'' direction: Suppose $v\not\trianglelefteq W$ in the non-weighted \~pMDP $\mathcal{N}$. This means that there exists a valuation $\val \in \GraphPresVals$
such that $\RewardOptValN(v)>\RewardOptValN(w)$ for all $w\in W$. We can then use the reward values of vertices in 
$\mathcal{G}(\mathcal{N})$ (for this fixed $\val$)
to partition the vertices so that 
all vertices in a same partition have the same reward value.
This partition reveals some nice properties about the structure of $\mathcal{G}(\mathcal{N})$, and hence $\mathcal{G}(\mathcal{M})$. We use this to construct a new valuation $\val' \in \GraphPresVals$ for the original w\~pMDP $\mathcal{M}$ and show that $\Reward{*}{\mathcal{M}[\val']}(v)>\Reward{*}{\mathcal{M}[\val']}(w)$ for all $w\in W$.\qed
\end{proof}

\section{The Complexity of Deciding the NWR}
It has been shown in \cite{NWR_paper} that deciding the NWR for trivially
parametric MDPs\footnote{Technically, Le Roux and P\'erez show that the
problem is hard for \emph{target arenas}. In appendix, we give
reductions between target arenas and \~pMDPs.} is \textCoNP-complete and remains \textCoNP-hard even if the set of
actions is a singleton (i.e. the MDP is essentially a Markov chain) and even
when comparing singletons only (i.e. $v \trianglelefteq w$).
We have shown, in the previous section, that any w\~pMDP can be reduced (in
polynomial time) to a \~pMDP while preserving all the never-worse relations.
Observe that the reduction does not require adding new actions.
These imply \textCoNP-completeness for deciding the NWR in w\~pMDPs.
\begin{theorem}
    Given a w\~pMDP $\mathcal{M}$ with $\mathcal{G}(\mathcal{M}) = (V,E)$ and
    $v \in V, W \subseteq V$, determining whether $v \trianglelefteq W$ is
    \coNP-complete. Moreover, the problem is \coNP-hard even if both $W$ and the set $A$ of
    actions from $\mathcal{M}$ are singletons.
\end{theorem}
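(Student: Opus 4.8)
The plan is to prove the two directions of \coNP-completeness separately, relying entirely on the weight-removal reduction of \cref{thm:w_to_b_trivparam} and on the known results for \~pMDPs from \cite{NWR_paper}. For membership in \coNP, I would start from the given w\~pMDP $\mathcal{M}$ and apply the construction depicted in \cref{fig:a-and-aprime} to obtain, in polynomial time, a \~pMDP $\mathcal{N}$. By \cref{thm:w_to_b_trivparam}, for every $v \in V$ and $W \subseteq V$ we have $v \trianglelefteq W$ in $\mathcal{M}$ if and only if $v \trianglelefteq W$ in $\mathcal{N}$. Since \cite{NWR_paper} places the NWR decision problem for \~pMDPs in \coNP, composing the polynomial-time reduction with this \coNP procedure yields a \coNP algorithm for w\~pMDPs. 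The only point worth checking is that the reduction is genuinely polynomial: the construction adds merely $n+3$ vertices and $3n+2$ edges for $n = |T|$, and --- crucially --- in the trivially parametric case it preserves only the \emph{ordering} of the target weights, not their ratios, so its size does not depend on the magnitudes of the weights in $\rho$. Hence $\mathcal{N}$ has size polynomial in that of $\mathcal{M}$.

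For \coNP-hardness, the key observation is that a \~pMDP is literally a special case of a w\~pMDP: it is exactly the non-weighted trivially parametric MDP with $T = \Set{\fin, \fail}$, $\rho(\fin) = 1$, and $\rho(\fail) = 0$. Thus the identity map is a trivial polynomial-time reduction from the NWR problem on \~pMDPs to the NWR problem on w\~pMDPs. Since \cite{NWR_paper} already shows the former to be \coNP-hard --- and in fact \coNP-hard even when the action set $A$ is a singleton and when $W$ is a singleton --- the same lower bound, with the same restrictions, transfers immediately to w\~pMDPs, which also establishes the ``moreover'' clause. Viewing a \~pMDP as a w\~pMDP changes neither the action set nor the vertices being compared, so the singleton-$A$ and singleton-$W$ conditions are not disturbed.

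Combining the two directions gives \coNP-completeness. I expect no real obstacle in this argument: the difficult content is already packaged into the weight-removal reduction of the previous section, so the remaining work is the bookkeeping of verifying that that reduction runs in polynomial time (in particular, that its size is independent of the weight magnitudes) and of observing the special-case inclusion that yields hardness. If any subtlety arises, it will be solely in confirming the polynomiality claim, and this is immediate from the vertex and edge counts above together with the fact that the reduction introduces no new actions.
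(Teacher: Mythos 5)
Your proposal is correct and is essentially the paper's own argument: \coNP{} membership follows by composing the polynomial-time weight-removal reduction of \cref{thm:w_to_b_trivparam} with the \coNP{} procedure for \~pMDPs from \cite{NWR_paper}, and \coNP-hardness follows because a \~pMDP is exactly a w\~pMDP with $\rho(\fin)=1$ and $\rho(\fail)=0$, so the hardness results of \cite{NWR_paper} with singleton $W$ and singleton action set transfer verbatim. Your polynomiality check (the construction adds only $n+3$ vertices and $3n+2$ edges, independently of the weight magnitudes) is the same bookkeeping the paper relies on, where it additionally observes that the reduction needs no new actions.
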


Regarding general wpMDPs, known results~\cite{monotonic,junges2021} imply that deciding the NWR is \textCoETR-hard. The \emph{existential theory of the reals} (ETR) is the set of all true sentences of the form:
\begin{equation}\label{eqn:etr}
    \exists x_1 \dots \exists x_n \varphi(x_1,\dots,x_n)
\end{equation}
where $\varphi$ is a quantifier-free (first-order) formula with inequalities
and equalities as predicates and real polynomials as terms.  The complexity
class \textETR{}~\cite{DBLP:conf/gd/Schaefer09} is the set of problems that
reduce in polynomial time to determining the truth value of a sentence like in
\eqref{eqn:etr} and \textCoETR{} is the set of problems whose complement is in
\textETR. It is known that $\NP \subseteq \ETR \subseteq \PSPACE$.

For completeness, we provide a self-contained proof of the NWR problem being
\textCoETR-hard even for pMDPs. To do so, we reduce from the
bounded-conjunction-of-inequalities problem, \textsc{BCon4Ineq} for short. It
asks, given polynomials $f_1,\dots,f_m$ of degree 4, whether there is some
valuation $\val:X\rightarrow (0,1)$ such that $\bigwedge_{i=0}^mf_i[\val]<0$?
This problem is \textETR-hard \cite[Lemma 5]{junges2021}.

\begin{theorem} \label{thm:pNWR_coETR_complete}
  Given a pMDP $\mathcal{M}$ with $\mathcal{G}(\mathcal{M}) = (V,E)$ and $v \in
  V,W\subseteq V$, determining whether $v \trianglelefteq W$ is
  \textCoETR-complete. Moreover, the problem is \textCoETR-hard even if both $W$ and the set $A$
  of actions from $\mathcal{M}$ are singletons.
\end{theorem}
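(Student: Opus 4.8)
The plan is to establish $\co\ETR$-completeness in two parts: membership and hardness.

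\paragraph{Membership in \textCoETR.}
First I would argue that the complement of the NWR problem lies in \ETR. By \cref{def:W_pNWR}, $v \not\trianglelefteq W$ means there exists a graph-preserving valuation $\val$ such that $\RewardOptVal(v) > \RewardOptVal(w)$ for every $w \in W$. The plan is to encode this as an existential first-order sentence over the reals. The variables are the parameters $X$ together with, for each vertex $u \in V$, a real variable $r_u$ standing for its optimal reward value $\RewardOptVal(u)$. The graph-preserving constraints from the definition (non-negativity, outgoing probabilities summing to $1$, and the non-degeneracy condition $\delta(s,a,s')\not\equiv 0 \implies \delta(s,a,s')[\val]\neq 0$) are directly polynomial (in)equalities in $X$. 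The optimality constraints are the key point: for each state $s$, the value $r_s$ equals $\max_a \sum_{s'}\delta(s,a,s')[\val]\cdot r_{s'}$. I would express this with a Bellman-style conjunction stating that $r_s$ is at least each action's value and equal to at least one of them, i.e. $\bigwedge_a (r_s \geq \sum_{s'}\delta(s,a,s')[\val]\,r_{s'})$ together with $\bigvee_a (r_s = \sum_{s'}\delta(s,a,s')[\val]\,r_{s'})$; target states are pinned to their weights. All of these are polynomial predicates, and the final sentence additionally asserts $r_v > r_w$ for all $w \in W$. Since the existence of an optimal memoryless deterministic strategy (noted in the remark after \cref{def:W_pNWR}) guarantees this system is satisfiable exactly when the witnessing valuation exists, this places $\neg(v\trianglelefteq W)$ in \ETR, hence $v\trianglelefteq W$ in \textCoETR.

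\paragraph{Hardness.}
For the lower bound I would reduce from \textsc{BCon4Ineq}, which the excerpt states is \ETR-hard. Given degree-$4$ polynomials $f_1,\dots,f_m$ on variables $X$, the goal is to build a pMDP in which $v\not\trianglelefteq W$ holds precisely when some $\val:X\to(0,1)$ satisfies $\bigwedge_i f_i[\val]<0$; this makes the NWR problem \textCoETR-hard. The construction should use the parameters of the pMDP to carry the $x_j\in X$, with gadgets forcing each $x_j$ into $(0,1)$ via the graph-preserving constraints (a transition labelled $x_j$ paired with one labelled $1-x_j$ keeps $x_j$ strictly between $0$ and $1$). The central task is to realise each strict inequality $f_i<0$ as a comparison of reachability probabilities between two states. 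Since pMDP values are reachability probabilities of $\fin$ (by the remark in \cref{sec:graph-of-wpmdp}), I would design a gadget whose $\fin$-reaching probability is an affine-in-$r$ expression encoding $f_i[\val]$, exploiting that products of parameters along paths give monomials so that a degree-$4$ polynomial can be assembled from paths of length up to four; normalisation to keep all branch probabilities in $[0,1]$ can be handled by scaling with auxiliary fresh parameters constrained like the $x_j$. To obtain the \emph{conjunction} of the $m$ inequalities with a singleton $W=\{w\}$, I would route an action at $v$ uniformly (or via a fixed distribution) into the $m$ gadget entry points, so that $v$'s value beats $w$'s fixed value at some valuation iff all $f_i$ are simultaneously negative there; keeping the action set a singleton is achieved by letting nature, rather than a strategic choice, fan out to the gadgets.

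\paragraph{Main obstacle.}
The hardest part will be the hardness gadget: encoding a strict degree-$4$ inequality $f_i<0$ as a difference of reachability probabilities while simultaneously (i) respecting the graph-preserving requirement that every branch probability be a genuine distribution summing to $1$ and strictly positive where the edge is present, and (ii) aggregating all $m$ inequalities into a single comparison against a singleton $w$ using only a singleton action set. Handling the strictness of $<$ versus the non-strict $\leq$ in the NWR definition, and ensuring the sign and scaling of the encoded polynomial line up so that the comparison is tight exactly at the boundary $f_i=0$, will require careful bookkeeping of coefficients and fresh normalising parameters.
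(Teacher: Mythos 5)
Your membership argument is essentially the paper's: the paper also introduces a value variable per vertex, encodes the Bellman conditions as $\bigwedge_{u\in vE} y_v \geq y_u$ together with $\bigvee_{u\in vE} y_v = y_u$, conjoins the graph-preservation constraints and $y_v > y_w$ for all $w\in W$. But your sketch omits one constraint the paper needs for soundness: the paper explicitly partitions $V$ into $T$, $Z$ (vertices with no path to $T$), states and nature vertices, and pins $y_v = 0$ for all $v \in Z$. Without that clause the system is not uniquely solvable --- a cycle disconnected from the targets satisfies its own equations with \emph{any} constant value --- so the existential formula could witness $r_v > r_w$ using spurious inflated solutions even when $v \trianglelefteq W$ actually holds. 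Your justification that ``existence of an optimal memoryless deterministic strategy guarantees this system is satisfiable exactly when the witnessing valuation exists'' addresses the wrong direction: satisfiability is not the issue; what the encoding needs (and what the paper invokes its Bellman-equations lemma for) is that every solution of the constraint system coincides with the true value vector.

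The hardness half has a genuine gap, and it is exactly at the step you flag as the ``main obstacle.'' Your plan routes a single nature fan-out from $v$ into the $m$ gadgets, so $\RewardOptVal(v)$ becomes a fixed convex combination of the gadget values; then ``$v$ beats $w$ at some valuation'' is a single inequality about an \emph{average}, which cannot express the conjunction $\bigwedge_{i} f_i[\val] < 0$ --- the average can clear the threshold while some individual $f_i[\val] \geq 0$. (Taking $W = \{w_1,\dots,w_m\}$ with one comparison per gadget would recover the conjunction, since $v \not\trianglelefteq W$ demands beating \emph{all} of $W$ simultaneously, but that abandons the singleton-$W$ claim.) The paper avoids gadgets entirely via an idea your proposal misses: because the NWR universally quantifies over $\GraphPresVals$, the relation $v \trianglelefteq W$ holds \emph{vacuously} whenever $\GraphPresVals = \emptyset$. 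So it reduces \textsc{BCon4Ineq} to the non-emptiness of $\GraphPresVals$: after rescaling $f_i' = f_i/k$ so that $f_i'[\val] \in (-1,1)$ on $(0,1)^n$, it labels transitions out of a nature vertex $n_i$ directly with the polynomials $-f_i'$ (to $\fail$) and $1+f_i'$ (to $\fin$), and out of $v_j$ with $x_j$ and $1-x_j$. The graph-preservation constraints themselves --- conjunctive by definition, one per transition --- then enforce $f_i'[\val] < 0$ for all $i$ and $x_j \in (0,1)$ simultaneously, and $\fin \not\trianglelefteq \fail$ iff some graph-preserving valuation exists. This also makes your path-product machinery for assembling degree-$4$ monomials unnecessary (pMDP transitions already carry arbitrary polynomials of $\mathbb{Q}[X]$) and disposes of the strictness, normalisation, and singleton-$W$/singleton-$A$ issues in one stroke.
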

\begin{proof}[Sketch]
To show \textCoETR-hardness, we first reduce \textsc{BCon4Ineq} to the problem
of deciding whether there exists a graph-preserving valuation for a given
pMDP $\mathcal{M}$. Let
$f_1,\cdots,f_m$ be the polynomials. Let $k>k_i$ where $k_i$ is the sum of the
absolute values of the coefficients of the polynomial $f_i$. We define new
polynomials $f_1',\dots,f_m'$ where $f_i'=\frac{f_i}{k}$. Note that for any
$\val:X\rightarrow (0,1)$, we have $\bigwedge_{i=0}^mf_i[\val]<0$ if and only
if $\bigwedge_{i=0}^mf'_i[\val]<0$ and we have $-1<f'_i[\val]<1$ for any such
$\val$.
\begin{figure}[t]
    \centering
        \begin{tikzpicture}[shorten >=1pt,auto,node distance=3 cm, scale = 0.7, transform shape]
            \tikzstyle{protagonist} = [shape=circle, draw]
            \tikzstyle{nature} = [shape=rectangle, draw]
                
            \node[nature](n1){$n_1$};
            \node[nature](n2)[right=of n1]{$n_2$};
            \node[nature](n3)[right=of n2]{$n_3$};
            \node[](dots)[right=of n3] {$\cdots$};
            \node[nature](nm)[right=of dots]{$n_m$};
            \node[nature](x1)[below=5cm of n1]{$v_1$};
            \node[nature](x2)[below=5cm of n2]{$v_2$};
            \node[nature](x3)[below=5cm of n3]{$v_3$};
            \node[](duts)[right=of x3]{$\cdots$};
            \node[nature](xn)[below=5cm of nm]{$v_n$};
            \node[state, accepting](fail)[below=2cm of n2]{$\fail$};
            \node[state, accepting](fin)[below=2cm of dots]{$\fin$};
            
            \path[->] (n1) edge [left,bend right] node [align=center] {$-f_1'$} (fail)
            (n2) edge [left,bend right] node [align=center] {$-f_2'$} (fail)
            (n3) edge [above] node [left,pos=0.7] {$-f_3'$} (fail)
            (nm) edge [above] node [above,pos=0.2] {$-f_m'$} (fail)
            (n1) edge [above] node [above,pos=0.2] {$1+f_1'$} (fin)
            (n2) edge [above] node [above,pos=0.2] {$1+f_2'$} (fin)
            (n3) edge [above] node [right,pos=0.2] {$1+f_3'$} (fin)
            (nm) edge [right,bend left] node [align=center] {$1+f_m'$} (fin)
            
            (x1) edge [left,bend left] node [align=center] {$x_1$} (fail)
            (x2) edge [left,bend left] node [align=center] {$x_2$} (fail)
            (x3) edge [above] node [left,pos=0.7] {$x_3$} (fail)
            (xn) edge [above] node [above,pos=0.2] {$x_n$} (fail)
            (x1) edge [above] node [below,pos=0.2] {$1-x_1$} (fin)
            (x2) edge [above] node [below,pos=0.2] {$1-x_2$} (fin)
            (x3) edge [above] node [right,pos=0.2] {$1-x_3$} (fin)
            (xn) edge [right,bend right] node [align=center] {$1-x_n$} (fin)
            ;
        \end{tikzpicture}
    \caption{The pMDP used to show \textCoETR-hardness of determining whether
    there exists a graph-preserving valuation for a pMDP.}
    \label{fig:coETR_pTA}
\end{figure}
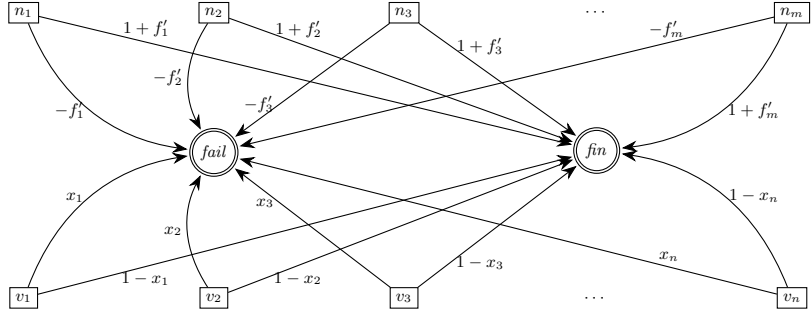
We now define a pMDP including the states and distributions shown in
\cref{fig:coETR_pTA}. All the edges from the set $\Set{ n_1, \dots, n_m }$ to
$\Set{ \fail, \fin }$ ensure that $0 < -f_i < 1$ for all $1 \leq i \leq m$
and the edges from the set $\Set{ v_1, \dots, v_n }$ to $\Set{ \fail, \fin }$
ensure that the variables can only take values from the open set $(0,1)$.
Thus, this pMDP has a graph-preserving valuation if and only if there is some
$\val : X \rightarrow (0,1)$ for which $-1 < f_i' < 0$. To conclude, we
note that $\fin \not\trianglelefteq \fail$ if and only if there is a
graph-preserving valuation for the pMDP we constructed. Observe that to obtain
the full pMDP we can add states $\Set{ s_1, \dots, s_m }$ and edges from $s_i$
to $n_i$, for all $i$. Hence, the set of actions $A$ can be a singleton.

To show that deciding the NWR is in \textCoETR, we simply encode the negation
of the NWR into a formula as in \eqref{eqn:etr} that is in ETR if and only if
the negation of the NWR holds. The encoding is quite natural and can be seen
as a ``symbolic'' version of the classical linear programs used to encode MDP
values.\qed
\end{proof}

We conclude this section with a discussion on the NWR equivalence relation.
\subsection{The complexity of deciding NWR equivalences}
Unfortunately, deciding NWR equivalences is just as hard as deciding the NWR
in general. To prove this, we give a small gadget which ensures that $v \sim
w$ if and only if some NWR holds.
\begin{theorem}\label{thm:equivalence_complexity}
  Given a pMDP $\mathcal{M}$ with $\mathcal{G}(\mathcal{M}) = (V,E)$ and $v,w
  \in V$, determining whether $v \sim w$ is \textCoETR-hard. If $\mathcal{M}$
  is a \~pMDP then the problem is \textCoNP-hard in general and in \P{} if its
  set of actions $A$ is a singleton.
\end{theorem}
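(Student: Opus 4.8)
The plan is to handle the three assertions separately: the two hardness claims follow from a single gadget that reduces the NWR decision problem to the equivalence problem, while membership in \P{} for singleton-action \~pMDPs follows by viewing them as trivially parametric Markov chains and invoking a partition-refinement algorithm.

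For the hardness results I would reduce from deciding $v \trianglelefteq w$, which \cref{thm:pNWR_coETR_complete} shows is \textCoETR-hard for pMDPs, and which is \textCoNP-hard for \~pMDPs. Since both hardness proofs already apply when $v$ and $w$ are states (the pMDP instance compares $\fin$ with $\fail$), I may assume $v,w \in S$. Given such an instance, construct $\mathcal{M}'$ by adding one fresh state $m$ with two actions $a$ and $b$, where $a$ leads with probability $1$ to $v$ and $b$ leads with probability $1$ to $w$; in the \~pMDP case these are single-successor transitions whose unique variables are forced to $1$ by any graph-preserving valuation. Then $\RewardOptVal(m) = \max\{\RewardOptVal(v),\RewardOptVal(w)\}$ for every valuation, so $w \trianglelefteq m$ always holds, and hence $m \sim w$ iff $m \trianglelefteq w$ iff $\RewardOptVal(v) \le \RewardOptVal(w)$ for all graph-preserving valuations iff $v \trianglelefteq w$. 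The map $(v,w) \mapsto (m,w)$ is computable in logarithmic space, so it transfers \textCoETR-hardness to the equivalence problem for pMDPs and \textCoNP-hardness for \~pMDPs. Note the gadget introduces a state with two actions, which is precisely why it yields hardness only ``in general'' and not for a singleton action set.

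For membership in \P{} when $A$ is a singleton, observe that such a \~pMDP is exactly a trivially parametric Markov chain, so $v \sim w$ holds iff $\RewardOptVal(v) = \RewardOptVal(w)$ for every $\val \in \GraphPresVals$, i.e.\ iff $v$ and $w$ reach $\fin$ with the same probability under every valuation. I would compute the equivalence classes as follows. First, by plain graph reachability, identify the class of states that cannot reach $\fin$ (value identically $0$) and the class of states that cannot reach the former (value identically $1$); these are two classes and are found in polynomial time. The nontrivial point is the ``middle'' states: because every edge carries its own distinct variable, one checks that two states with successors of differing value can never share a value function, so the only further equivalences arise by \emph{funnelling}---a state all of whose successors already lie in a single class joins that class (e.g.\ a state with a unique successor is equivalent to it). Iterating this collapse to a fixpoint, seeded by the value-$0$ and value-$1$ classes, yields all the classes in polynomial time; this is exactly our polynomial-time algorithm for the NWR equivalence classes of Markov chains, which I would invoke here.

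The main obstacle is this last part. The hardness reductions are immediate once the $\max$-gadget is in place, but deciding $v \sim w$ for middle states cannot be done by symbolic computation of the value functions (which are rational functions of potentially exponential description) nor by evaluating at a single valuation, since agreement at one point does not certify identity of the functions. The crux is therefore to prove that the funnelling fixpoint above is both sound and complete for $\sim$ on Markov chains, which is what makes the polynomial-time bound possible.
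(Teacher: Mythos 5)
Your hardness gadget is essentially the paper's own reduction: the paper adds two fresh states $p,q$ and actions $b,c$ so that $q \sim v$ always and $p \sim q$ iff $u \trianglelefteq v$, while you add a single state $m$ with $\RewardOptVal(m)=\max\{\RewardOptVal(v),\RewardOptVal(w)\}$ and compare $m$ with $w$ directly; both are correct and the difference is cosmetic, so that part stands.

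The gap is in the \P{} claim for singleton-action \~pMDPs. Your funnelling fixpoint is sound but \emph{not} complete, so the algorithm you invoke does not exist as described. Concretely, take middle states $u_1,u_2,z$ with edges $u_1\to u_2$, $u_1\to z$, $u_2\to u_1$, $u_2\to z$, and $z\to\fin$, $z\to\fail$, each edge carrying its own fresh parameter. For every graph-preserving valuation, writing $p,q\in(0,1)$ for the probabilities of $u_1\to u_2$ and $u_2\to u_1$, the Bellman equations give $x_{u_1}-x_z=p\,(x_{u_2}-x_z)$ and $x_{u_2}-x_z=q\,(x_{u_1}-x_z)$, hence $(1-pq)(x_{u_1}-x_z)=0$ and so $x_{u_1}=x_{u_2}=x_z$ identically; thus $u_1\sim u_2\sim z$. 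Yet neither $u_1$ nor $u_2$ ever has all successors inside one current class (each sees two distinct classes at every stage), so your fixpoint terminates immediately and wrongly reports all three states pairwise inequivalent. Local ``all successors in a single class'' merging cannot bootstrap classes that are only closed as a \emph{set} funnelling through a single exit; your auxiliary claim that states with successors of differing value cannot be equivalent does not help here, since in the counterexample all the successors in question share the same value function.

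The paper's route repairs exactly this: it proves that every nontrivial equivalence class of a trivially parametric Markov chain has a \emph{unique exit}, all of whose class members reach it almost surely, and hence that $u \sim w$ iff there is a unique $z$ with $\ReachProb{u}{C}{z}=\ReachProb{w}{C}{z}=1$ --- a global almost-sure-reachability condition, checkable in polynomial time by the standard value-$1$ algorithm, which strictly subsumes your rule (all-successors-in-$C$ is just one-step almost-sure reachability). The real content you are missing is the completeness direction, i.e.\ the unique-exit theorem: the paper establishes it by layering states according to their values under a fixed valuation, showing that two exits of one class would admit vertex-disjoint paths to $\fin$ and to $\fail$, and then constructing a graph-preserving valuation that pushes the probabilities along these paths towards $1$, driving the two exits' values towards $1$ and $0$ respectively and contradicting their equivalence. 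Without this argument (or some substitute for it), the \P{} membership in your proposal is unsupported, and as stated your procedure returns wrong answers on the example above.
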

To show that NWR equivalences are decidable in polynomial time for \~pMDPs
with one action, we argue that $u \sim w$ holds if and only if there exists a
unique $z \in V$ such that
$\ReachProb{u}{\mathcal{C}}{z}=\ReachProb{w}{\mathcal{C}}{z}=1$. This
characterisation allows us to compute all equivalence classes based on where
every state is almost-surely reachable from. The latter can be done in
polynomial time~\cite[Algorithm 45]{baier2008}.

\section{Action Pruning via the NWR}

In this section, we show various ways to efficiently under-approximate the NWR
--- that is, obtain subsets of the relation --- and use these
under-approximations to reduce the size of the wpMDP by pruning actions and by
collapsing equivalent states of a wpMDP. As we have shown in
\cref{chap:Weighted_to_nonWeighted}, any weighted w\~pMDP can be efficiently
converted into a (non-weighted) \~pMDP while preserving all the never-worse
relations from the original weighted model. Hence, we present
under-approximations for non-weighted models only (i.e.,
$T=\Set{\fin,\fail}$). We will also focus on trivially parametric MDPs only.
Note that if $U\trianglelefteq W$ in a \~pMDP, then $U\trianglelefteq W$ in
any pMDP having the same underlying graph as the first \~pMDP. (That is, as
long as it admits a graph-preserving valuation.)

Henceforth, let $\mathcal{M} = (S,A,\Delta,T)$ be a non-weighted trivially
parametric MDP and write $\mathcal{G}(\mathcal{M}) = (V,E)$. We make the assumption that all \emph{extremal-value vertices} \cite{NWR_paper} have been
contracted: An extremal-value vertex $v\in V$ is such that either
$\RewardOptVal(v)=0$ or $\RewardOptVal(v)=1$ holds for all $\val \in
\GraphPresVals$. The set of all such vertices can be computed in polynomial
time \cite[Algorithm 45]{baier2008}. One can then contract all value-$1$
vertices with $\fin$ and all value-$0$ with $\fail$, without changing the
values of the other vertices.


\subsection{The under-approximation graph}
We will represent our current approximation of the NWR by means of a
(directed) \emph{under-approximation graph} $\mathcal{U} = (N,R)$ such that $N
\subseteq 2^V$. For $U,W \in N$, let $U \xrightarrow{\mathcal{U}} W$ denote the fact
that there is a path in $\mathcal{U}$ from $U$ to $W$. Throughout our
algorithm we will observe the following invariant.
\begin{equation}\label{eqn:underapprox}
  \forall U,W \in N : U \xrightarrow{\mathcal{U}} W \implies U \trianglelefteq W
\end{equation}

\paragraph{Initializing the graph.} For our initial under-approximation of the
NWR, we construct $\mathcal{U}$ with vertex set $N = \{ \{v\}, vE \mid v \in V\}$. That is, it contains all states, all
state-action pairs, and all sets of immediate successors of states or state-action pairs. For the edges, we
add them so that the following hold.
\begin{enumerate}
    \item $R \ni (\{\fail\}, n)$ for all $n \in N$
    \item $R \ni (n, \{\fin\})$ for all $n \in N$
    \item $R \ni (\{v\}, vE)$ for all $v \in V$
    \item $R \ni (vE, \{v\})$ for all $v \in S$
\end{enumerate}

We have shown that this initialization yields a correct under-approximation.
\begin{lemma}\label{lem:underapprox_init}
    Let $\mathcal{U}$ be the initial under-approximation graph as defined above.
    Then, invariant \eqref{eqn:underapprox} holds true.
\end{lemma}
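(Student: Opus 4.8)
The plan is to reduce the claim about arbitrary paths to a claim about single edges, and then to check the four edge rules against the local (Bellman-style) structure of the value function. I would rely on three elementary facts about $\RewardOptVal$ in the \~pMDP $\mathcal{M}$, each holding for every graph-preserving valuation $\val \in \GraphPresVals$. Since $\RewardOptVal(v)$ is an optimal probability of reaching $\fin$ (the remark in \cref{sec:graph-of-wpmdp}), we have $0 = \RewardOptVal(\fail) \leq \RewardOptVal(v) \leq \RewardOptVal(\fin) = 1$ for every $v \in V$. For a non-target state $s$, the optimal action attains the maximum over its state-action successors, so $\RewardOptVal(s) = \max_{w \in sE} \RewardOptVal(w)$ (after contraction of extremal-value vertices, $sE \neq \emptyset$). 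For a nature vertex $v = (s,a)$, the value $\RewardOptVal(v) = \sum_{s' \in vE} \delta(s,a,s')[\val] \cdot \RewardOptVal(s')$ is a convex combination of its successors' values, so $\min_{s' \in vE} \RewardOptVal(s') \leq \RewardOptVal(v) \leq \max_{s' \in vE} \RewardOptVal(s')$.

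The key reduction is that $\trianglelefteq$, read as a relation on subsets of $V$, is transitive. To prove $U \trianglelefteq W$ and $W \trianglelefteq Y$ imply $U \trianglelefteq Y$, I would fix $u \in U$ and a valuation $\val$: by $U \trianglelefteq W$ there is $w \in W$ with $\RewardOptVal(u) \leq \RewardOptVal(w)$, and by $W \trianglelefteq Y$ applied to this $w$ and the same $\val$ there is $y \in Y$ with $\RewardOptVal(w) \leq \RewardOptVal(y)$; chaining gives $\RewardOptVal(u) \leq \RewardOptVal(y)$, and nonemptiness of $U$ is inherited. Together with reflexivity of $\trianglelefteq$ on nonempty sets (take the witness $w = u$), this shows that every path $U_0 \xrightarrow{\mathcal{U}} U_k$, being a finite composition of edges, satisfies $U_0 \trianglelefteq U_k$ as soon as every single edge does. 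It therefore suffices to verify the invariant on the four edge rules.

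I would then dispatch the rules, fixing an arbitrary $\val$ in each case. Rule 1, $(\{\fail\}, n)$: as $\RewardOptVal(\fail) = 0 \leq \RewardOptVal(w)$ for any $w \in n$, we get $\fail \trianglelefteq n$. Rule 2, $(n, \{\fin\})$: every $w \in n$ has $\RewardOptVal(w) \leq 1 = \RewardOptVal(\fin)$, so $n \trianglelefteq \fin$. Rule 3, $(\{v\}, vE)$: if $v$ is a nature vertex, the convex-combination bound yields a successor $s' \in vE$ with $\RewardOptVal(v) \leq \RewardOptVal(s')$; if $v$ is a state, the maximising action $w \in vE$ satisfies $\RewardOptVal(v) = \RewardOptVal(w)$; either way $v \trianglelefteq vE$. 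Rule 4, $(vE, \{v\})$ for a state $v$: the state identity gives $\RewardOptVal(w) \leq \RewardOptVal(v)$ for \emph{every} $w \in vE$, which is exactly $vE \trianglelefteq v$. (Rule 4 is confined to states precisely because, for a nature vertex, the convex combination need not dominate all of its successors.)

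The step that needs the most care is the quantifier structure of $\trianglelefteq$ on subsets: the witness $w \in W$ in $u \trianglelefteq W$ may depend on $\val$, so transitivity --- and hence the gluing of edges into paths --- must be argued pointwise in $\val$, which is exactly what the above does. The only genuinely degenerate case is an empty successor set $vE$; after contracting extremal-value vertices this arises only for the sinks $\fin$ and $\fail$, and I would simply omit these empty sets from $N$ (they carry no information), so that every vertex of $\mathcal{U}$ is a nonempty subset of $V$ and all four rules are exactly as analysed.
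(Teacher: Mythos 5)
Your proof is correct and follows essentially the same route as the paper: reduce the path invariant to single edges via transitivity of $\trianglelefteq$ (the paper's \cref{lem:transitive}, which you additionally prove rather than assert), then verify the four edge rules using the value-comparison facts of \cref{lem:val_compare_child}. Your explicit treatment of the $\val$-dependent witnesses in the transitivity argument and of the degenerate empty successor sets $vE$ for sinks are welcome refinements the paper leaves implicit, but they do not change the underlying argument.
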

Below, we describe how the under-approximation graph can be updated to get ever tighter under-approximations of the NWR.

\paragraph{Updating and querying the graph.}
Whenever we add a vertex $U$ to some under-approximation graph $\mathcal{U}$, we also add:
\begin{itemize}
    \item an edge from $\{\fail\}$ and each $U' \in N$ such that $U' \subset U$ to $U$,
    \item an edge from $U$ to $\{\fin\}$ and each $U'' \in N$ such that $U \subset U''$, and
    \item an edge from $U$ to $W$ if $\{u\} \xrightarrow{\mathcal{U}} W$ holds for all $u \in U$.
\end{itemize} 
When we add an edge $(U,W)$ to $\mathcal{U}$, we first add the vertices $U$
and $W$ as previously explained, if they are not yet in the graph. Finally, to
query the graph and determine whether $U \xrightarrow{\mathcal{U}} W$, we
simply search the graph for $W$ from $U$ as suggested by the definitions.

\begin{lemma}\label{lem:underapprox_add_edge}
    Let $\mathcal{U}'$ be an under-approximation graph satisfying invariant 
    \eqref{eqn:underapprox} and $\mathcal{U}$ the under-approximation graph
    obtained after adding an edge $(U,W) \in 2^V \times 2^V$ as
    described above. If $U \trianglelefteq W$ then $\mathcal{U}$ satisfies invariant
    \eqref{eqn:underapprox}.
\end{lemma}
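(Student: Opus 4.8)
The plan is to replace the statement about paths by a statement about single edges, exploiting that $\trianglelefteq$ is a preorder on the non-empty subsets of $V$. Reflexivity is immediate: for non-empty $A$ and any $a \in A$, the element $a$ witnesses $a \trianglelefteq A$ by itself, so $A \trianglelefteq A$. For transitivity, suppose $A \trianglelefteq B$ and $B \trianglelefteq C$; fixing $a \in A$ and a valuation $\val \in \GraphPresVals$, we pick $b \in B$ with $\RewardOptVal(a) \le \RewardOptVal(b)$ and then $c \in C$ with $\RewardOptVal(b) \le \RewardOptVal(c)$, so $\RewardOptVal(a) \le \RewardOptVal(c)$; as $\val$ was arbitrary this gives $a \trianglelefteq C$, hence $A \trianglelefteq C$. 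Given reflexivity and transitivity, invariant \eqref{eqn:underapprox} is \emph{equivalent} to the assertion that every edge $(A,B) \in R$ is sound, meaning $A \trianglelefteq B$: from sound edges every path is sound by transitivity (and the length-zero path by reflexivity), while conversely every edge is itself a path. Since $\mathcal{U}'$ satisfies the invariant, all of its edges are already sound, so it suffices to prove that each edge newly introduced in passing from $\mathcal{U}'$ to $\mathcal{U}$ is sound.

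I would then dispatch the new edges one by one using only intrinsic properties of $\trianglelefteq$. First, both $U$ and $W$ are non-empty, which is forced by the hypothesis $U \trianglelefteq W$, so every set occurring as an endpoint below qualifies. The main edge $(U,W)$ is sound by hypothesis. The edges incident to the extremal states are sound because $\RewardOptVal(\fail) = 0 \le \RewardOptVal(v) \le 1 = \RewardOptVal(\fin)$ holds for every vertex $v$ and every $\val \in \GraphPresVals$ in a non-weighted \~pMDP; hence $\{\fail\} \trianglelefteq A$ and $A \trianglelefteq \{\fin\}$ for any non-empty $A$, which covers the first two bullets of the update procedure. The subset edges $(U',U)$ with $U' \subset U$ and $(U,U'')$ with $U \subset U''$ are sound by monotonicity: whenever $\emptyset \ne A \subseteq B$, each $a \in A$ lies in $B$ and witnesses $a \trianglelefteq B$ by itself, so $A \trianglelefteq B$.

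The remaining edges, of the form $(U,W')$ added because $\{u\} \xrightarrow{\mathcal{U}} W'$ holds for every $u \in U$, are the delicate case and I expect them to be the main obstacle. Their intended justification is the decomposition built into the definition of $\trianglelefteq$ on sets: $U \trianglelefteq W'$ follows as soon as $u \trianglelefteq W'$ for all $u \in U$. The danger is circularity, since the triggering condition refers to reachability in the new graph $\mathcal{U}$, which already contains the edges we are trying to validate. I would remove this by induction on the order in which edges are added: when the edge $(U,W')$ is inserted, every edge present in $\mathcal{U}$ at that moment is, by the induction hypothesis, already sound, so each witnessing path $\{u\} \xrightarrow{\mathcal{U}} W'$ consists of sound edges and yields $u \trianglelefteq W'$ by transitivity; the decomposition property then gives $U \trianglelefteq W'$. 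Equivalently, one checks that a shortest such path need never leave and re-enter $U$, so it lies in the portion of the graph already shown sound. Once every new edge is sound, transitivity and reflexivity propagate soundness to all paths of $\mathcal{U}$, which is exactly invariant \eqref{eqn:underapprox}.
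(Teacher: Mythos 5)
Your proof is correct and follows the same skeleton as the paper's own: transitivity of $\trianglelefteq$ (the paper's \cref{lem:transitive}, which you prove rather than merely cite, together with reflexivity to cover trivial paths) reduces invariant \eqref{eqn:underapprox} to the soundness of individual edges, after which each newly introduced edge is checked. The difference is coverage. The paper's proof treats only the subset/superset edges (via monotonicity of $\trianglelefteq$) and the main edge $(U,W)$ (by hypothesis); it passes silently over the edges incident to $\{\fail\}$ and $\{\fin\}$ --- which you discharge via $\RewardOptVal(\fail)=0\leq\RewardOptVal(v)\leq 1=\RewardOptVal(\fin)$, mirroring the proof of \cref{lem:underapprox_init} --- and, more significantly, over the third kind of edge, added when $\{u\} \xrightarrow{\mathcal{U}} W'$ holds for all $u \in U$. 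You correctly identify this as the delicate case, and your induction on the order of edge insertion is the right device: the trigger condition is evaluated \emph{before} the edge is inserted, so each witnessing path consists of edges already proven sound, whence $u \trianglelefteq W'$ by transitivity and $U \trianglelefteq W'$ by the set-wise definition of $\trianglelefteq$, with no circularity. (Your ``equivalently'' aside about a shortest path never leaving and re-entering $U$ is muddled --- $U$ is a single node of $\mathcal{U}$, not a region of it --- but it is dispensable and does not affect the argument.) In short, you take the same route but carry it further: the paper buys brevity by leaving the $\fail$/$\fin$ and decomposition edges implicit, while your version supplies rigor precisely at the step the paper's two-sentence proof glosses over.
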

Before we delve into how to infer pairs from the NWR to improve our
approximation graph, we need to recall the notion of \emph{end component}.

\subsection{End components and quotienting}
Say $(P,B)$ is a sub-MDP of $\mathcal{M}$, for $P\subseteq S$ and $B\subseteq \{(p,a) \mid (p,(p,a)) \in E\}$, 
if:
\begin{itemize}
  \item for all $p\in P$ there is at least one $a\in A$ such that $(p,a)\in B$, and
  \item for all $p'\in S$ and $(p,a) \in B$ with $((p,a),p')\in E$ we have $p'\in P$.
\end{itemize}
One can think of sub-MDPs as connected components of the original MDP obtained by removing some actions from some states.

A sub-MDP $(P,B)$ of $\mathcal{M}$ is an \emph{end component} if the subgraph
of $\mathcal{G}(\mathcal{M})$ induced by $P \cup B$ is strongly connected.
The end component $(P,B)$ is maximal (a MEC) if there is
no other end component $(P',B')\neq(P,B)$ such that $B \subseteq B'$. The set
of MECs can be computed in polynomial time \cite[Algorithm 47]{baier2008}.

It is known that end components are a special case of the NWR. In fact, all
vertices in a same end component are NWR-equivalent~\cite[Lemma 3]{NWR_paper}.
Hence, as we have done for extremal-value states, we can assume they have been
contracted. It will be useful, in later discussions, for us to make explicit
the \emph{quotienting} construction realizing this contraction.

\paragraph{MEC quotient.}
Let $s \in S$ and write $\mecclass{s}$ to denote $\{s\}$ if $s$ is not part of
any MEC $(P,B)$ and $P$ otherwise to denote the unique MEC
$(P,B)$ with $s \in P$. Now, we denote by $\mmec$ the \emph{MEC-quotient} of
$\mathcal{M}$. That is, $\mmec$ is the \~pMDP $\mmec = (S',A,\Delta',T')$
where:
\begin{itemize}
    \item $T'=\Set{\mecclass{t} \mid \exists t\in T}$,
    \item $S'=\Set{\mecclass{s} \mid \exists s\in S}$, and
    \item $\Delta' = \{(\mecclass{s},a,\mecclass{s'}) \mid \exists (s,a,s'')
      \in \Delta, \mecclass{s} \neq \mecclass{s''}\}$.
\end{itemize}
The construction does preserve the never-worse relations amongst all states.
\begin{proposition}
    For all $U,W \subseteq S$ we have that
    \(
        U \trianglelefteq W
    \)
    in $\mathcal{G}(\mathcal{M})$
    if and only if
    \(
        \{\mecclass{u} \mid \exists u \in U\} \trianglelefteq \{\mecclass{w} \mid \exists w \in W\}
    \) in $\mathcal{G}(\mmec)$.
\end{proposition}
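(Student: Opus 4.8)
The plan is to reduce the statement to a single value-preservation lemma and then transfer the never-worse condition by a routine logical argument. Write $\kappa\colon S\to S'$ for the quotient map $\kappa(s)=\mecclass{s}$, so that $\{\mecclass{u}\mid \exists u\in U\}=\kappa(U)$ and $\kappa(U)=\emptyset$ exactly when $U=\emptyset$. Since $U,W\subseteq S$, the relation $U\trianglelefteq W$ is phrased purely through optimal reward values of \emph{states}: it holds iff $U\neq\emptyset$ and for every $\val\in\GraphPresVals$ and every $u\in U$ there is $w\in W$ with $\RewardOptVal(u)\le\RewardOptVal(w)$. Hence it suffices to build a value-preserving correspondence between the graph-preserving valuations of the two models: (i) for each $\val\in\GraphPresVals$ a graph-preserving $\val'$ of $\mmec$ with $\RewardOptVal(s)=\Reward{*}{\mmec[\val']}(\mecclass{s})$ for all $s\in S$, and (ii) conversely. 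Granting (i) and (ii), both implications are immediate---given a valuation on one side, pass to the matched valuation on the other, and the inequality $\RewardOptVal(u)\le\RewardOptVal(w)$ is literally the inequality between $\mecclass{u}$ and $\mecclass{w}$ in $\mmec$.

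For the correspondence lemma I would use two ingredients. First, by \cite[Lemma 3]{NWR_paper} all states of a single MEC $(P,B)$ are NWR-equivalent, so for every $\val$ they share one optimal value; this lets me speak of the value of a quotient state. Second, since $(P,B)$ is strongly connected and all targets are trapping, from any state of $P$ one reaches any exit almost surely, and a standard retry argument shows that committing to a single best exit is optimal; concretely, for every $s\in P$,
\[
  \RewardOptVal(s)=\max_{(p,a)}\frac{\sum_{s''\notin P}\delta(p,a,s'')[\val]\,\RewardOptVal(s'')}{\sum_{s''\notin P}\delta(p,a,s'')[\val]},
\]
the maximum ranging over the \emph{exit} pairs $(p,a)$ of $P$ (those $p\in P,a\in A$ with positive probability of leaving $P$), and $\delta(p,a,s'')[\val]$ denoting the probability $\val$ puts on $(p,a,s'')$. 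In particular the removed internal transitions never affect values, and the edges retained out of $\mecclass{s}$ in $\Delta'$ are exactly the exit edges of $P$.

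For direction (i) I would define $\val'$ on the retained edges out of each quotient state to copy the exit weights of $\val$ (and assign any strictly positive graph-preserving weights where $\mmec$ forces additional edges), so that each quotient action at $\mecclass{s}$ has value equal to the conditional exit value of the corresponding exit of $P$; the maximum over actions then equals $\RewardOptVal(s)$ by the displayed characterization, while no action can exceed it. Direction (ii) is the easier one: read the exit weights off $\val'$ and transfer them back to the corresponding exit transitions of $\mathcal{M}$, assigning the internal MEC transitions arbitrary strictly positive (e.g.\ uniform) graph-preserving probabilities; by the characterization these internal choices leave every state value unchanged, so the two value vectors again agree under $\kappa$. Singleton (non-MEC) states are handled trivially, as their local out-edges are unchanged by the quotient.

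The main obstacle is the correspondence lemma, and within it the exit characterization together with the matching of the quotient's nature vertices to the exits of $\mathcal{M}$: one must check that a single nature vertex $(\mecclass{s},a)$ of $\mmec$ faithfully represents an exit of $P$ and carries no spurious successors that would let it over- or under-shoot $\RewardOptVal(s)$, and that the valuations constructed in (i) and (ii) are genuinely graph-preserving (strict positivity on every syntactically nonzero edge, outgoing probabilities summing to one) while realizing the prescribed conditional exit values. Once the lemma is in place, the equivalence of the two never-worse conditions, and hence the proposition, follows directly from the logical transfer in the first paragraph.
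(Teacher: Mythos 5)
The paper never writes out a proof of this proposition: in the text it is justified by citing \cite{NWR_paper} --- Lemma~3 there (all vertices of an end component are NWR-equivalent) combined with the soundness of contracting NWR-equivalence classes (Theorem~1 there, invoked again for line~\ref{lin:contract} of Algorithm~\ref{alg:ua_2}). Your route is genuinely different and more self-contained: you build a two-way, value-preserving correspondence between $\GraphPresVals$ and the graph-preserving valuations of $\mmec$, via the conditional-exit characterisation of the common value of a MEC. That characterisation is correct (the retry argument gives the lower bound, and the decomposition of any memoryless play into a sub-probability mixture of conditional exit distributions plus a stay-forever event of value $0$ gives the upper bound; note you implicitly use the standing assumption that extremal-value vertices are contracted, so every surviving MEC has an exit), and your approach buys more than the citation route: it yields \emph{exact} equality of all state values under matched valuations, which the paper's argument leaves implicit.

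However, the check you explicitly defer --- that a nature vertex $(\mecclass{s},a)$ of $\mmec$ ``faithfully represents an exit of $P$ and carries no spurious successors'' --- is not a formality, and it \emph{fails} under the paper's literal definition of $\Delta'$: since the quotient keeps the action alphabet $A$, two distinct exits of the same MEC that happen to share an action name are merged into a single nature vertex, pooling their successors. Concretely, take $\Delta$ with $(p_1,a,p_2),(p_2,a,p_1)$ (so $P=\{p_1,p_2\}$ is a MEC via action $a$), exits $(p_1,b,s_1)$ and $(p_2,b,s_2)$, and $(s_i,c,\fin),(s_i,c,\fail)$ for $i=1,2$, so that $\RewardOptVal(s_1)$ and $\RewardOptVal(s_2)$ vary independently over $(0,1)$. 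In $\mathcal{M}$, $\RewardOptVal(p_1)=\max\{\RewardOptVal(s_1),\RewardOptVal(s_2)\}$ for every $\val\in\GraphPresVals$, so $s_1 \trianglelefteq p_1$. In the literal quotient, action $a$ vanishes and $(P,b)$ is one nature vertex with both $\mecclass{s_1}$ and $\mecclass{s_2}$ as successors, so under every graph-preserving $\val'$ the value of $P$ is a full-support convex combination, strictly below the maximum; any $\val'$ making $s_1$ more valuable than $s_2$ then refutes $\{\mecclass{s_1}\}\trianglelefteq\{\mecclass{p_1}\}$. So under that reading your correspondence lemma (i) is unprovable --- your step ``assign strictly positive weight to the forced extra edges so that the action's value equals the conditional exit value'' is impossible whenever those extra successors' values differ from the target --- and indeed the proposition itself would be false. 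The construction is only correct for the standard MEC quotient in which the quotient's actions are the exit pairs themselves (actions tagged by their originating state), which is evidently what the authors intend and what the implicit contraction argument of \cite{NWR_paper} produces. Your proof must fix this reading explicitly rather than leave it as a pending check.

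Two smaller points. First, there is a mild circularity in phrasing (i) as choosing $\val'$ ``so that each quotient action has value equal to the conditional exit value'': values under $\val'$ depend on $\val'$ globally. The clean formulation is syntactic --- let $\val'_{(P,(p,a))}$ be the conditional exit distribution of $(p,a)$ under $\val$, summing the mass of successors lying in a common class (and, in direction (ii), splitting a class's mass arbitrarily but positively among same-class successors, with the internal successors of each exit receiving any positive residual mass) --- and then verify via the uniqueness statement of \cref{lem:bellman_eq} that the original optimal-value vector, pushed through $\kappa$, solves the quotient's Bellman system; this requires noting that the set $\boldsymbol{Z}$ of vertices unable to reach $T$ corresponds across the quotient, which is immediate. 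Second, your observation that the proposition quantifies only over $U,W\subseteq S$ (states, not nature vertices) is exactly what makes the logical transfer in your first paragraph immediate once exact value preservation is in hand; that part of the argument is fine as written.
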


We now describe how to prune sub-optimal actions while updating the approximation graph via inference rules for the NWR.


\subsection{Pruning actions and inferring the NWR}
The next theorem states that we can prune actions that lead to sub-optimal vertices with respect to the NWR once we have taken the MEC-quotient.

\begin{theorem}\label{thm:mec_edge_removal}
Let $\mathcal{M}$ be a \~pMDP such that $\mathcal{M}$ is isomorphic with $\mmec$. Then, for all valuations $\val \in \GraphPresVals$ and all $(s,a) \in S\times A$ with $(s,a) \trianglelefteq (sE \setminus \Set{(s,a)})$:
\[
    \max_\sigma\ReachProb{s}{\mathcal{M}_{\sigma}^\val}{\fin}=\max_{\sigma'}\ReachProb{s}{\mathcal{N}_{\sigma'}^{\val}}{\fin},
\]
where $\mathcal{N} = (S, A, \Delta \setminus \Set{(s, a, s') \in \Delta \mid \exists s' \in S}, T)$.
\end{theorem}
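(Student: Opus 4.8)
The plan is to fix a graph-preserving valuation $\val \in \GraphPresVals$ and compare the two optimal value functions, writing $f = \RewardOptValM$ and $g = \RewardOptValN$ for the optimal reward values in $\mathcal{M}$ and $\mathcal{N}$ respectively. By the Remark in \cref{sec:graph-of-wpmdp} these coincide with the maximal reachability probabilities $\max_\sigma \ReachProb{s}{\mathcal{M}_{\sigma}^\val}{\fin}$ and $\max_{\sigma'} \ReachProb{s}{\mathcal{N}_{\sigma'}^{\val}}{\fin}$, so the claim is exactly $f(s) = g(s)$. One inequality is immediate: every strategy $\sigma'$ of $\mathcal{N}$ is also a strategy of $\mathcal{M}$ that never uses action $a$ at $s$, and the induced Markov chains $\mathcal{N}^{\sigma'}_\val$ and $\mathcal{M}^{\sigma'}_\val$ are literally identical, so $g(s) \leq f(s)$. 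It remains to prove $f(s) \leq g(s)$.

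For the converse, I would first record that $f$ obeys the Bellman optimality equations: $f(t) = \max_{(t,a')\in tE} f((t,a'))$ at every non-target state $t$ and $f((t,a')) = \sum_{s'} \delta(t,a',s')[\val]\, f(s')$ at every nature vertex, which follows from the existence of a uniformly optimal memoryless strategy. The hypothesis $(s,a) \trianglelefteq (sE \setminus \{(s,a)\})$ instantiated at our fixed $\val$ yields a nature vertex $(s,a^*) \in sE \setminus \{(s,a)\}$ with $f((s,a)) \leq f((s,a^*))$. Consequently the maximum over all of $s$'s actions is already attained without $a$, i.e. $f(s) = \max_{(s,a')\in sE} f((s,a')) = \max_{(s,a')\in sE\setminus\{(s,a)\}} f((s,a'))$, so there is an optimal action at $s$ different from $a$.

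I would then build a single memoryless strategy $\sigma^{**}$ that is greedy for $f$ at every non-target state and additionally sets $\sigma^{**}(s) = a^*$; this $\sigma^{**}$ is a legal strategy of $\mathcal{N}$ because it never selects $a$ at $s$. The crux is to argue $\ReachProb{s}{\mathcal{N}^{\sigma^{**}}_\val}{\fin} = f(s)$. Since $\sigma^{**}$ picks only value-maximising actions, the function $f$ satisfies the reachability equations of the induced Markov chain $\mathcal{N}^{\sigma^{**}}_\val = \mathcal{M}^{\sigma^{**}}_\val$ (namely $f(\fin)=1$, $f(\fail)=0$, and $f(t) = \sum_{s'}\mu(t,s')\, f(s')$), and so does the true reachability probability. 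These equations admit a \emph{unique} solution precisely when every non-target state reaches the absorbing targets almost surely under $\sigma^{**}$. Here the assumption that $\mathcal{M}$ is isomorphic to $\mmec$ is essential: it guarantees $\mathcal{M}$ has no non-trivial end components among non-target states, and deleting the action $a$ to form $\mathcal{N}$ only removes edges and hence cannot create any. Thus the induced chain has no non-target bottom strongly connected component, the two solutions coincide, and $f(s) = \ReachProb{s}{\mathcal{N}^{\sigma^{**}}_\val}{\fin} \leq g(s)$, closing the argument. I expect the main obstacle to be this last step: carefully justifying that greediness with respect to $f$ yields a genuinely optimal strategy in $\mathcal{N}$, which hinges on ruling out end components (equivalently, on uniqueness of the fixed point of the Bellman operator) and on verifying that edge deletion preserves their absence.
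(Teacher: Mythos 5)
Your proof is correct, but it takes a genuinely different route from the paper's. The paper argues in two cases: it first rules out, by contradiction, the possibility that deleting the transitions of $(s,a)$ disconnects $s$ from $\fin$ --- this is where it invokes $\mathcal{M}\cong\mmec$, via \cref{lem:essential_reach} and the observation that otherwise some successor action of $s$ would have all its successors almost-surely returning to $s$ and hence lie in a common MEC with $s$; in the surviving case it compares the MDP-level Bellman systems of $\mathcal{M}$ and $\mathcal{N}$, which agree once the maximum at $s$ may drop the action $a$, and concludes by the uniqueness assertion of \cref{lem:bellman_eq} together with the invariance of $\mathbf{Z}_\fin$. You instead work at the level of a single induced Markov chain: you exhibit a memoryless strategy $\sigma^{**}$ that is greedy for $f=\RewardOptValM$ and avoids $a$ at $s$, observe that $f$ solves that chain's linear reachability system, and recover $f$ as the chain's actual reachability probability from uniqueness of the solution, which you justify by the absence of end components --- the true content of $\mathcal{M}\cong\mmec$ (note the isomorphism also excludes single-state self-loop end components, since the quotient deletes in-MEC transitions). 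This buys you two things: no case split at all, since the absence of non-target BSCCs already forces almost-sure absorption in $T$ under every memoryless strategy, so you never need the paper's separate argument that $\mathbf{Z}_\fin$ is unchanged; and a self-contained justification of uniqueness exactly where the paper leans on \cref{lem:bellman_eq}, whose uniqueness claim itself tacitly requires the no-end-component hypothesis (greedy strategies are not optimal in general MDPs precisely because of end components, so your explicit appeal to $\mathcal{M}\cong\mmec$ at this step is both correct and necessary). The paper's route, in exchange, yields the extra structural fact that the pruning can never disconnect $s$ from $\fin$. One slip to repair: the action you assign by $\sigma^{**}(s)=a^*$ must be one attaining $\max_{(s,a')\in sE\setminus\{(s,a)\}} f((s,a'))$, not merely the NWR witness satisfying $f((s,a))\le f((s,a^*))$; with the literal witness, $\sigma^{**}$ need not be greedy at $s$ and $f$ would fail the chain's equation there. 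Your preceding display already shows the maximum over $sE$ is attained in $sE\setminus\{(s,a)\}$, so the fix is immediate.
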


It remains for us to introduce inference rules to derive new NWR pairs based on our approximation graph $\mathcal{U}$. The following definitions will be useful.

\paragraph{Essential states and almost-sure reachability.} We say the set $W
\subseteq V$ is \emph{essential for}\footnote{This definition is inspired
by~\cite{DBLP:conf/papm/DArgenioJJL01} but it is not exactly the same as in
that paper.} $U \subseteq V$, written $U\sqsubseteq W$, if each path starting
from any vertex in $U$ and ending in $\fin$ contains a vertex from $W$. (Intuitively, removing $W$ disconnects $U$ from $\fin$.) We say
that $U \subseteq V$ \emph{almost-surely reaches} $W \subseteq V$, written $U
\xrightarrow{\mathrm{a.s.}} W$, if all states $u \in U$ become value-$1$
states after replacing $T$ with $W \cup \{\fail\}$ and making all $w \in W$
have weight $1$. Recall that value-$1$ states are so for all graph-preserving
valuations and that they can be computed in polynomial time. Similarly,
whether $W$ is essential for $U$ can be determined in polynomial time by
searching in the subgraph of $\mathcal{G}(\mathcal{M})$ obtained by removing $W$.
We also define, for $W \subseteq V$, a function $f_{W} : 2^{S} \to 2^{S}$:
\[
    f_{W}(D) = D \cup \left\{z \in V \:\middle|\: \{z\} \xrightarrow{\mathcal{U}} W \cup D\right\}
\]
and write $\mu D.f_{W}(D)$ for its least fixed point with respect to the
subset lattice.

\paragraph{Inference rules.}
For all $U,W \subseteq V$, the following rules hold true. It is worth
mentioning that these are strict generalisations of~\cite[Propositions 1, 2]{NWR_paper}.
\begin{equation}\label{eqn:prop1}    
  \infer{U \trianglelefteq W}{U \sqsubseteq \mu D. f_W(D)}
  \quad
  \infer{U \trianglelefteq W}{\exists w \in W : \{w\}
  \xrightarrow{\mathrm{a.s}} \{s \in S \mid U \xrightarrow{\mathcal{U}}
  \{s\}\}}
\end{equation}

\begin{lemma}
    Let $\mathcal{U}$ be an approximation graph satisfying
    \eqref{eqn:underapprox}. Then, the inference rules from \eqref{eqn:prop1}
    are correct.
\end{lemma}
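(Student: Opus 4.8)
The plan is to verify each inference rule separately, showing that whenever its premise holds, the conclusion $U \trianglelefteq W$ is a genuine NWR fact — meaning for every graph-preserving valuation $\val$, some $w \in W$ satisfies $\RewardOptVal(u) \le \RewardOptVal(w)$ for all $u \in U$. Throughout, I will use the invariant \eqref{eqn:underapprox}, which tells us that $\{z\} \xrightarrow{\mathcal{U}} Y$ already implies $\{z\} \trianglelefteq Y$, i.e. $\RewardOptVal(z) \le \max_{y \in Y} \RewardOptVal(y)$. Since we have reduced to the non-weighted, trivially parametric, MEC-quotiented setting, I can freely use that $\RewardOptVal(v) = \max_\sigma \ReachProb{v}{\MDPValSigma}{\fin}$ and fix an arbitrary $\val \in \GraphPresVals$ together with an optimal strategy $\sigma$.

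\textbf{The first rule} (essentiality via the least fixed point). Fix $\val$ and let $D^\star = \mu D.f_W(D)$. The key observation is that, by construction of $f_W$, every $z \in D^\star$ satisfies $\{z\} \xrightarrow{\mathcal{U}} W \cup D^\star$, so by the invariant $\RewardOptVal(z) \le \max_{y \in W \cup D^\star} \RewardOptVal(y)$. Let $m = \max_{w \in W} \RewardOptVal(w)$ and pick a vertex $z^\star \in D^\star$ maximising $\RewardOptVal$ over $D^\star$; combining the two gives $\RewardOptVal(z^\star) \le \max(m, \RewardOptVal(z^\star))$, which only yields something useful once we rule out the possibility that the maximiser lies strictly inside $D^\star$. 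I would argue that because $W$ is essential for $U$ with respect to $\mu D.f_W(D)$ — every path from $U$ to $\fin$ meets $W$ — an optimal run from any $u \in U$ must cross $W$, so its value is bounded by the best value reachable at a $W$-vertex, giving $\RewardOptVal(u) \le m$ and hence $u \trianglelefteq W$. The fixed point $D^\star$ serves precisely to identify the states whose value is ``pinned'' by $W \cup D^\star$, and essentiality of $W$ for $U$ relative to this set is what forces every escape route from $U$ to pass through $W$ rather than leaking through some other high-value vertex.

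\textbf{The second rule} (almost-sure reachability of the down-set). Fix $\val$ and suppose some $w \in W$ almost-surely reaches $Z := \{s \in S \mid U \xrightarrow{\mathcal{U}} \{s\}\}$. Since $w$ reaches $Z$ with probability $1$ under every graph-preserving valuation, its optimal value is at least the value obtainable by first reaching $Z$ and then continuing optimally; I would make this precise by noting that after relabelling each $z \in Z$ as a value-$1$ target the state $w$ becomes value-$1$, so an optimal run from $w$ reaches $Z$ almost surely and therefore $\RewardOptVal(w) \ge \min_{z \in Z}\RewardOptVal(z)$. On the other hand each $u \in U$ satisfies $\{u\} \xrightarrow{\mathcal{U}} \{z\}$ for every $z \in Z$ by definition of $Z$, so by the invariant $\RewardOptVal(u) \le \RewardOptVal(z)$ for all such $z$, whence $\RewardOptVal(u) \le \min_{z \in Z}\RewardOptVal(z) \le \RewardOptVal(w)$, establishing $u \trianglelefteq W$.

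\textbf{The main obstacle} is the first rule: correctly exploiting essentiality to turn ``every path from $U$ to $\fin$ meets $W$'' into the value bound $\RewardOptVal(u) \le \max_{w\in W}\RewardOptVal(w)$, while handling the least-fixed-point set $D^\star$ rigorously. The subtlety is that an optimal strategy from $u$ need not head directly to $W$; it may wander through vertices of $D^\star$, so I must argue that the fixed-point construction guarantees these detours cannot boost the value beyond what is achievable at $W$. I expect this to require a careful first-passage decomposition of runs at their first visit to $W$, combined with the invariant-supplied bounds on the $D^\star$-vertices, to conclude that no valuation can make $u$ strictly exceed the whole of $W$.
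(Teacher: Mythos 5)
Your treatment of the second rule is essentially correct: the chain $\RewardOptVal(u)\le\min_{z\in Z}\RewardOptVal(z)\le\RewardOptVal(w)$ is exactly the right argument, modulo one phrasing slip (it is not that an \emph{optimal} run from $w$ reaches $Z$ almost surely, but that composing an almost-surely-$Z$-reaching strategy with optimal continuation from each $z\in Z$ yields a lower bound on $\RewardOptVal(w)$). The first rule, however, has a genuine gap that stems from misreading the premise: $U \sqsubseteq \mu D.f_W(D)$ says that the fixed point $D^\star = \mu D.f_W(D)$ --- \emph{not} $W$ --- is essential for $U$. A path from $U$ to $\fin$ need not meet $W$ at all; it may pass only through vertices $z\in D^\star\setminus W$ that merely satisfy $z \trianglelefteq W\cup D^\star$. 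So your claims that ``every path from $U$ to $\fin$ meets $W$,'' that an optimal run ``must cross $W$,'' and your proposed first-passage decomposition at the first visit to $W$ all fail. As written, your argument only establishes the weaker rule with premise $U \sqsubseteq W$, i.e.\ plain essentiality of $W$, which is precisely the original proposition of Le Roux and P\'erez that this rule is meant to strictly generalise; the fixed point would play no role in your proof.

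The obstacle you correctly flagged --- ruling out that the maximiser of $\RewardOptVal$ over $W\cup D^\star$ lies strictly inside $D^\star$ --- is resolved not by essentiality but by induction along the Kleene iteration of $f_W$. Fix $\val\in\GraphPresVals$ and set $m=\max_{w\in W}\RewardOptVal(w)$. Every $z\in D_1=f_W(\emptyset)$ satisfies $\{z\}\xrightarrow{\mathcal{U}}W$, so the invariant gives $\RewardOptVal(z)\le m$; inductively, each $z\in D_{i+1}\setminus D_i$ satisfies $z\trianglelefteq W\cup D_i$, hence $\RewardOptVal(z)\le\max\bigl(m,\max_{z'\in D_i}\RewardOptVal(z')\bigr)= m$. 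Thus $\RewardOptVal(z)\le m$ for \emph{all} $z\in D^\star$, with no circularity, because membership in $D^\star$ is witnessed at some finite stage whose bound refers only to earlier stages. Only now does essentiality enter: since $U\sqsubseteq D^\star$, the first-passage decomposition of \cref{lem:essential_reach}, applied at $D^\star$ (not $W$) in the Markov chain induced by an optimal strategy, yields $\RewardOptVal(u)\le\max_{z\in D^\star}\RewardOptVal(z)\le m$ for every $u\in U$, i.e.\ $U\trianglelefteq W$. With the stage-wise induction inserted and the decomposition re-aimed at $D^\star$, your outline becomes a correct proof.
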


Our proposed action-pruning method is given as Algorithm~\ref{alg:ua_2}. Note
that in line~\ref{lin:contract} we do a final contraction of NWR equivalences.
These are the equivalences that can be derived between any two singletons $\{x\}$ and $\{y\}$ that lie on a same cycle in $\mathcal{U}$. The algorithm clearly
terminates because the NWR is finite, there is no unbounded recursion, and there are no unbounded loops. Finally, correctness follows from the results in this
section and, for the aforementioned contraction of NWR equivalence classes,
from~\cite[Theorem 1]{NWR_paper}.

\begin{algorithm}[hbt]
\caption{Reduction using the under-approximation}\label{alg:ua_2}
\Input{A \~pMDP $\mathcal{M} = (S,A,\Delta,T)$}
\Output{A (hopefully) smaller \~pMDP}
    Initialize $\mathcal{U}$\;
    \Repeat{$\mathcal{U}$ is unchanged}{\label{line:inner-loop}
        \For{$s\in S$}{
            \For{$(s,a) \in sE$}{
                $W \gets sE \setminus \{(s,a)\}$\;
                \uIf{$\{(s,a)\} \xrightarrow{\mathcal{U}} W$}{
                  Prune all $\{(s,a,s') \in \Delta \mid \exists s' \in S\}$\;
                }\uElseIf{$\{(s,a) \} \sqsubseteq \mu D.f_W(D)$}{
                  Add edge $((s,a),W)$ to $\mathcal{U}$\;
                  Prune all $\{(s,a,s') \in \Delta \mid \exists s' \in S\}$\;
                }\uElseIf{$\exists w \in W : \{w\} \xrightarrow{\mathrm{a.s.}} \{s' \in S \mid \{(s,a)\} \xrightarrow{\mathcal{U}} \{s'\}\}$}{
                  Add edge $((s,a),W)$ to $\mathcal{U}$\;
                  Prune all $\{(s,a,s') \in \Delta \mid \exists s' \in S\}$\;
                }                   
            }
        }
    }
    Contract NWR equivalences from cycles in
    $\mathcal{G}$\;\label{lin:contract}
    Return the new \~pMDP\;
\end{algorithm}

\section{Experiments}

In this section, we show the results of the experiments that we have conducted to
assess the effectiveness of the techniques described in the previous sections.
We first motivate the research questions that we wish to answer using the
experiments. Then, we describe the models that we have used during the experiments,
including the pre-processing steps that were performed. Next, we describe the
different setups in which our techniques have been applied to the selected and
pre-processed models. Finally, we discuss the results that we have obtained during
the experiments. Everything needed to re-run our experiments is in~\cite{NWR_code}.






We approach our experiments with the following research questions in mind.
\begin{itemize}
    \item[Q1.] Is the NWR really useful in practice?
    \item[Q2.] Does our under-approximation give us significant reductions in the size of some models? How do these reductions compare to the known reductions?
    \item[Q3.] Does the under-approximation yield reductions in early iterations?
\end{itemize}
Note that the first question is (intentionally) vague. To approach it, we thus focus on the more concrete two that follow. Namely, for the under-approximation, we want to know whether we can reduce the size of common benchmarks and, moreover, whether such reduction in size is obtained through a small number of applications of our inference rules. On the one hand, a positive answer to Q2 would mean the NWR is indeed useful as a preprocessing step to reduce the size of an MDP. On the other hand, a positive answer to Q3 would imply that a small number of applications of our inference rules is always worth trying.

\subsection{Benchmarks and protocol}


For the experiments, we have considered all discrete-time MDPs encoded as PRISM models~\cite{DBLP:conf/cav/KwiatkowskaNP11} from the Quantitative Verification Benchmark set \cite{DBLP:conf/isola/BuddeHKKPQTZ20}, provided that they have at least one maximal reachability property. From these, we have filtered out those that have an unknown number of states, or more than $64.000$ states, as specified on the QComp website, in order to limit the runtime of the experiments.


We have added every maximal reachability property from the \texttt{props}
files as labels in the model files. In some cases, either out of curiosity, or
to compare with other known reductions, we have introduced new labels to the models
as final states. For example, in the \texttt{wlan\_dl} benchmark, instead of
checking the minimum probability of both stations sending correctly within
deadline, we check the maximum probability of doing so. In the
\texttt{consensus} benchmarks, we add new labels \texttt{fin\_and\_all\_1} and
\texttt{fin\_and\_not\_all\_1} to compare our reductions to the ones used by
Bharadwaj et al. \cite{bharadwaj2017}. Some of these labels used in the tables and graphs later are listed below:
%
\begin{enumerate}
    \item Model: \texttt{consensus}. Labels: \texttt{disagree}, \texttt{fin\_and\_all\_1}, \texttt{fin\_and\_not\_all\_1}. The label \texttt{disagree} is one of the properties in the property file and the other two labels are meant for comparisons with past results \cite{bharadwaj2017}.
    \item Model: \texttt{zeroconf\_dl}. Label: \texttt{deadline\_min}. The minimum probability of not finishing within the deadline does not look like a maximum reachability probability question, but it turns out to be the same as the maximum probability of reaching a new IP address within the deadline.
    \item Model: \texttt{zeroconf}. Label: \texttt{correct\_max}.
    \item Model: \texttt{crowds}. Label: \texttt{obs\_send\_all}. 
    \item Model: \texttt{brp}. Labels: \texttt{no\_succ\_trans}, \texttt{rep\_uncertainty}, \texttt{not\_rec\_but\_sent}.
\end{enumerate}

The PRISM files, with the added labels, have then converted into an explicit JSON format using STORM \cite{DBLP:journals/sttt/HenselJKQV22}. This format was chosen since it explicitly specifies all states and state-action pairs in an easy to manipulate file format.

We have then applied two pre-processing techniques. Namely, we have collapsed all extremal states and state-action pairs and collapsed all MECs. These two techniques, and the way they relate to the never-worse relation, are described in the previous sections. Some models were completely reduced after collapsing all extremal states, with only the $\fin$ and $\fail$ states remaining. We have not included such MDPs in the discussion here onward, but they are listed in~\cite{NWR_code}. Collapsing the MECs does not prove to be as effective, providing no additional reductions for the models under consideration.





\subsection{Different setups}

In each experiment, we repeatedly apply Algorithm~\ref{alg:ua_2} on each
model. If for two consecutive iterations, the collapse of equivalent states
(see line~\ref{lin:contract}) has no effect then we terminate the
experiment. Each iteration, we start with an empty under-approximation
graph, to avoid states removed during the collapse being included in
the under-approximation. This is described in
Algorithm~\ref{alg:experimental_setup}.

\begin{algorithm}
\caption{The experimental setup}\label{alg:experimental_setup}
\Input{A \~pMDP $\mathcal{M} = (S,A,\Delta,T)$}
\Output{A \~pMDP that can be analysed}
    Contract extremal-value states\;
    Contract MECs\;
    \Repeat{collapsing NWR equivalences has no effect}{\label{line:outer-loop}
        Start with an empty under-approximation $\mathcal{U}$\;
        Apply Algorithm~\ref{alg:ua_2}\;
    }
\end{algorithm}

In order to reduce the runtime of the experiments, we consider a maximal number of iterations for the loops specified on line \ref{line:inner-loop} of Algorithm~\ref{alg:ua_2} and line \ref{line:outer-loop} of  Algorithm~\ref{alg:experimental_setup}. We refer to the loop of Algorithm~\ref{alg:ua_2} as the \say{inner loop}, and to the loop of Algorithm~\ref{alg:experimental_setup} as the \say{outer loop}.
During the experiments, we found that running the inner loop during the first iteration of the outer loop was too time-consuming. We therefore decide to skip the inner loop during the first iteration of the outer loop. Even then, running the inner loop turns out to be impractical for some models, resulting in the following two experimental setups:
\begin{enumerate}
    \item For some models, we allow the outer loop to run for a maximum of 17 times. In our experiments, no benchmark required more than 7 iterations. The inner loop is limited to 3 iterations.
    \item For others, we ran the outer loop 3 times without running the inner loop.
\end{enumerate}

The Markov chains considered in our experiments are always used in the second experimental setup because Markov chains are MDPs with a single action for each state, making the action-pruning inner loop redundant.

\begin{remark}
    Every time we add a node $U$ to the under-approximation graph, we skip the step ``an edge from $U$ to $W$ if $\{u\} \xrightarrow{\mathcal{U}} W$ holds for all $u \in U$'' since we found that this step is not efficient.
\end{remark}

\subsection{Results, tables and graphs}

%
%
%
%
%
\begin{table}[h!]
\caption{Table with the number of states and choices during each step of running the benchmarks. The last column has the running time after preprocessing. The experiments with (gray) colored running time were run with the first experimental setup.}
\centering
\resizebox{\columnwidth}{!}{%
\begin{tabular}{c c|c c|c c|c c c}
 \hline
 & & \multicolumn{2}{c|}{original size} & \multicolumn{2}{c|}{preprocessing} & \multicolumn{3}{c}{under-approx}\\
 benchmark & instance & \#st & \#ch & \#st & \#ch & \#st & \#ch & time\\ 
 \hline\hline
 \multirow{4}{*}{\makecell{\texttt{consensus}\\ \texttt{"disagree"}\\(N,K)}}
                      & (2,2) & 274 & 400 & 232 & 344 & 148 & 260 & \CC{40}78.17s\\
                      & (2,4) & 530 & 784 & 488 & 728 & 308 & 548 & \CC{40}383.13s\\
                      & (2,8) & 1042 & 1552 & 1000 & 1496 & 628 & 1124 & 19.38s\\
                      & (2,16) & 2066 & 3088 & 2024 & 3032 & 1268 & 2276 & 78.47s\\\hline
\multirow{4}{*}{\makecell{\texttt{consensus}\\ \texttt{"fin\_and\_all\_1"}\\(N,K)}} 
                      & (2,2) & 274 & 400 & 173 & 280 & 127 & 232 & \CC{40}124.33s\\
                      & (2,4) & 530 & 784 & 365 & 600 & 271 & 504 & \CC{40}674.17s\\
                      & (2,8) & 1042 & 1552 & 749 & 1240 & 561 & 1052 & 14.14s\\
                      & (2,16) & 2066 & 3088 & 1517 & 2520 & 1137 & 2140 & 57.78s\\\hline
\multirow{4}{*}{\makecell{\texttt{consensus}\\ \texttt{"fin\_and\_not\_all\_1"}\\(N,K)}} 
                      & (2,2) & 274 & 400 & 165 & 268 & 123 & 226 & \CC{40}61.15s\\
                      & (2,4) & 530 & 784 & 357 & 588 & 267 & 498 & \CC{40}317.95s\\
                      & (2,8) & 1042 & 1552 & 741 & 1228 & 555 & 1042 & 14.00s\\
                      & (2,16) & 2066 & 3088 & 1509 & 2508 & 1131 & 2130 & 58.24s\\\hline
\multirow{6}{*}{\makecell{\texttt{zeroconf\_dl}\\\texttt{"deadline\_min"}\\(N,K,reset,deadline)}} 
                      & (1000,1,true,10) & 3837 & 4790 & 460 & 552 & 242 & 333 & \CC{40}98.47s\\
                      & (1000,1,true,20) & 7672 & 9775 & 2709 & 3351 & 1313 & 1945 & 79.49s\\
                      & (1000,1,true,30) & 11607 & 14860 & 4999 & 6211 & 2413 & 3605 & 266.00s\\
                      & (1000,1,true,40) & 15642 & 20045 & 7289 & 9071 & 3513 & 5265 & 560.82s\\
                      & (1000,1,true,50) & 19777 & 25330 & 9579 & 11931 & 4613 & 6925 & 1013.18s\\
                      & (1000,1,false,10) & 12242 & 18200 & 460 & 552 & 242 & 333 & \CC{40}95.69s\\\hline
\multirow{4}{*}{\makecell{\texttt{zeroconf}\\\texttt{"correct\_max"}\\(N,K,reset)}} 
                      & (1000,2,true) & 672 & 814 & 388 & 481 & 118 & 150 & \CC{40}29.13s\\
                      & (1000,4,true) & 1090 & 1342 & 674 & 855 & 206 & 260 & \CC{40}76.06s\\
                      & (1000,6,true) & 1508 & 1870 & 960 & 1229 & 294 & 370 & \CC{40}161.01s\\
                      & (1000,8,true) & 1926 & 2398 & 1246 & 1603 & 382 & 480 & \CC{40}237.69s\\\hline
\multirow{8}{*}{\makecell{\texttt{crowds}\\\texttt{"obs\_send\_all"}\\(TotalRuns,CrowdSize)}} 
                      & (3,5)  & 1200 & 1198 & 268 & 266 & 182 & 180 & 1.03s\\
                      & (4,5)  & 3517 & 3515 & 1030 & 1028 & 632 & 630 & 11.93s\\
                      & (5,5)  & 8655 & 8653 & 2930 & 2928 & 1682 & 1680 & 87.76s\\
                      & (6,5)  & 18819 & 18817 & 6905 & 6903 & 3782 & 3780 & 467.08s\\
                      & (3,10) & 6565 & 6563 & 833 & 831 & 662 & 660 & 13.12s\\
                      & (4,10) & 30072 & 30070 & 5430 & 5428 & 3962 & 3960 & 481.47s\\
                      & (3,15) & 19230 & 19228 & 1698 & 1696 & 1442 & 1440 & 69.90s\\\hline
\multirow{4}{*}{\makecell{\texttt{brp}\\ \texttt{"no\_succ\_trans"}\\(N,MAX)}} 
                      & (64,2) & 2695 & 2693 & 1982 & 1980 & 768 & 766 & 26.76s\\
                      & (64,3) & 3528 & 3526 & 2812 & 2810 & 1087 & 1085 & 53.80s\\
                      & (64,4) & 4361 & 4359 & 3642 & 3640 & 1406 & 1404 & 89.44s\\
                      & (64,5) & 5194 & 5192 & 4472 & 4470 & 1725 & 1723 & 134.57s\\\hline
\multirow{4}{*}{\makecell{\texttt{brp}\\ \texttt{"rep\_uncertainty"}\\(N,MAX)}} 
                      & (64,2) & 2695 & 2693 & 1982 & 1980 & 768 & 766 & 25.74s\\
                      & (64,3) & 3528 & 3526 & 2812 & 2810 & 1087 & 1085 & 52.20s\\
                      & (64,4) & 4361 & 4359 & 3642 & 3640 & 1406 & 1404 & 87.62s\\
                      & (64,5) & 5194 & 5192 & 4472 & 4470 & 1725 & 1723 & 133.96s\\\hline
\multirow{4}{*}{\makecell{\texttt{brp}\\ \texttt{"not\_rec\_but\_sent"}\\(N,MAX)}} 
                      & (64,2) & 2695 & 2693 & 8 & 6 & 5 & 3 & 0.004s\\
                      & (64,3) & 3528 & 3526 & 10 & 8 & 6 & 4 & 0.01s\\
                      & (64,4) & 4361 & 4359 & 12 & 10 & 7 & 5 & 0.01s\\
                      & (64,5) & 5194 & 5192 & 14 & 12 & 8 & 6 & 0.01s\\\hline
\end{tabular}
}
\label{table:tab_ua}
\end{table}

\begin{remark}
    We use the term \textit{choice} to denote a state-action pair in the MDP.
\end{remark}

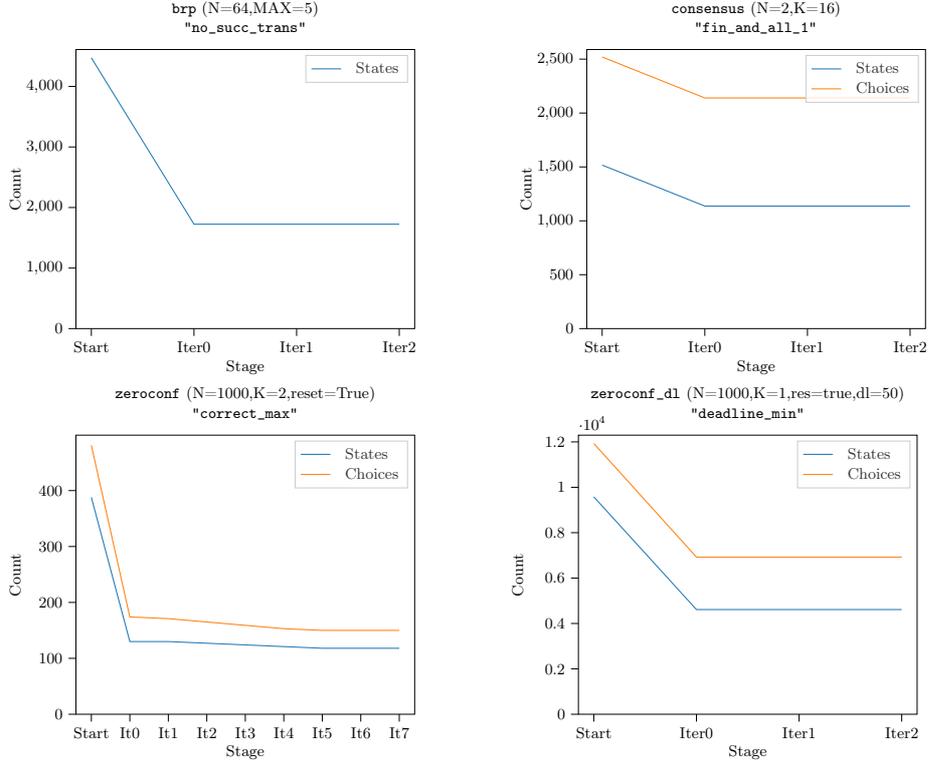
\begin{figure}[t]
\centering
\begin{tikzpicture}[scale=.65]

\definecolor{darkgray176}{RGB}{176,176,176}
\definecolor{darkorange25512714}{RGB}{255,127,14}
\definecolor{lightgray204}{RGB}{204,204,204}
\definecolor{steelblue31119180}{RGB}{31,119,180}

\begin{axis}[
legend cell align={left},
legend style={fill opacity=0.8, draw opacity=1, text opacity=1, draw=lightgray204},
align =center,
tick align=outside,
tick pos=left,
title={\texttt{brp} (N=64,MAX=5) \\ \texttt{"no\_succ\_trans"}},
x grid style={darkgray176},
xlabel={Stage},
xmin=-0.15, xmax=3.15,
xtick style={color=black},
xtick={0,1,2,3},
xtick={0,1,2,3},
xtick={0,1,2,3},
xticklabels={Start,Iter0,Iter1,Iter2},
xticklabels={Start,Iter0,Iter1,Iter2},
xticklabels={Start,Iter0,Iter1,Iter2},
y grid style={darkgray176},
ylabel={Count},
ymin=0, ymax=4609.45,
ytick style={color=black}
]
\addplot [semithick, steelblue31119180]
table {%
0 4472
1 1725
2 1725
3 1725
};
\addlegendentry{States}
\end{axis}

\end{tikzpicture}\hfill
\begin{tikzpicture}[scale=.65]

\definecolor{darkgray176}{RGB}{176,176,176}
\definecolor{darkorange25512714}{RGB}{255,127,14}
\definecolor{lightgray204}{RGB}{204,204,204}
\definecolor{steelblue31119180}{RGB}{31,119,180}

\begin{axis}[
legend cell align={left},
legend style={fill opacity=0.8, draw opacity=1, text opacity=1, draw=lightgray204},
align =center,
tick align=outside,
tick pos=left,
title={\texttt{consensus} (N=2,K=16) \\\texttt{"fin\_and\_all\_1"}},
x grid style={darkgray176},
xlabel={Stage},
xmin=-0.15, xmax=3.15,
xtick style={color=black},
xtick={0,1,2,3},
xtick={0,1,2,3},
xtick={0,1,2,3},
xticklabels={Start,Iter0,Iter1,Iter2},
xticklabels={Start,Iter0,Iter1,Iter2},
xticklabels={Start,Iter0,Iter1,Iter2},
y grid style={darkgray176},
ylabel={Count},
ymin=0, ymax=2589.15,
ytick style={color=black}
]
\addplot [semithick, steelblue31119180]
table {%
0 1517
1 1137
2 1137
3 1137
};
\addlegendentry{States}
\addplot [semithick, darkorange25512714]
table {%
0 2520
1 2140
2 2140
3 2140
};
\addlegendentry{Choices}
\end{axis}

\end{tikzpicture}\\
\begin{tikzpicture}[scale=.65]

\definecolor{darkgray176}{RGB}{176,176,176}
\definecolor{darkorange25512714}{RGB}{255,127,14}
\definecolor{lightgray204}{RGB}{204,204,204}
\definecolor{steelblue31119180}{RGB}{31,119,180}

\begin{axis}[
legend cell align={left},
legend style={fill opacity=0.8, draw opacity=1, text opacity=1, draw=lightgray204},
align =center,
tick align=outside,
tick pos=left,
title={\texttt{zeroconf} (N=1000,K=2,reset=True)\\ \texttt{"correct\_max"}},
x grid style={darkgray176},
xlabel={Stage},
xmin=-0.4, xmax=8.4,
xtick style={color=black},
xtick={0,1,2,3,4,5,6,7,8},
xtick={0,1,2,3,4,5,6,7,8},
xtick={0,1,2,3,4,5,6,7,8},
xticklabels={Start,It0,It1,It2,It3,It4,It5,It6,It7},
xticklabels={Start,It0,It1,It2,It3,It4,It5,It6,It7},
xticklabels={Start,It0,It1,It2,It3,It4,It5,It6,It7},
y grid style={darkgray176},
ylabel={Count},
ymin=0, ymax=499.15,
ytick style={color=black}
]
\addplot [semithick, steelblue31119180]
table {%
0 388
1 130
2 130
3 127
4 124
5 121
6 118
7 118
8 118
};
\addlegendentry{States}
\addplot [semithick, darkorange25512714]
table {%
0 481
1 174
2 171
3 165
4 159
5 153
6 150
7 150
8 150
};
\addlegendentry{Choices}
\end{axis}

\end{tikzpicture}\hfill
\begin{tikzpicture}[scale=.65]

\definecolor{darkgray176}{RGB}{176,176,176}
\definecolor{darkorange25512714}{RGB}{255,127,14}
\definecolor{lightgray204}{RGB}{204,204,204}
\definecolor{steelblue31119180}{RGB}{31,119,180}

\begin{axis}[
legend cell align={left},
legend style={fill opacity=0.8, draw opacity=1, text opacity=1, draw=lightgray204},
align =center,
tick align=outside,
tick pos=left,
title={\texttt{zeroconf\_dl} (N=1000,K=1,res=true,dl=50) \\\texttt{"deadline\_min"}},
x grid style={darkgray176},
xlabel={Stage},
xmin=-0.15, xmax=3.15,
xtick style={color=black},
xtick={0,1,2,3},
xtick={0,1,2,3},
xtick={0,1,2,3},
xticklabels={Start,Iter0,Iter1,Iter2},
xticklabels={Start,Iter0,Iter1,Iter2},
xticklabels={Start,Iter0,Iter1,Iter2},
y grid style={darkgray176},
ylabel={Count},
ymin=0, ymax=12296.9,
ytick style={color=black}
]
\addplot [semithick, steelblue31119180]
table {%
0 9579
1 4613
2 4613
3 4613
};
\addlegendentry{States}
\addplot [semithick, darkorange25512714]
table {%
0 11931
1 6925
2 6925
3 6925
};
\addlegendentry{Choices}
\end{axis}

\end{tikzpicture}
\caption{Some plots visualizing the reductions obtained per iteration.}
\label{fig:plots}
\end{figure}


\cref{table:tab_ua} shows the state-space reduction in some of the benchmarks we ran. 
After preprocessing, we obtain a 
$61\%$ reduction on average in all the $\texttt{brp}$ benchmarks with
\texttt{"no\_succ\_trans"} and \texttt{"rep\_uncertainty"} as final states and
a $40\%$ reduction on average with \texttt{"not\_res\_but\_sent"} as final
states. In the \texttt{zeroconf\_dl} benchmarks, we obtain a further $50\%$
reduction on average and around $69\%$ reduction in all \texttt{zeroconf}
benchmarks. In \texttt{consensus} benchmarks with \texttt{"disagree"} final
states, we get around $37\%$ further reduction after the preprocessing only
removed less than $16\%$ of the states. The rest of the data (including
further benchmarks) can be found in \cite{NWR_code}. All of these
observations\footnote{We were not able to properly compare our
  results to the ones in~\cite{bharadwaj2017}. The sizes reported therein do
  not match those from the QComp models. The authors
  confirmed theirs are based on modified models of which the data and code
have been misplaced.}
point to Q2 having a positive answer. 

Among all the benchmarks we tested, only \texttt{zeroconf} and some instances
of \texttt{consensus} and \texttt{zeroconf\_dl} see any improvements after the
first iteration of the algorithm \cite{NWR_code}. Even then, the
majority of the improvement is seen in the first iteration, the improvement
in the later iterations is minimal, as depicted in \cref{fig:plots}, and no
benchmark runs for more than 7 iterations. The \texttt{consensus} benchmark
with \texttt{"disagree"} final states runs with the second setup for
$(N=2,K=8),(N=2,K=16)$ and with the first setup for $(N=2,K=2),(N=2,K=4)$. As
seen in \cref{table:tab_ua}, all the instances gave us around a $36\%$ further
reduction after preprocessing, although the ones where the algorithm ran only
for one iteration are significantly faster. This gives us a positive answer
for Q3.

\section{Conclusions}
We have extended the never-worse relation to the quantitative reachability
setting on parametric Markov decision processes. While the complexity of
deciding the relation is relatively high (\textCoETR-complete), efficient
under-approximations seem promising. We believe that the relation could be
made more applicable by exploring the computation of such under-approximations
in an on-the-fly fashion such as, for instance, on-the-fly MEC algorithms~\cite{DBLP:conf/atva/BrazdilCCFKKPU14}. The following questions also warrant futher (empirical) study: What kind of MDPs have large reductions under the NWR? What do the reduced MDPs look like? What kinds of states are detected and removed by the NWR under-approximations? 

As additional future work,
we think it is worthwhile to study approximations that are (more)
efficient when the MDP is encoded as a reduced ordered binary decision diagram
(BDD) or in a particular modelling language such as the PRISM language. For the latter, an initial work in this direction is~\cite{DBLP:conf/vmcai/WinklerLK22}.



\subsubsection{Acknowledgements.} 
This work was supported by the Belgian FWO
``SAILor'' (G030020N) and Flemish inter-university (iBOF) ``DESCARTES'' projects.

\bibliographystyle{splncs04}
\bibliography{bibliography}

\appendix
\clearpage
\section{Definitions and preliminaries}

We give an alternate equivalent definition of $\RewardOptVal(v)$ by using Bellman equations which will be useful in proofs:
\begin{lemma}\label{lem:bellman_eq}
    Consider a wpMDP $\mathcal{M} = (S,A,X,\delta,T,\rho)$. Let $\boldsymbol{Z}$ denote the set of all vertices that cannot reach $T$. The vector $(x_v)_{v\in V}$ with $x_p=\RewardOptValM(v)$ is the unique solution of the following system of equations:
    \begin{itemize}
        \item If $v\in T$ then $x_v=\rho(v)$;
        \item If $v\in \boldsymbol{Z}$ then $x_v=0$;
        \item If $v\in S_N \setminus (T\cup \boldsymbol{Z})$ then $x_v=\sum_{s\in vE}\delta(v,s)[\val]\cdot x_s$;
        \item If $v\in S \setminus (T\cup \boldsymbol{Z})$ then $x_v=\max_{a\in vE}x_a$.
    \end{itemize}
\end{lemma}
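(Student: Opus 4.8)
The plan is to prove the two halves separately: \emph{soundness} (the vector $(\RewardOptValM(v))_{v}$ satisfies every equation) and \emph{uniqueness} (it is the only such vector). For soundness I would lean on the fact recorded in the remark following the strategy definition, namely that for each $\val \in \GraphPresVals$ there is a single memoryless deterministic strategy $\sigma^\ast \in \Sigma^{\mathcal{M}}$ that is optimal from every vertex at once. The two boundary clauses are then immediate: a target $v \in T$ is a sink, so $\Reward{v}{\MDPValSigma}=\rho(v)$ for every $\sigma$ and thus $\RewardOptValM(v)=\rho(v)$; and if $v \in \boldsymbol{Z}$ then no target is reachable from $v$ in $\mathcal{G}(\mathcal{M})$, so under every strategy the probability of reaching $T$ is $0$, giving $\Reward{v}{\MDPValSigma}=0$ and hence $\RewardOptValM(v)=0$. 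For a nature vertex $v=(s,a)\in S_N$ the definition of $\RewardValM$ already reads $\RewardValM((s,a))=\sum_{s'\in vE}\delta(s,a,s')[\val]\cdot \RewardValM(s')$; evaluating at $\sigma^\ast$ and using that $\sigma^\ast$ is simultaneously optimal at each successor turns this into the required equation $\RewardOptValM((s,a))=\sum_{s'\in vE}\delta(s,a,s')[\val]\cdot\RewardOptValM(s')$. Finally, for a decision state $s\in S\setminus(T\cup\boldsymbol{Z})$, partitioning the maximisation over strategies by the first action taken at $s$ yields $\RewardOptValM(s)=\max_{a}\RewardOptValM((s,a))$.

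For uniqueness I would use a maximum-difference argument. Suppose $(x_v)$ and $(y_v)$ both solve the system, put $\Delta=\max_{v\in V}|x_v-y_v|$, assume $\Delta>0$ for contradiction, and (swapping the two solutions if needed) take $D=\Set{v\mid x_v-y_v=\Delta}$ to be non-empty. Since both solutions agree on $T$ and on $\boldsymbol{Z}$, we have $D\cap(T\cup\boldsymbol{Z})=\emptyset$. The equations then force $D$ to be closed: at a nature vertex $(s,a)\in D$ the difference $x_{(s,a)}-y_{(s,a)}=\sum_{s'\in vE}\delta(s,a,s')[\val](x_{s'}-y_{s'})$ is a convex combination bounded by $\Delta$, so equality pushes every positive-probability successor into $D$; and at a state $s\in D$, comparing $\max_a x_{(s,a)}$ with $\max_a y_{(s,a)}$ forces at least one successor $(s,a)$ into $D$.

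The hard part will be closing this argument, and it is exactly here that the structural hypothesis matters. Choosing at each state of $D$ a witnessing action produces a sub-MDP contained in $D$ that never reaches $T$, i.e.\ an end component disjoint from $T$. Such a component cannot exist once maximal end components and extremal-value vertices have been contracted, as is assumed throughout the development: any bottom strongly connected structure built from vertices outside $\boldsymbol{Z}$ would be unable to reach $T$ and would therefore already lie in $\boldsymbol{Z}$, a contradiction. This forces $\Delta=0$ and hence $x=y$. I expect this exclusion of $T$-avoiding end components to be the genuine obstacle: without it the system really does admit extra solutions --- a single self-loop on a non-target state already yields a whole ray of fixed points lying above the true optimal value --- so uniqueness, unlike the comparatively routine soundness, rests essentially on the absence of end components outside $T\cup\boldsymbol{Z}$.
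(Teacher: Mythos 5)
Your soundness half is fine, and your maximum-difference closure argument is exactly right: at a nature vertex of the witness set $D$ the convex-combination equality pushes \emph{all} graph-successors into $D$ (graph-preservation gives strictly positive weights), and at a state a maximising action-successor lands in $D$, so $D$ carries a sub-MDP disjoint from $T\cup\boldsymbol{Z}$, hence an end component disjoint from $T$. The genuine gap is in how you close from there, and it is twofold. First, the contradiction you offer is unsound: the end component you extract is a bottom strongly connected structure of the \emph{chosen} sub-MDP, not of $\mathcal{M}$, so its states may well reach $T$ through the actions you did not select; ``unable to reach $T$, hence already in $\boldsymbol{Z}$'' is simply false. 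Indeed your own closing example refutes it: a non-target state with a self-looping action and a second action leading to $\fin$ lies outside $\boldsymbol{Z}$, yet its Bellman equation $x_s=\max\{x_s,1\}$ admits a whole ray of solutions. Second, the hypothesis you fall back on --- that MECs and extremal-value vertices have been contracted --- is not part of the lemma and is not ``assumed throughout the development'': contraction is a standing assumption only in the action-pruning section, whereas \cref{lem:bellman_eq} is invoked for arbitrary wpMDPs, e.g.\ in \cref{lem:leave-set} and in the encoding of \cref{lem:copNWR_in_ETR}, where uniqueness is precisely what is used. Even granting that hypothesis, the right justification is that the quotient $\mmec$ contains no non-trivial end components at all (an EC in the quotient would lift to one straddling distinct MECs, contradicting maximality), not the reach-$T$/$\boldsymbol{Z}$ reasoning you give.

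To be fair, your diagnosis of where the difficulty lies is sharp, and there is nothing to compare your route against: the paper states the lemma without proof, as a classical Bellman-equation fact. As literally stated, uniqueness does fail on models having end components disjoint from $T\cup\boldsymbol{Z}$ (your self-loop observation is a genuine counterexample with $\boldsymbol{Z}=\emptyset$), so no blind proof of the full statement could have closed your final step. What your argument actually establishes is a correct but strictly weaker variant --- uniqueness for EC-free (e.g.\ MEC-quotiented) models; the standard repair for the general statement is to characterise $\RewardOptValM$ as the \emph{least} solution of the system and to rederive the downstream uses (such as the upper bound in \cref{lem:leave-set}) by monotonicity rather than uniqueness.
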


The following lemma observes the relation between the optimal rewards of states and state-action pairs to their (immediate) successors.
\begin{lemma}\label{lem:val_compare_child}
For a wpMDP $\mathcal{M} = (S,A,X,\delta,T,\rho)$ and a valuation $\val\in\GraphPresVals$, for all $u\in S$ we have
\[\RewardOptVal(u)=\max_{a\in A} \RewardOptVal(u,a),\]
and for all $u\in S_N$
\[\min_{(u,a)\in uE}\RewardOptVal(u,a)\leq \RewardOptVal(u)\leq \max_{(u,a')\in uE}\RewardOptVal(u,a'),\]
where the equality holds iff all the vertices in $uE$ have the same value.
\end{lemma}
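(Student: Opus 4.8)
The plan is to read off both statements from the Bellman characterisation of optimal rewards in \cref{lem:bellman_eq}, after isolating the degenerate vertices (target states, vertices of $\boldsymbol{Z}$, and actions that surely reach $\fail$). Two standing facts will drive the argument, so I would record them first. Since every target weight is non-negative, $\RewardOptVal(v)\ge 0$ for all $v$, with $\RewardOptVal(\fail)=0$ being the least attainable value. And for a nature vertex $u\in S_N$, the valuation $\val$ being graph preserving forces the coefficients $\delta(u,s)[\val]$ to be strictly positive for $s\in uE$ and identically $0$ for $s\notin uE$, so that $\{\delta(u,s)[\val]\}_{s\in uE}$ is a convex combination (strictly positive weights summing to $1$).

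For the first identity, fix a state $u\in S$; targets carry no outgoing action, so take $u\in S\setminus T$. By \cref{lem:bellman_eq}, if $\RewardOptVal(u)=0$ then $u\in\boldsymbol{Z}$, and since every vertex reachable from a nature child $(u,a)$ is reachable from $u$, each such child also lies in $\boldsymbol{Z}$ and has value $0$; hence both sides of the identity equal $0$. Otherwise $u\notin T\cup\boldsymbol{Z}$ and the Bellman equation reads $\RewardOptVal(u)=\max_{(u,a)\in uE}\RewardOptVal(u,a)$. To pass from $uE$ to all of $A$, note that any action $a$ with $(u,a)\notin uE$ surely reaches $\fail$, so $\RewardOptVal(u,a)=0$, which by non-negativity never exceeds the maximum over $uE$; therefore $\max_{a\in A}\RewardOptVal(u,a)=\max_{(u,a)\in uE}\RewardOptVal(u,a)=\RewardOptVal(u)$.

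For the second statement, fix a nature vertex $u\in S_N$, whose successors $uE$ are states. If $u\in\boldsymbol{Z}$ then all successors have value $0$ and the claim, including the equality case, is immediate; otherwise the Bellman equation gives $\RewardOptVal(u)=\sum_{s\in uE}\delta(u,s)[\val]\cdot\RewardOptVal(s)$. By the standing fact above this is a strictly-positive-weight convex combination of the numbers $\{\RewardOptVal(s)\}_{s\in uE}$, and every such combination lies weakly between their minimum and maximum, which yields the two inequalities. For the equality case I would invoke the elementary observation that a convex combination with strictly positive weights attains the maximum (equivalently the minimum) of the combined values if and only if all of them coincide; strict positivity of the weights, supplied precisely by graph preservation, is exactly what makes this characterisation hold, so the bounds are tight iff all vertices in $uE$ have the same value.

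The technical content is light, so the main obstacle is bookkeeping rather than depth. One must treat the target states and the value-$0$ vertices (i.e.\ $\boldsymbol{Z}$) uniformly in the first claim, and justify the harmless passage from $\max_{a\in A}$ to $\max_{(u,a)\in uE}$; both hinge on non-negativity of rewards together with $\RewardOptVal(\fail)=0$, which is why I would establish those two standing facts at the very start and route every degenerate case through them.
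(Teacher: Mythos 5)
Your proof is correct and takes essentially the same route as the paper, whose entire argument is the one-line remark that the lemma ``directly follows from the equivalent definition of reward values,'' i.e.\ from \cref{lem:bellman_eq}; you have simply made the details explicit (non-negativity of values, strict positivity of the weights $\delta(u,s)[\val]$ for $s\in uE$ under graph preservation, and the routing of the degenerate cases through $\RewardOptVal(\fail)=0$), all of which is exactly what the paper's appeal to the Bellman characterisation implicitly requires. One small repair: your claim that $\RewardOptVal(u)=0$ implies $u\in\boldsymbol{Z}$ is false in general, since $\boldsymbol{Z}$ is defined via reachability of $T$ and $\fail\in T$ has weight $0$ --- a state can have value $0$ while still reaching $\fail$, hence lie outside $\boldsymbol{Z}$ --- but the branch survives unchanged: for such $u$ the equation $x_u=\max_{a\in uE}x_a$ of \cref{lem:bellman_eq} applies directly and non-negativity forces every child in $uE$ to value $0$, so the correct case split is on membership in $\boldsymbol{Z}$ rather than on the value being $0$.
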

The proof directly follows from the equivalent definition of reward values.

\section{Removing weights from parametric MDPs}
Let $\mathcal{M} = (S, A, X, \delta, T, \rho)$ be a wpMDP with target states $T=\Set{ t_0, t_1, \cdots, t_n }$ such that $0=\rho(t_0)<\rho(t_1)<\cdots<\rho(t_n)$. We will now create a (non-weighted) pMDP $\mathcal{M'}=(S',A', X',\delta', T')$, which has the same never-worse relations as the given wpMDP $\mathcal{M}$. We do so by adding a new action $a\not\in A$ which ensures that the ratio of the weights of the target states in $\mathcal{M}$ is the same as the ratio of the optimal reward values of those states in $\mathcal{N}$. Formally:
\[
\begin{array}{rl}
    S' &= S \uplus \Set{\fail,\fin}, \\
    A' &= A \uplus \Set{a}, \\
    \delta'(s,a',s')   &= \delta(s,a',s') \text{ for $s,s'\in S$ and $a'\in A$},\\
    \delta'(t,a,\fin)  &= \frac{\rho(t)}{\rho(t_n)} \text{ for $t\in T$},\\
    \delta'(t,a,\fail) &= 1-\frac{\rho(t)}{\rho(t_n)} \text{ for $t\in T$},\\
    X'   &= X,\\
    T'   &= \Set{\fail,\fin}.\\
\end{array}
\]

\subsection{Proof of \cref{thm:w_to_b_param}}

Let $c=\frac{1}{\rho(t_n)}$. We will show that $\RewardOptVal(v)_\mathcal{M}\cdot c=\RewardOptVal(v)_\mathcal{N}$ for any $v\in V$, and for any graph-preserving valuations $\mathsf{val}$ for $\mathcal{M}$ and $\mathcal{N}$. Proving this will prove the theorem. 

First, we note that $\RewardOptVal(t)=\rho(t)\cdot c$ for $t\in T$. Next, we note that any path from any state $s\in S$ to $\fin$ in $\mathcal{N}$ must go through $T$. Thus, for any state $s\in S$, we have:
\begin{align*}
\RewardOptValN(u)&=\max_{\sigma\in\Sigma^\mathcal{N}}\RewardValN(u)\\
&=\max_{\sigma\in\Sigma^\mathcal{N}}\Reward{u}{\MDPValSigmaN}\\
&=\max_{\sigma\in\Sigma^\mathcal{N}}\ReachProb{u}{\MDPValSigmaN}{\fin}\\
&=\max_{\sigma\in\Sigma^\mathcal{N}} \sum_{i=1}^n \RewardVal(t_i) \cdot \ReachProb{u}{\MDPValSigmaN}{t_i} && \text{using \cref{lem:essential_reach}}\\
&=\max_{\sigma\in\Sigma^\mathcal{N}} \sum_{i=1}^n \RewardOptVal(t_i) \cdot \ReachProb{u}{\MDPValSigmaN}{t_i} && \text{single choice for states in $T$}\\
&=\max_{\sigma\in\Sigma^\mathcal{M}} \sum_{i=1}^n \RewardOptVal(t_i) \cdot \ReachProb{u}{\MDPValSigma}{t_i} && \text{by construction of }\mathcal{N}\\
&=\max_{\sigma\in\Sigma^\mathcal{M}} \sum_{i=1}^n \rho(t_i)\cdot c \cdot \ReachProb{u}{\MDPValSigma}{t_i}\\
&=c \cdot\max_{\sigma\in\Sigma^\mathcal{M}} \sum_{i=1}^n \rho(t_i)\cdot  \ReachProb{u}{\MDPValSigma}{t_i}\\
&=c \cdot\max_{\sigma\in\Sigma^\mathcal{M}}\Reward{u}{\MDPValSigma}\\
&=c \cdot\RewardOptValM(u).\\
\end{align*}
Above, when we use \cref{lem:essential_reach}, the values of $S,U,T$ are $\mathcal{N}_\val^\sigma, T, \fin$ respectively.

For state-action pairs $(s,a)\in S_N$, the result directly follows from the above result for states the definition of rewards.

\section{Removing weights from trivially parametric MDPs}

\begin{definition}
Given a Markov chain $\mathcal{C}=(S,\mu,T,\rho)$, an initial state $s_0\in S$
and sets of states $A,B\subseteq S$, we denote by
$\ReachProbu{s_0}{\mathcal{C}}{A}{B}$ the probability of staying in $A$ until eventually reaching $B$ from $s_0$. If $s_0 \in B$ then $\ReachProbu{s_0}{\mathcal{C}}{A}{B} = 1$
otherwise:
  \[
    \ReachProbu{s_0}{\mathcal{C}}{A}{B} = \sum \multiset{
    \mathbb{P}_{\mu}(\pi) \mid \pi = s_0 \dots s_n \in A^+ B }.
  \]    
\end{definition}

When all runs from $s_0$ go through a set in $U$ before the run reaches a state in $T$, we can decompose the probability of reaching $T$ using the following lemma \cite{NWR_paper}.

\begin{lemma}[From \cite{NWR_paper}]\label{lem:essential_reach}
Let $\mathcal{C} = (S, \mu)$ be a Markov chain, sets of states $U,T\subseteq S$, and a state $s_0\in S \setminus U$. If $\ReachProbu{s_0}{\mathcal{C}}{(S \setminus U)}{T}=0$, then:
\begin{equation*}
    \ReachProb{s_0}{C}{T}=\sum_{u\in U}\ReachProbu{s_0}{C}{(S \setminus U)}{u}\cdot \ReachProb{u}{C}{T}.
\end{equation*}
\end{lemma}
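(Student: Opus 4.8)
The plan is to prove the identity by a first-passage decomposition: every positive-probability run witnessing $\ReachProb{s_0}{C}{T}$ is split at the first moment it enters $U$, and the two halves are recognised as witnesses for the two factors on the right-hand side.

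First I would record two consequences of the hypothesis $\ReachProbu{s_0}{C}{(S\setminus U)}{T}=0$ together with $s_0\in S\setminus U$. Since the left-hand side would equal $1$ if $s_0\in T$, we get $s_0\notin T$. More importantly, consider any run $\pi=s_0\dots s_n$ contributing to $\ReachProb{s_0}{C}{T}$, i.e. $s_0,\dots,s_{n-1}\in S\setminus T$ and $s_n\in T$. If all of $s_0,\dots,s_{n-1}$ were in $S\setminus U$, then $\pi$ itself would be a word in $(S\setminus U)^+T$ and hence contribute to $\ReachProbu{s_0}{C}{(S\setminus U)}{T}=0$, forcing $\mathbb{P}_\mu(\pi)=0$. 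So every positive-probability run witnessing $\ReachProb{s_0}{C}{T}$ visits $U$ strictly before its final $T$-state; let $j$ be the index of the first such visit, so that $s_0,\dots,s_{j-1}\in S\setminus U$, $s_j\in U$, and (because $j<n$) also $s_j\notin T$.

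Next I would perform the decomposition. Writing $u=s_j$, the prefix $\alpha=s_0\dots s_j$ is a word in $(S\setminus U)^+\Set{u}$ and the suffix $\beta=s_j\dots s_n$ is a word in $(S\setminus T)^+T$, and multiplicativity of run probabilities gives $\mathbb{P}_\mu(\pi)=\mathbb{P}_\mu(\alpha)\cdot\mathbb{P}_\mu(\beta)$. Conversely, any such $\alpha$ ending in some $u\in U\setminus T$ concatenated with any such $\beta$ starting at $u$ yields a valid run for $\ReachProb{s_0}{C}{T}$ whose first visit to $U$ is precisely the shared state $u$; thus $\pi\leftrightarrow(\alpha,\beta)$ is a bijection, the first-hitting index being uniquely determined. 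Here I would invoke that every state of $T$ is a sink: the intermediate states of $\alpha$ have an outgoing transition and therefore cannot lie in $T$, so ``staying in $S\setminus U$'' automatically entails ``staying in $S\setminus T$'' and the two factorisations are compatible. Summing $\mathbb{P}_\mu(\pi)$ over all such runs and grouping by the first-hit state then factors the total as $\sum_{u\in U\setminus T}\ReachProbu{s_0}{C}{(S\setminus U)}{u}\cdot\ReachProb{u}{C}{T}$.

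Finally I would extend the sum from $U\setminus T$ to all of $U$. For $u\in U\cap T$, any run counted in $\ReachProbu{s_0}{C}{(S\setminus U)}{u}$ ends in a state of $T$ while its proper prefix stays in $S\setminus U$, so it equally contributes to $\ReachProbu{s_0}{C}{(S\setminus U)}{T}=0$; hence $\ReachProbu{s_0}{C}{(S\setminus U)}{u}=0$ and appending these terms changes nothing, yielding the claimed identity summed over all $u\in U$. I expect the main obstacle to be the bookkeeping of this bijection --- arguing that the prefix/suffix split is simultaneously well defined (each contributing run does hit $U$), exhaustive, and injective (no run counted twice or omitted) --- for which the sink property of $T$ and the uniqueness of the first hitting time of $U$ are the decisive ingredients.
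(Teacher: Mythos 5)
The paper never proves this lemma --- it is imported from \cite{NWR_paper} (hence the ``From'' in the statement) --- so there is no in-paper argument to compare against and your proof must stand on its own. It essentially does, and it is the standard argument: splitting each contributing run at its \emph{first} visit to $U$, using multiplicativity $\mathbb{P}_\mu(\pi)=\mathbb{P}_\mu(\alpha)\cdot\mathbb{P}_\mu(\beta)$, getting injectivity from uniqueness of the first hitting index (your observation that $s_0\notin U$ guarantees $|\alpha|\geq 2$ and that the first-hit position determines the split is right), and noting that terms with $u\in U\cap T$ vanish because any run counted by $\ReachProbu{s_0}{C}{(S\setminus U)}{u}$ then itself lies in $(S\setminus U)^+T$ and so has probability zero by hypothesis. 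Since all summands are nonnegative, regrouping the countable multiset sums by $(u,\alpha,\beta)$ is legitimate (Tonelli for series), so the bookkeeping you worried about goes through.

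The one step you justify incorrectly is the appeal to ``every state of $T$ is a sink'' to ensure intermediate states of $\alpha$ avoid $T$ in the converse direction. In this lemma $\mathcal{C}=(S,\mu)$ carries no distinguished target set and $T\subseteq S$ is an \emph{arbitrary} set of states; the two roles of the letter $T$ must not be conflated. (Indeed, in the paper's own applications the lemma's $U$ is instantiated with the wpMDP's target states, which are \emph{not} sinks in $\mathcal{N}$, while the lemma's $T$ is $\Set{\fin}$ --- so nothing in the statement entitles you to sinkness of the lemma's $T$.) Fortunately the step is repairable in one line, and without any sink assumption: if $\alpha=s_0\dots s_j$ stays in $S\setminus U$ up to index $j-1$ but has some intermediate $s_i\in T$ with $i<j$, then its prefix $s_0\dots s_i$ belongs to $(S\setminus U)^+T$, hence has probability zero by the hypothesis $\ReachProbu{s_0}{C}{(S\setminus U)}{T}=0$, and therefore $\mathbb{P}_\mu(\alpha)=0$; such pairs $(\alpha,\beta)$ contribute nothing to the right-hand side, so your bijection between \emph{positive-probability} runs and pairs is exact. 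With that substitution the proof is correct in the lemma's full generality.
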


Let $\mathcal{M} = (S, A, \Delta, T, \rho)$ be a w\~pMDP with target states $T=\Set{ t_0, t_1, \cdots, t_n }$ such that $0=\rho(t_0)<\rho(t_1)<\cdots<\rho(t_n)$. We will now create a (non-weighted) \~pMDP $\mathcal{M'}=(S',A',\Delta', T')$, which has the same never-worse relations as the given w\~pMDP $\mathcal{M}$. We do so by adding a new action $a\not\in A$ which ensures that the ordering of weights of target states in $\mathcal{M}$ is the same as the ordering of rewards of those states in $\mathcal{N}$. Formally:
\[
\begin{array}{rl}
    S' &= S \uplus \Set{\fail,\fin}, \\
    A' &= A \uplus \Set{a}, \\
    (s,a',s')&\in\Delta' \text{ iff } (s,a',s')\in\Delta \text{ for $s,s'\in S$ and $a'\in A$},\\
    (t,a,\fin)&\in\Delta'\text{ for all }t\in\{t_1,\dots,t_n\},\\
    (t_i,a,t_{i-1})&\in\Delta'\text{ for all }i\in\{1,\dots,n\},\\
    (t_0,a,\fail)&\in\Delta',\\
    T'   &= \Set{\fail,\fin}.\\
\end{array}
\]
We use $a_i$ to denote the state-action pair $(t_i,a)$ for convenience. 

Without loss of generality we also assume that any state-action pair $v\in S_N$, that does not have a path to $\fin$, has an edge to $t_0$. This is so that we can avoid trapping loops with probability one. Given a  wpMDP $\mathcal{M}$, one can think of a valuation $\val\in\GraphPresVals$ as a family of full support probability distributions $\val=(\val_u\in\mathbb{D}(uE))_{u\in S_N}$, where $\mathbb{D}(U)$ is the set of all full support probability distributions over the set $U$. We will use the notation $\val^\mathcal{M}_u(s)$ to denote the probability on the edge from the state-action pair $u$ to the state $s$ induced by $\val$ in $\mathcal{M}$.

\begin{remark}
    In the case of (weighted or non-weighted) trivially parametric MDPs, the set of all valuations in $\GraphPresVals$ is exactly the set of all families of full support probability distributions $\mu=(\mu_u\in\mathbb{D}(uE))_{u\in S_N}$.
\end{remark}

\subsection{Proof of \cref{lem:strict-order-on-targets}}

It is easy to see that
$\ReachProb{t_0}{\MDPValSigmaN}{\fin}=\ReachProb{(t_0,a)}{\MDPValSigmaN}{\fin}=0$.
We can also see that
$\ReachProb{t_i}{\MDPValSigmaN}{\fin}=\ReachProb{(t_i,a)}{\MDPValSigmaN}{\fin}$
for $i\in\{1,\dots,n\}$. Now, we use \cref{lem:val_compare_child} and
the fact that $\RewardVal(\fin)=1$ is the maximum possible reward to deduce
that
$\ReachProb{t_{i-1}}{\MDPValSigmaN}{\fin}\leq\ReachProb{(t_i,a)}{\MDPValSigmaN}{\fin}\leq
1$ for $i\in\{1,\dots,n\}$. We will now argue that the inequalities are
in fact strcit: Suppose
$\ReachProb{(t_i,a)}{\MDPValSigmaN}{\fin}=1$ for some $i\in\{1,\dots,n\}$.
This would imply, by \cref{lem:bellman_eq}, that
$\ReachProb{t_{i-1}}{\MDPValSigmaN}{\fin}=\ReachProb{(t_i,a)}{\MDPValSigmaN}{\fin}=\ReachProb{(t_{i-1},a)}{\MDPValSigmaN}{\fin}=1$.
By repeatedly applying this reasoning, we get that
$\ReachProb{t_0}{\MDPValSigmaN}{\fin}=\ReachProb{(t_0,a)}{\MDPValSigmaN}{\fin}=1$,
which is a contradiction. Hence, the inequality must be strict. This concludes
the proof. \qed

\subsection{Proof of \cref{thm:w_to_b_trivparam}}

We show that the property: the ordering of weights of the target states in
$\mathcal{M}$ is the same as the ordering of $\RewardOptValN$ of those states
in $\mathcal{N}$, suffices to preserve all the never-worse relations of
$\mathcal{M}$ into $\mathcal{N}$.
The proof is split into two parts: ``if'' and ``only if'' respectively.

\subsection*{First case: $v\trianglelefteq W$ in $\mathcal{M}$ if $v\trianglelefteq W$ in $\mathcal{N}$}
We prove the contrapositive: Suppose $v\not\trianglelefteq W$ in the w\~pMDP $\mathcal{M}$, which means there is a valuation $\val\in\GraphPresVals$ for which $\RewardOptValM(v)>\RewardOptValM(w)$ for all $w\in W$. We will now construct a valuation $\val'\in\GraphPresValsN$, so that $\RewardOptValprimeN(v) > \RewardOptValprimeN(w)$ for all $w\in W$. The existence of such a $\val'$ will show that $v\not\trianglelefteq W$ in $\mathcal{N}$, which concludes the proof in this direction.

By choosing some $0 < \varepsilon < \frac{1}{\rho(t_n)}$, one can easily construct probability distributions $\valprime^\mathcal{N}_{a_i}$ for $a_i \in \Set{ a_0,a_1,\cdots,a_n }$ so that $\RewardOptValprimeN(t_i) =\RewardOptValprimeN(a_i)= \rho(t_i)\cdot \varepsilon$. Once we have constructed this $\valprime_{a_i}^\mathcal{N}$ for $a_i \in \Set{ a_0,a_1,\cdots,a_n }$, we can then define the valuation $\val'$ such that for all $(u, v) \in (S_N \times S) \cap E$:

\[\valprime_u^\mathcal{N}(v) = \begin{cases}
\val_u^\mathcal{M}(v)   & \text{if}\;u \in S_N, \\
\valprime_{a_i}^\mathcal{N}(v) & \text{if}\;u = a_i \text{ for }0\leq i\leq n.
\end{cases}\]

We take into consideration the fact that in $\mathcal{N}$, with the family of distributions $\valprime$, we have $\RewardOptValprimeN(t_i)=\rho(t_i) \cdot \varepsilon$ for $1\leq i\leq n$. We will first show that for any $u\in V$, we have $\RewardOptValprimeN(u)=\varepsilon\cdot\RewardOptValM(u)$. To prove this, we first note that in $\mathcal{N}$, any path from $u$ to $\fin$ will always go through a vertex in $T$. 
\begin{align*}
\RewardOptValprimeN(u)&=\max_{\sigma \in \Sigma^\mathcal{N}}\ReachProb{u}{\mathcal{N}[\valprime]}{\fin}_\mathcal{N}\\
&=\max_{\sigma \in \Sigma^\mathcal{N}}\ReachProb{u}{\MDPValprimeSigmaN}{\fin}\\
&=\max_{\sigma \in \Sigma^\mathcal{N}} \sum_{i=1}^n \RewardValprimeN(t_i) \cdot \ReachProb{u}{\MDPValprimeSigmaN}{t_i} && \text{using \cref{lem:essential_reach}}\\
&=\max_{\sigma \in \Sigma^\mathcal{N}} \sum_{i=1}^n \RewardOptValprimeN(t_i) \cdot \ReachProb{u}{\MDPValprimeSigmaN}{t_i} && \text{single choice for targets}\\
&=\max_{\sigma \in \Sigma^\mathcal{M}} \sum_{i=1}^n \RewardOptValprimeN(t_i) \cdot \ReachProb{u}{\MDPValprimeSigma}{t_i} && \text{by construction of $\mathcal{N}$}\\
&=\max_{\sigma \in \Sigma^\mathcal{M}} \sum_{i=1}^n \RewardOptValprimeN(t_i) \cdot \ReachProb{u}{\MDPValSigma}{t_i} && \text{by construction of $\valprime$}\\
&=\max_{\sigma \in \Sigma^\mathcal{M}} \sum_{i=1}^n \rho(t_i) \cdot \varepsilon \cdot \ReachProb{u}{\MDPValSigma}{t_i}\\
&=\varepsilon \cdot \max_{\sigma \in \Sigma^\mathcal{M}} \sum_{i=1}^n \rho(t_i) \cdot \ReachProb{u}{\MDPValSigma}{t_i}\\
&=\varepsilon \cdot \max_{\sigma \in \Sigma^\mathcal{M}} \RewardValM(u)\\
&=\varepsilon \cdot \RewardOptValM(u).
\end{align*}

Above, when we use \cref{lem:essential_reach}, the values of $S,U,T$ are $\mathcal{N}_\valprime^\sigma, T, \fin$ respectively.

Thus, if $\RewardOptValM(v) > \RewardOptValM(w)$ for all $w\in W$ in $\mathcal{M}$, we have
\begin{align*}
    \RewardOptValprimeN(v)&=\varepsilon\cdot\RewardOptValM(v)\\
    &>\varepsilon\cdot\RewardOptValM(w)\\
    &=\RewardOptValprimeN(w).
\end{align*}

\subsection*{Second case: $v\trianglelefteq W$ in $\mathcal{M}$ only if $v\trianglelefteq W$ in $\mathcal{N}$}

Again, we prove the contrapositive: Suppose $v\not\trianglelefteq W$ in the (non-weighted) \~pMDP $\mathcal{N}$. This means that there exists a family of distributions $\valprime$ such that $\RewardOptValprimeN(v)>\RewardOptValprimeN(w)$ for all $w\in W$.

Let $K=\Set{ k_0=0,k_1,\cdots,k_m }$ be the set of all the values obtained by all the vertices in $\mathcal{N}$ in increasing order (i.e. $k\in K$ if, and only if $\RewardOptValprimeN(u)=k$ for some $u\in V'$). We have $k_0=0$ since $\RewardOptValprimeN(fail)=0$. Let $\RewardOptValprimeN(v)=k_i$ where $1\leq i\leq m$ (where $v$ is the vertex in our assumption for the contrapositive). We partition $V'$ into $\Set{V'_0,V'_1,\cdots,V'_i}$ such that all the vertices with value $k_j<k_i$ are in a respective partition $V'_j$ and those with value $k_j\geq k_i$ are in $V'_i$:
\[ \begin{array}{rl}
    V'_i &= \Set{u \in V' | \RewardOptValprimeN(u) \geq k_i}, \\
    V'_{j < i} &= \Set{u \in V' | \RewardOptValprimeN(u) = k_j}.
\end{array} \]
This partition is illustrated in the following \cref{fig:partition}.

\begin{figure}[ht]
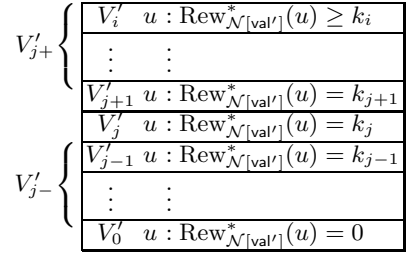

    \centering        \begin{tabular}{r@{\,}|cr@{\hskip3pt}l|}
        \cline{2-4}
        \ldelim\{{3}{*}[$V'_{j+}$] & $V'_i$ & $u$&$: \RewardOptValprimeN(u) \geq k_i$ \\
        \cline{2-4}
         & $\vdots$ & & $\vdots$\\
        \cline{2-4}
         & $V'_{j+1}$ & $u$&$: \RewardOptValprimeN(u) = k_{j+1}$\\
        \cline{2-4}
         & $V'_j$ & $u$&$: \RewardOptValprimeN(u) = k_{j}$ \\
        \cline{2-4}
        \ldelim\{{3}{*}[$V'_{j-}$] & $V'_{j-1}$ & $u$ & $:\RewardOptValprimeN(u) = k_{j-1}$ \\
        \cline{2-4}
         & $\vdots$ & & $\vdots$ \\
        \cline{2-4}
         & $V'_0$ & $u$ & $:\RewardOptValprimeN(u) = 0$ \\
        \cline{2-4}
    \end{tabular}
    \caption{Distribution of the nodes according to their values.}
    \label{fig:partition}
\end{figure}

We call each component of this partition a ``layer'' where $V'_0$ is the lowest layer and $V'_i$ is the topmost layer. For the sake of convenience, for $0\leq j\leq i$, let us define $V'_{j-}:=V'_0\cup V'_1\cup\cdots\cup V'_{j-1}$ and $V'_{j+}:=V'_{j+1}\cup V'_{j+2}\cup\cdots\cup V'_i$. We now state a few properties of this partition:
\begin{enumerate}
    \item \label[property]{prope:1} $v\in V'_{i}$ and $W\subseteq V'_{i-}$ (i.e. $v$ is in the top layer and no vertex in $W$ is in that layer).
    \item \label[property]{prope:2} $\Set{t_0,\cdots,t_p }\subseteq V'_{i-}$ and $\Set{t_{p+1},\cdots,t_n } \subseteq V'_{i}$ where $p$ is the unique number for which $\RewardOptValprimeN(t_p) < \RewardOptValprimeN(v) \leq \RewardOptValprimeN(t_{p+1})$.
    \item \label[property]{prope:3} $|V_j\cap T|\leq1$ for $0\leq j<i$ (i.e. each layer apart from $V_i$ can have at most one state in $T$).
    \item \label[property]{prope:4} For all $0\leq j<i$, for all $u\in V'_j\cap S'$, we have $uE\cap V'_{j+}=\emptyset$ (i.e. no state-action pair can be in a layer above the corresponding state).
    \item \label[property]{prope:5} For all $0\leq j<i$, for all $u\in V'_{j}\cap S_N'$, $uE\cap V'_{j+}\neq\emptyset$, iff $uE\cap V'_{j-}\neq\emptyset$ (i.e. a state-action pair can have an edge to a state in a layer above it iff it has one to a state in a layer below it).
    \item \label[property]{prope:6} For all $0\leq j\leq i$, for all $u\in V'_{j}\cap (S'\setminus T')$, we have $uE\cap V'_{j}\neq\emptyset$ (i.e. every non-target state must have an edge to a vertex in the same layer).
\end{enumerate}
The first property follows from the fact that $v$ is the vertex with the smallest value in the top layer and its value is strictly greater than that of each vertex in $W$. The second property partitions the states in $T$ (i.e. the target states in $\mathcal{M}$) into those in the topmost layer $V'_i$ and those not in it. The third property follows from the assumption that no two states in $T$ have the same weight. The final three follow from \cref{lem:val_compare_child}.

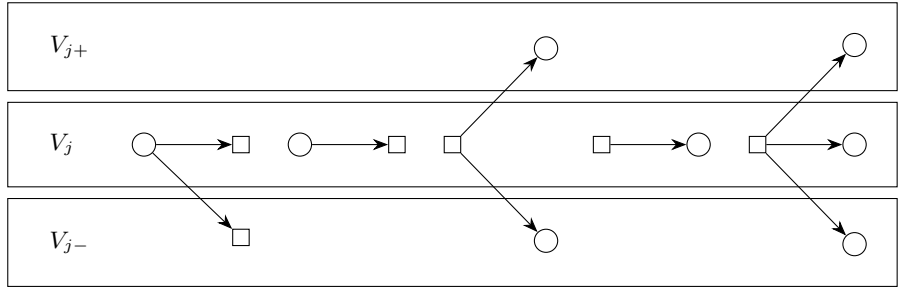
\begin{figure}[ht]
    \centering
    \begin{tikzpicture}[->]
            \tikzstyle{protagonist} = [shape=circle, draw]
            \tikzstyle{nature} = [shape=rectangle, draw]
            
            \node[protagonist] (case1_p1) {};
            \node[nature, right = of case1_p1] (case1_n1) {};
            \node[nature, below = of case1_n1] (case1_n2) {};
            \node[left = of case1_p1] (case1_invis1) {};
            \node[above = of case1_invis1] (case1_invis2) {};
            \node[below = of case1_invis1] (case1_invis3) {};
            
            \node[protagonist, right =.5 of case1_n1] (case2_p1) {};
            \node[nature, right = of case2_p1] (case2_n1) {};
            
            \node[nature, right =.5 of case2_n1] (case3_n1) {};
            \node[right = of case3_n1] (case3_invis) {};
            \node[protagonist, above = of case3_invis] (case3_p1) {};
            \node[protagonist, below = of case3_invis] (case3_p2) {};
            
            \node[nature, right =.5 of case3_invis] (case4_n1) {};
            \node[protagonist, right = of case4_n1] (case4_p1) {}; 
            
            \node[nature, right =.5 of case4_p1] (case5_n1) {};
            \node[protagonist, right = of case5_n1] (case5_p1) {};
            \node[protagonist, above = of case5_p1] (case5_p2) {};
            \node[protagonist, below = of case5_p1] (case5_p3) {};
            
            \node [draw, fit= (case1_invis2) (case3_p1) (case5_p2), inner sep=0.4cm, label={[xshift=1.2cm]left:$V_{j+}$}] (v_j_plus) {};
            
            \node [draw, fit= (case1_invis1) (case5_p1), inner sep=0.4cm, label={[xshift=1cm]left:$V_j$}] (v_j) {};
            
            \node [draw, fit= (case1_invis3) (case5_p3), inner sep=0.4cm, label={[xshift=1.2cm]left:$V_{j-}$}] (v_j_minus) {};
            
            \draw (case1_p1) edge [above] node {} (case1_n1);
            \draw (case1_p1) edge [above] node {} (case1_n2);
            
            \draw (case2_p1) edge [above] node {} (case2_n1);
            
            \draw (case3_n1) edge [above] node {} (case3_p1);
            \draw (case3_n1) edge [above] node {} (case3_p2);
            
            \draw (case4_n1) edge [above] node {} (case4_p1);
            
            \draw (case5_n1) edge [above] node {} (case5_p1);
            \draw (case5_n1) edge [above] node {} (case5_p2);
            \draw (case5_n1) edge [above] node {} (case5_p3);
        \end{tikzpicture}
    \caption{The only possibilities of edges enforced by the last 3 rules}
    \label{fig:partition_rules456}
\end{figure}

\begin{example}
\cref{fig:partition_rules456} shows the only possibilities of edges between layers that are enforced by the last three properties.
\end{example}

This partition on $\mathcal{N}$ induces one in $\mathcal{M}$. We call this partition $\Set{V_0,V_1,\cdots,V_i}$, where $V_j=V_j'\cap V$ and $V_{j+}$ and $V_{j-}$ are defined similarly for all $0\leq j\leq i$. Intuitively, we remove the new vertices that were added in the construction of $\mathcal{N}$. Note that some of the partitions might be empty after removing the vertices from $V'\setminus V$. 

All the six properties we stated above for the partition of $V'$ will also hold for this partition of $V$. This is because no vertices other than those in $T=\Set{ t_1,t_2,\cdots,t_n }$ had edges to the (new) vertices in $V'\setminus V$ that were removed and the vertices in $T$ have no outgoing edges in $\mathcal{M}$.

\begin{lemma}[From \cite{NWR_paper}]\label{lem:non-dec-path}
In a wpMDP $\mathcal{M} = (S,A,X,\delta,T,\rho)$, for any vertex $v\in V$ and a valuation $\val\in\GraphPresVals$, if a vertex $v\in V$ can reach some target state, there will be a path  from $v$ to a target vertex in which the values of $\RewardOptVal$ along the path are non-decreasing.
\end{lemma}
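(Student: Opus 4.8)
The plan is to build the desired path greedily, at each step moving to a successor whose optimal reward value is at least that of the current vertex, and then to argue separately that this walk can be steered into a target state rather than looping forever. The greedy step is justified directly by \cref{lem:val_compare_child}: at a state $u \in S$ we have $\RewardOptVal(u) = \max_{a} \RewardOptVal(u,a)$, so moving to a value-optimal state-action pair keeps the value unchanged; at a nature vertex $u \in S_N$ the value $\RewardOptVal(u)$ is a convex combination, with strictly positive coefficients since $\val$ is graph preserving, of the values of the vertices in $uE$, whence $\RewardOptVal(u) \le \max_{s' \in uE} \RewardOptVal(s')$ and moving to a maximal-value successor does not decrease the value. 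Thus every prefix we construct has non-decreasing values, and the entire difficulty is termination: we must guarantee we actually reach a vertex of $T$ and are not trapped in a cycle of constant value.

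To force termination I would examine the top of the value profile along non-decreasing paths. Let $M$ be the largest value $\RewardOptVal(u)$ attained by any vertex $u$ reachable from $v$ along a non-decreasing path (the maximum exists since $V$ is finite), and fix such a witness $u^\ast$ with $\RewardOptVal(u^\ast)=M$. Concatenating paths shows that every vertex reachable from $u^\ast$ along a non-decreasing path is also reachable from $v$ along one, hence has value $\le M$; since these continuations are non-decreasing and start at value $M$, they stay at value exactly $M$. Let $C$ be the set of value-$M$ vertices reachable from $u^\ast$ by such constant-$M$ paths. By maximality of $M$, no nature vertex in $C$ can have a successor of value $> M$, so its value is a positive convex combination of values $\le M$ that equals $M$, which forces \emph{all} its successors to have value $M$ and therefore lie in $C$. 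Consequently $C$ is closed under an optimal strategy $\sigma^\ast$: from a state in $C$ the optimal action stays at value $M$ inside $C$, and from a nature vertex in $C$ every successor lies in $C$.

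It then remains to show $C$ contains a target, which splits into two cases. If $M > 0$, consider the Markov chain induced by $\sigma^\ast$ and $\val$; started from $u^\ast$ it never leaves $C$, and were it never to reach $T$ then $\RewardOptVal(u^\ast)$ would be $0$ by \cref{lem:bellman_eq}, contradicting $M>0$; hence it reaches a target $t \in C$ at value $M$, and prepending the non-decreasing path $v \rightsquigarrow u^\ast$ to the constant-$M$ path $u^\ast \rightsquigarrow t$ yields the claim. If $M = 0$, then in particular $\RewardOptVal(v)=0$, and an immediate consequence of \cref{lem:bellman_eq} (a value-$0$ nature vertex has all successors of value $0$, and a value-$0$ state has all actions of value $0$) is that every vertex reachable from $v$ already has value $0$; the path to a target guaranteed by the hypothesis that $v$ can reach $T$ therefore stays at value $0$ and is trivially non-decreasing. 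I expect the main obstacle to be precisely this termination argument: the greedy non-decreasing walk can cycle within a constant-value layer, and ruling this out cleanly requires isolating the maximal-value set $C$ together with the observation that a vertex of positive value cannot, under an optimal strategy, be confined to a region avoiding $T$.
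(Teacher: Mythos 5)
Your proposal is correct, but there is a wrinkle in the comparison: the paper does not prove this lemma at all --- it is imported verbatim from \cite{NWR_paper} (note the ``From'' in the lemma header), and the appendix only \emph{uses} it in the proof of \cref{thm:w_to_b_trivparam}. So you have supplied a self-contained proof where the paper defers to a citation, and your argument is a valid one: the greedy step is exactly what \cref{lem:val_compare_child} licenses (states preserve value via an optimal action; nature vertices, whose value is a full-support convex combination of successor values under a graph-preserving valuation, cannot have all successors strictly below them), and your termination device --- the maximal value $M$ attainable along non-decreasing paths, the constant-$M$ component $C$, its closure under an optimal strategy, and the observation via \cref{lem:bellman_eq} that a positive-value vertex cannot be confined under an optimal strategy to a $T$-avoiding closed set --- is sound and is the genuinely nontrivial part, since a na\"ive greedy walk can indeed cycle forever in a constant-value layer. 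Two small points you rely on implicitly are worth making explicit. First, your $M=0$ case (a value-$0$ nature vertex has all successors of value $0$) needs all values to be non-negative; this holds in the paper's setting because $\rho(\fail)=0$ and, by \cref{rem:ordering_target_weights}, all target weights are non-negative, but the step would fail if negative weights were admitted. Second, the edge relation of $\mathcal{G}(\mathcal{M})$ omits the edge from a state $s$ to a pair $(s,a)$ whose only successor is $\fail$; your closure argument for $C$ therefore needs, when $M>0$, that the optimal pair chosen at a state of $C$ has a non-$\fail$ successor --- which follows since its value is $M>0$ --- and when $M=0$ the issue is moot because your second case never walks through such pairs. With those two sentences added, the proof is complete and matches the strength of the cited result.
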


By \cref{lem:non-dec-path} and the fact that $\RewardOptValprimeN(v)=k_i>0$, each vertex in $V'_i$ will have a path to $\fin$ that stays in $V'_i$. Since each path from a vertex in $V$ to $\fin$ must go through a vertex in $T$, we get that each vertex in $V'_i$ will have a path to some vertex in $T$ that stays in $V'_i$.

To show that $v\not\trianglelefteq W$ in $\mathcal{M}$, we create a valuation (i.e., a family of distributions) $\val=(\val_u\in\mathbb{D}(uE))_{u\in S_N}$ for which $\RewardOptValM(v)>\RewardOptValM(w)$ for all $w\in W$. 

Let $0 < \varepsilon < 1$ be small enough and $m$ be the number of vertices in $V$.
Let $O:v_1,v_2,\cdots,v_m$ be an ordering on these vertices that preserves the ordering in the partition (vertices in lower layers occur earlier in the ordering than those in the higher layers). Formally, $v_{i_1}\in V_{j_1}$ and $v_{i_2}\in V_{j_2}$ with $j_1<j_2$ implies $i_1<i_2$.

Let $u\in V_i\cap S_N$ (i.e. a state-action pair in the top layer). We look at the shortest paths from $u$ to a target state in $T$ that stay in $V_i$. Let us call this set of paths $P_u^T$. Let us define the length of all paths in $P_u^T$ to be $\ell_u$. We define $\val^\mathcal{M}_u(p)$ for such vertices $u \in V_i\cap S_N$ and $p \in uE$ as follows:
\[
\val^\mathcal{M}_u(p)=\begin{cases}
1-(|uE|-1)\varepsilon & \text{if $p$ is a successor of $u$ in a path in $P_u^T$ and $p$ is}\\
& \text{the first one in $O$ among all such vertices}\\
\varepsilon & \text{else if $p\in uE$.}
\end{cases}
\]

Informally, we can say that $\nu_u$ ensures that one shortest path from $u$ to some target vertex is taken with probability almost 1. Recall that $m$ is the number of vertices in $V$, $\Set{t_{p+1},\cdots,t_n}$ are the target vertices in the topmost layer ($V_i$) and $\Set{t_0,\cdots,t_{p}}$ are the target vertices in $V_{i-}$.

\begin{lemma}
\label{lem:lower-bound-on-top-layer}
The (partial) family of distributions $\nu$ defined above for vertices in $V_i$ ensures that for any vertex $u\in V_i$: \[\RewardOptValM(u) \geq (1 - m\varepsilon)^{\ell_u} \cdot \rho(t_{p+1}). \]  
\end{lemma}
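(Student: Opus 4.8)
The plan is to prove the bound by induction on the shortest-path length $\ell_u$, working throughout with the optimal value $\RewardOptValM$ and the two recurrences supplied by \cref{lem:bellman_eq,lem:val_compare_child}. I would first fix the standing hypotheses that make the statement meaningful: all target weights are non-negative, so every $\RewardOptValM$-value is non-negative, and $\varepsilon$ is chosen small enough (say $\varepsilon < 1/m$) that $1 - m\varepsilon > 0$ and each $\val_u$ is a genuine full-support distribution. The key structural input, already obtained above from \cref{lem:non-dec-path} and the fact that $\RewardOptValprimeN(v) = k_i > 0$, is that every vertex of the top layer $V_i$ has a path to a target that stays inside $V_i$; since the targets lying in $V_i$ are exactly $\set{t_{p+1}, \dots, t_n}$, whose smallest weight is $\rho(t_{p+1})$, the set $P_u^T$ is non-empty and $\ell_u$ is well defined.

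For the base case $\ell_u = 0$ the vertex $u$ is itself a target $t_j$ with $j \ge p+1$, so $\RewardOptValM(u) = \rho(t_j) \ge \rho(t_{p+1}) = (1-m\varepsilon)^0 \cdot \rho(t_{p+1})$. For the inductive step I would split on the type of $u$. If $u$ is a non-target state, a shortest path leaves it through some nature vertex $(u,a)$ with $\ell_{(u,a)} = \ell_u - 1$; \cref{lem:val_compare_child} gives $\RewardOptValM(u) = \max_{a'}\RewardOptValM(u,a') \ge \RewardOptValM(u,a)$, and the induction hypothesis together with $0 < 1 - m\varepsilon < 1$ yields $\RewardOptValM(u) \ge (1-m\varepsilon)^{\ell_u-1}\rho(t_{p+1}) \ge (1-m\varepsilon)^{\ell_u}\rho(t_{p+1})$.

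If instead $u$ is a nature vertex, let $p^*$ be the successor singled out by the definition of $\val_u$, namely the first vertex in the ordering $O$ among those successors of $u$ that lie on a path in $P_u^T$. By the Bellman equation for nature vertices (\cref{lem:bellman_eq}), $\RewardOptValM(u) = \sum_{s' \in uE} \val_u(s') \cdot \RewardOptValM(s')$, and discarding every non-negative term except the one for $p^*$ gives $\RewardOptValM(u) \ge \val_u(p^*)\cdot\RewardOptValM(p^*) = \bigl(1 - (|uE|-1)\varepsilon\bigr)\RewardOptValM(p^*)$. Since $uE \subseteq V$ we have $|uE|-1 \le m$, hence $1 - (|uE|-1)\varepsilon \ge 1 - m\varepsilon$; moreover $p^*$ lies in $V_i$ and is the first vertex on a shortest path, so $\ell_{p^*} = \ell_u - 1$. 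Applying the induction hypothesis to $p^*$ closes the step: $\RewardOptValM(u) \ge (1-m\varepsilon)(1-m\varepsilon)^{\ell_u-1}\rho(t_{p+1}) = (1-m\varepsilon)^{\ell_u}\rho(t_{p+1})$.

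The two recurrences and the arithmetic are routine; the point that needs care — and which I expect to be the main obstacle — is justifying that the designated successor $p^*$ genuinely remains in $V_i$ and has shortest-path length exactly $\ell_u - 1$. This is precisely where the earlier layer analysis pays off: because $P_u^T$ consists of shortest paths confined to $V_i$, any successor of $u$ appearing on such a path is again a vertex of $V_i$ admitting a shortest $V_i$-confined path to a target of length $\ell_u - 1$. Consequently the induction never leaves the top layer, all invoked values $\RewardOptValM(\cdot)$ refer to vertices of $V_i$, and $\rho(t_{p+1})$ stays a valid lower bound on the target ultimately reached.
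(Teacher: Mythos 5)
Your proof is correct and takes essentially the same route as the paper's: induction on the shortest $V_i$-confined path length $\ell_u$, lower-bounding the value of a nature vertex by the contribution of the designated successor of probability $1-(|uE|-1)\varepsilon \geq 1-m\varepsilon$, and invoking \cref{lem:bellman_eq} and \cref{lem:val_compare_child} exactly where the paper does. The only differences are cosmetic: you step through states and nature vertices one at a time with base case $\ell_u=0$ at the targets, whereas the paper inducts on nature vertices in steps of two with base case $\ell_u=1$ and handles states afterwards via \cref{lem:val_compare_child}; your explicit inequality $\RewardOptValM(u) \geq \val_u(p^*)\cdot\RewardOptValM(p^*)$, justified by non-negativity of values, is in fact slightly more careful than the equality the paper writes at the corresponding step.
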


\begin{proof}
We will prove this lemma for all $u\in V_i\cap S_N$ by induction on the length of paths $\ell_u$. The proof for vertices in $V_i\cap S$ will automatically follow from \cref{lem:val_compare_child}.

\textbf{Base case:} If $\ell_u=1$, then by definition there is a vertex $t\in T$ for which $\nu_u(t)=1-(|uE|-1)\varepsilon$. 
\begin{align*}
    \RewardOptValM(u)&\geq(1-(|uE|-1)\varepsilon) \cdot \rho(t) \\
        &\geq(1-m\varepsilon) \cdot \rho(t) \\
        &\geq(1-m\varepsilon) \cdot \rho(t_{p+1}). && \text{by \cref{lem:strict-order-on-targets} and \cref{prope:2}}
\end{align*}

\textbf{Induction step:} Now, suppose the lemma holds for all $u\in V_i\cap S_N$ for which $\ell_u<\ell$.
Let $\ell_u=\ell$ and let $u,u_1,u_2,\cdots,u_{\ell-1},t$ be a path in $P_u^T$, where $\nu_u(u_1)=1-(|uE|-1)\varepsilon$. We know that this path is in $V_i$, thus $u_2\in V_i\cap S_N$ and $\ell_{u_2}=\ell-2$. Hence, we have $\RewardNu(u_2)\geq(1-m\varepsilon)^{\ell-2}\cdot\rho(t_{p+1})$ so that
\begin{align*}
    \RewardOptValM(u)&=(1-(|uE|-1)\varepsilon) \cdot \RewardOptValM(u_1)\\
    &\geq(1-m\varepsilon) \cdot \RewardOptValM(u_1) \\
    &\geq(1-m\varepsilon) \cdot \RewardOptValM(u_2)&&\text{by \cref{lem:val_compare_child}}\\
    &\geq(1-m\varepsilon) \cdot (1-m\varepsilon)^{\ell-2}\cdot\rho(t_{p+1}) \\
    &=(1-m\varepsilon)^{\ell-1} \cdot \rho(t_{p+1})\\
    &>(1-m\varepsilon)^\ell \cdot \rho(t_{p+1}).
\end{align*}
This concludes the proof of the lemma.\qed
\end{proof}

The next lemma gives an upper bound on all vertices of a subset if a bound on the rewards of the vertices having outgoing edges outside the subset and the weights of the target states inside the subset is already known.
\begin{lemma}
\label{lem:leave-set}
In a wpMDP $\mathcal{M} = (S,A,X,\delta,T,\rho)$, for any subset of vertices $U\subseteq V$, if the values of all vertices $u\in U$ with $uE \setminus U\neq\emptyset$ and all vertices in $U\cap T$ is at most $x$, then the value of all vertices in $U$ is at most $x$.
\end{lemma}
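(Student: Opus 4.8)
The plan is to fix an arbitrary graph-preserving valuation $\val\in\GraphPresVals$ (recall that $\RewardOptVal$ depends on this fixed $\val$) and to bound $\RewardOptVal(u)$ for each $u\in U$ by tracing a value-non-decreasing path out of $u$ until it either escapes $U$ or hits a target. The key tool is \cref{lem:non-dec-path}: whenever a vertex can reach $T$, there is a path from it to some target vertex along which $\RewardOptVal$ is non-decreasing.

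Take any $u\in U$. If $u$ cannot reach $T$, then $\RewardOptVal(u)=0$ by the Bellman characterisation (\cref{lem:bellman_eq}, the $\boldsymbol{Z}$ case), which is at most $x$: all reward values are non-negative (as $0=\rho(t_0)\le\rho(t_i)$), so $x\ge 0$ in the applications of interest. Otherwise $u$ reaches $T$, so \cref{lem:non-dec-path} supplies a path $u=v_0,v_1,\dots,v_k=t$ with $t\in T$ and $\RewardOptVal(v_0)\le\RewardOptVal(v_1)\le\dots\le\RewardOptVal(v_k)$. I would then split on whether this path ever leaves $U$. If it does, let $j\ge 1$ be the least index with $v_j\notin U$; then $v_{j-1}\in U$ and $v_j\in v_{j-1}E\setminus U$, so $v_{j-1}$ is precisely one of the vertices with $v_{j-1}E\setminus U\neq\emptyset$ to which the hypothesis applies, giving $\RewardOptVal(v_{j-1})\le x$. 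By monotonicity along the path, $\RewardOptVal(u)=\RewardOptVal(v_0)\le\RewardOptVal(v_{j-1})\le x$. If instead the path stays inside $U$, its endpoint satisfies $t=v_k\in U\cap T$, so the hypothesis gives $\RewardOptVal(t)\le x$, and again monotonicity yields $\RewardOptVal(u)\le\RewardOptVal(v_k)\le x$. As $u\in U$ was arbitrary, every vertex of $U$ has value at most $x$.

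The step I would be most careful about is that the path supplied by \cref{lem:non-dec-path} lives in the bipartite graph $\mathcal{G}(\mathcal{M})$ and alternates between states and nature vertices, so both ``leaving $U$'' and the boundary condition $uE\setminus U\neq\emptyset$ must be read at the level of graph successors $uE$, exactly as $uE$ is defined. It is this uniform treatment of states and state-action pairs that makes the first-exit vertex $v_{j-1}$ satisfy the hypothesis irrespective of whether it is a state or a state-action pair, and that lets the two cases (escaping $U$ versus reaching a target inside $U$) cover all possibilities.
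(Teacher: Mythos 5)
Your proof is correct, but it takes a genuinely different route from the paper's. The paper argues via the Bellman characterisation (\cref{lem:bellman_eq}): it modifies the wpMDP by turning every vertex of $U$ with an edge leaving $U$, together with the targets in $U$, into target states of weight $x$; the resulting system of equations has a unique solution that dominates the original values, and since the modified $U$ has no edges leaving it and all its targets carry weight $x$, every value in $U$ is bounded by $x$. You instead apply \cref{lem:non-dec-path} pointwise: from each $u\in U$ you follow a value-non-decreasing path and invoke the hypothesis either at the first vertex $v_{j-1}$ with a successor outside $U$ or at the target reached inside $U$, transferring the bound back to $u$ by monotonicity along the path. Your case split is exhaustive (the trivial path covers $u\in U\cap T$, and vertices that cannot reach $T$ have value $0$ by the $\boldsymbol{Z}$-case of \cref{lem:bellman_eq}), and your observation that the first-exit vertex satisfies $v_{j-1}E\setminus U\neq\emptyset$ regardless of whether it is a state or a nature vertex is exactly right, since the hypothesis is stated uniformly over $V$. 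What each approach buys: the paper's equation-system argument needs no path-existence lemma and sets up machinery reused for the two-sided version in \cref{lem:bound_exits}, though its domination step (``it is easy to see that the values \dots will be at least \dots'') is only sketched; your argument is more elementary and localises the use of the hypothesis to a single boundary vertex, at the cost of importing \cref{lem:non-dec-path}. One shared caveat, which you correctly flag: both proofs implicitly require $x\geq 0$ to handle vertices of $U$ that cannot reach $T$ (their value is $0$); this is automatic whenever the boundary set in the hypothesis is non-empty, since all values are non-negative, and it holds in every application the paper makes of the lemma.
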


\begin{proof}
We first describe the intuition: Since each vertex in the set $U$ with an edge leaving $U$ has value at most $x$, we can modify the wpMDP so that all such vertices have value $x$ and are target states (with no outgoing edges). It is easy to see that the values of all the vertices in this modified wpMDP will be at least the values of those vertices in the original wpMDP. Now, each vertex in $U$ in this modified wpMDP will have weight at most $x$, since all the target states have values at most $x$ and there are no edges leaving $U$. Formally:

We know, using \cref{lem:bellman_eq}, that the following system of linear equations has a unique solution that describes the $\RewardOptVal$ values of the vertices of the wpMDP:
\[\RewardOptVal(v) = \begin{cases}
    \rho(v) & \text{if}\; v \in T \\
    0         & \text{if}\; v \in Z \\
    \mathrm{max}_{u \in vE} \, \RewardOptVal(u) & \text{if}\; v \in S \setminus (T \cup Z) \\
    \sum_{u \in vE} \delta(v, u)[\mathsf{val}] \cdot \RewardOptVal(u)  & \text{if}\; v \in S_N \setminus Z.
\end{cases}\]
where $Z$ is the set of all vertices which do not have a path to $T$. Since we have the upper bound of $x$ on the $\RewardOptVal$ values for all the vertices in $U$ that have an edge leaving $U$, we now consider the following system of equations for the vertices in $U$:
\[\RewardOptVal(v) = \begin{cases}
    x & \text{if}\; v \in (T \cap U) \text{ or }vE\setminus U\neq\emptyset \\
    0         & \text{if}\; v \in Z \\
    \mathrm{max}_{u \in vE} \, \RewardOptVal(u) & \text{if}\; v \in (S\cap U) \setminus (T \cup Z) \\
    \sum_{u \in vE} \delta(v, u)[\mathsf{val}] \cdot \RewardOptVal(u)  & \text{if}\; v \in (S_N\cap U) \setminus Z.
\end{cases}\]
If we make all the vertices in $( T\cap U) \cup \set {u\in U | uE\setminus U\neq\emptyset}$ target states with weight $x$, from \cref{lem:bellman_eq}, the values of the vertices in this modified wpMDP is the unique solution to the above system of equations (since no vertex in $U$ will now have a path outside $U$). Thus, we conclude that the upper bound on the vertices that can leave $U$ and the target states in $U$ implies an upper bound on all the vertices in $U$.
\end{proof}\qed

We will now define $\val$ for the vertices in $V_{i-}$. Let $u\in V_j\cap S_N$ for some $0\leq j<i$. We define the distribution for edges going out of $u$ as follows:
\[\val^\mathcal{M}_u(p)=
\begin{cases}
\frac{1}{|uE|}      & \text{if $uE\cap V_{j-}=\emptyset$ and $p\in uE$,}\\
1-(|uE|-1)\varepsilon & \text{else, if $p\in uE\cap V_{j-}$ and among all such}\\
                    & \text{vertices, $p$ is the first one in the ordering $O$,}\\
\varepsilon         & \text{else, if $p\in uE$.}
\end{cases}\]
Informally, we say that for $0<\varepsilon<1$ small enough, $\nu_u$ ensures that in any run, the next vertex will be in a lower layer with probability almost 1. This leads to the following lemma.

\begin{lemma}
\label{lem:upper-bound-on-lower-layers}
For all the vertices $u\in V_j$ with $0\leq j<i$, we have:
\[\RewardOptValM(u)\leq \rho(t_p)+jm\varepsilon \rho(t_n).\]
\end{lemma}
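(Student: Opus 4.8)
The plan is to fix the valuation $\val$ constructed just above and prove the stated bound by (strong) induction on the layer index $j$, writing $x_j:=\rho(t_p)+jm\varepsilon\rho(t_n)$ for the claimed bound on layer $V_j$. In the base case $j=0$ every vertex of $V_0$ has $\mathcal{N}$-value $0$ and can therefore only reach the weight-$0$ target $t_0$ in $\mathcal{M}$, so $\RewardOptValM(u)=0\le\rho(t_p)=x_0$.

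For the inductive step the decisive choice is to apply \cref{lem:leave-set} not to $V_j$ itself but to the entire lower part $U_j:=V_0\cup\cdots\cup V_j$, taking $x=x_j$. By \cref{prope:4} no state of a layer $V_{j'}$ with $j'\le j$ has a successor strictly above it, hence $uE\subseteq U_j$ for every state $u\in U_j$; thus the only vertices of $U_j$ that can leave $U_j$ are state-action pairs. I would then verify the two hypotheses of \cref{lem:leave-set}. For the targets: every target inside $U_j$ lies in $V_{i-}$, so by \cref{prope:2} it is one of $t_0,\dots,t_p$ and has weight at most $\rho(t_p)\le x_j$. For the boundary: a state-action pair $u\in V_{j'}$ (with $1\le j'\le j$) that leaves $U_j$ has a successor strictly above $V_j$, hence above its own layer, so by \cref{prope:5} it also has a successor below $V_{j'}$; by the definition of $\val$ it then sends mass $1-(|uE|-1)\varepsilon$ to such a strictly-lower-layer successor, whose value is at most $x_{j-1}$ by the induction hypothesis, and mass $\varepsilon$ to each of at most $m-1$ other successors, each of value at most the largest weight $\rho(t_n)$. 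Therefore $\RewardOptValM(u)\le x_{j-1}+m\varepsilon\rho(t_n)=x_j$. \cref{lem:leave-set} then yields $\RewardOptValM\le x_j$ on all of $U_j$, in particular on $V_j$, closing the induction.

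I expect the main obstacle to be exactly this choice of the set to which \cref{lem:leave-set} is applied, and the argument that it has no outgoing states. With the naive choice $U=V_j$ the set of vertices leaving $U$ contains states, and such a state may have an optimal action that stays inside $V_j$; bounding its value then amounts to solving the fixed-point system on the sub-MDP induced by $V_j$, which is precisely the quantity the lemma is meant to bound, so the argument becomes circular. Passing to $U_j=V_0\cup\cdots\cup V_j$ and invoking \cref{prope:4} removes all states from the boundary, leaving only state-action pairs --- the vertices on which $\val$ was deliberately defined to concentrate almost all probability on a strictly lower layer. The residual $\varepsilon$-mass that a boundary pair may send upward, summed over its at most $m$ successors and bounded by $\rho(t_n)$, is what produces the additive slack $m\varepsilon\rho(t_n)$ accrued per layer in $x_j$.
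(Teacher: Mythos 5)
Your proof is correct and follows essentially the same route as the paper's: induction on the layer index, a direct bound on the boundary state-action pairs via the mass $1-(|uE|-1)\varepsilon$ that $\val$ places on a strictly lower layer (whose existence is guaranteed by \cref{prope:5}), and an application of \cref{lem:leave-set} to the downward-closed set $V_{j-}\cup V_j$ to propagate the bound to all remaining vertices, with \cref{prope:4} ensuring no state lies on the boundary of that set. If anything, you are slightly more explicit than the paper, which applies \cref{lem:leave-set} to the same set $V_{j-}\cup V_j$ but leaves implicit the verification of its target-state hypothesis (your appeal to \cref{prope:2}) and the exclusion of states from the boundary.
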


\begin{proof}
We prove the result by induction on the layer of the vertex $u \in V_j$. During induction, we will first prove the result for the vertices in $S_N$, and then we will use \cref{prope:4} of the partition and \cref{lem:leave-set} to show that the bound also applies to the remaining vertices in $S$.

\textbf{Base case:} Let $u\in V_0$, $\RewardOptValM(u)=0 \leq \rho(t_p)$ since $\rho(t_0)=0$ and $T\cap V_0=\Set{t_0}$ (by \cref{lem:strict-order-on-targets} and the way we construct the partition) and the fact that there are no edges going out of the partition $V_i$ (using \cref{prope:4,prope:5} of the partition) since there are no layers below this one.

Now, let us assume that we have $\RewardOptVal(u)\leq \rho(t_p)+(j-1)m\varepsilon \rho(t_n)$ for $u\in V_{j-}$.\\

\textbf{Induction step:} Let $u\in V_j\cap S_N$. If $uE\cap V_{j+}\neq\emptyset$, there exists (by \cref{prope:5} of the partition) a vertex $q$ in $V_{j-}$ for which $\nu_u(q)=1-(|uE|-1)\varepsilon$. We thus have
\begin{align*}
    & \RewardOptValM(u) \\
    &=(1-(|uE|-1)\varepsilon)\RewardOptValM(q)\\
    &{} +\varepsilon\sum_{v\in uE \setminus \Set{p}}\RewardOptValM(v)\\ 
    &\leq \RewardOptValM(q)+\varepsilon\sum_{v\in uE \setminus \Set{p}}\RewardOptValM(v)\\
    &\leq \RewardOptValM(q)+(|uE|-1)\varepsilon \rho(t_n)&& \text{reward cannot exceed } \rho(t_n)\\
    &\leq \RewardOptValM(q)+m\varepsilon \rho(t_n)\\
    &\leq (\rho(t_p)+(j-1)m\varepsilon n)+m\varepsilon \rho(t_n)&& \text{by Induction Hypothesis}\\
    &= \RewardOptValM(t_p)+jm\varepsilon \rho(t_n).
\end{align*}

We look at the subset of vertices $V_{j-}\cup V_j$. In this subset, we have the upper bound $\RewardNu(t_p)+jm\varepsilon \rho(t_n)$ on all vertices in $V_{j-}$ (by Induction Hypothesis) and on all vertices $u\in V_j$ with $uE\cap V_{j+}\neq\emptyset$. In short, we have the bound on all vertices that can leave the set $V_{j-}\cup V_j$. We now use \cref{lem:leave-set} to conclude that all the vertices in $V_j$ must have values bounded by $\RewardNu(t_p)+jm\varepsilon \rho(t_n)$.
\qed
\end{proof}

We will now conclude our proof of \cref{thm:w_to_b_trivparam}. We know that
$\rho(t_p)<\rho(t_{p+1})$. Using this, together with
\cref{lem:lower-bound-on-top-layer,lem:upper-bound-on-lower-layers}, we can
choose $\varepsilon $ small enough so that for all $u\in V_{i-}$ and all
$u'\in V_i$, we have $\RewardOptValM(u)<\RewardOptValM(u')$. \qed

This reduction from W-TAs to non-weighted TAs gives us the following result.
Note it is slightly stronger than what is claimed in the paper. Indeed, it
essentially says that the NWR is hard even for weighted parametric Markov
chains.
\begin{corollary}
\label{cor:w_nwr_conp_complete}
Given a w\~pMDP $\mathcal{M} = (S,A,\Delta,T,\rho)$, a non-empty vertex set $W \subseteq V$ , and a vertex $v \in V$ , determining whether $v\trianglelefteq W$ is \textCoNP-complete even if $|uE| = 1$ for all $u \in S$.
\end{corollary}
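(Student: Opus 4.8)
The plan is to obtain both directions of the completeness claim by combining the NWR-preserving reduction of \cref{thm:w_to_b_trivparam} with the \coNP-completeness of the NWR for \~pMDPs established in \cite{NWR_paper}. Completeness decomposes into a \coNP upper bound, which I would prove for arbitrary w\~pMDPs, and a \coNP lower bound that must survive the restriction that $|uE| = 1$ for every (non-target) state, i.e.\ the case in which $\mathcal{M}$ is a parametric Markov chain.

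For membership, I would push an arbitrary w\~pMDP $\mathcal{M}$ through the construction of \cref{thm:w_to_b_trivparam}. After the log-space normalisation ensuring distinct target weights (\cref{rem:ordering_target_weights}), that construction runs in polynomial time, adds only $O(|T|)$ fresh vertices and edges, and by \cref{thm:w_to_b_trivparam} preserves $v \trianglelefteq W$ on the nose. Composing this polynomial-time, NWR-preserving map with the \coNP decision procedure for \~pMDPs from \cite{NWR_paper} places the w\~pMDP problem in \coNP. Since this bound is unconditional, it applies in particular to the restricted Markov-chain instances.

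For hardness under $|uE| = 1$, I would simply note that every \~pMDP \emph{is} a w\~pMDP, namely the one whose weight function satisfies $\rho(\fin) = 1$ and $\rho(\fail) = 0$. Le Roux and P\'erez show that deciding $v \trianglelefteq w$ for \~pMDPs is already \coNP-hard when the action set is a singleton---so that each non-target state has a unique outgoing choice and the model is essentially a Markov chain---and when $W = \{w\}$ is a singleton. Reading those hard instances verbatim as w\~pMDPs yields \coNP-hardness of the w\~pMDP problem even when $|uE| = 1$ for all states and $W$ is a singleton. This is exactly the claimed strengthening, and it is the reason the statement can be phrased as hardness for weighted parametric Markov chains.

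I do not expect a genuine obstacle: the restriction $|uE| = 1$ is accommodated not by reworking the reduction but by embedding the already-hard single-action \~pMDP instances, so no new gadget is needed. The only point requiring a word of care is how to read the restriction at target states, which are sinks with $uE = \emptyset$; the condition $|uE| = 1$ is thus understood for non-target states, matching precisely the single-action (Markov chain) hypothesis used in \cite{NWR_paper}.
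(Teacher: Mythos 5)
Your proof is correct and follows essentially the same route as the paper: \coNP{} membership via the NWR-preserving polynomial-time reduction of \cref{thm:w_to_b_trivparam} composed with the \coNP{} procedure of Le Roux and P\'erez, and hardness by reading their single-action, singleton-$W$ hard \~pMDP instances verbatim as w\~pMDPs with $\rho(\fin)=1$, $\rho(\fail)=0$. The only (cosmetic) difference is that the paper's one-line proof also invokes the fact that the reduction preserves $|uE|\leq 1$ for non-target states, whereas you correctly observe that this preservation is dispensable, since the upper bound is unconditional and the lower bound comes from the trivial embedding rather than from the reduction.
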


\begin{proof}
Follows directly from Le Roux and P\'{e}rez \cite[Theorem 4]{NWR_paper},
\cref{thm:w_to_b_trivparam} and the fact that in \cref{thm:w_to_b_trivparam},
if $uE\leq 1$ for $u \in S$ that is not a target vertex in
$\mathcal{M}$, then $uE\leq 1$ for such vertices in $\mathcal{N}$ too.\qed
\end{proof}

\section{Proof of \cref{thm:pNWR_coETR_complete}}

We first show a simple conversion from target arenas \cite{NWR_paper} to \~pMDPs so that the proof of \textCoNP completeness still works.

\subsection{From target arenas to trivially parametric MDPs}
A target arena will not be a \~pMDP only if there are multiple player vertices in the target arena which have an edge to the same nature vertex (one can think of player vertices as states and nature vertices as actions). This is not possible in MDPs since each action has exactly one incoming edge. Suppose there are player vertices $s_1,\dots,s_n$ all having an edge to a nature vertex $a$. We will now create an equivalent target arena with a new player vertex $s_a$ having only one outgoing edge to $a$ and $n$ new nature vertices $a_1,\dots,a_n$ such that $a_i$ has an edge from $a_i$ and an edge to $s_a$ for $0<i\leq n$. This is described in \cref{fig:TA_to_tpMDP}.

\begin{figure}[htbp]
\centering
\subfloat[A nature vertex with multiple incoming edges.\label{fig:TA_to_tpMDP_a}]{
\begin{tikzpicture}[shorten >=1pt,auto,node distance=.9 cm, scale = 0.7, transform shape]
        \tikzstyle{action} = [shape=rectangle, draw]

        \node[state](s1){$s_1$};
        \node[state](s2)[right=of s1]{$s_2$};
        \node[state](s3)[right=of s2]{$s_3$};
        \node[](dots)[right=of s3]{$\dots$};
        \node[state](sn)[right=of dots]{$s_n$};
        \node[action](a)[below=of s3]{$a$};
                
        \path[->] 
        (s1)   edge [above] node {} (a)
        (s2)   edge [above] node {} (a)
        (s3)   edge [above] node {} (a)
        (sn)   edge [above] node {} (a)
        ;
    \end{tikzpicture}
}
\subfloat[The modified arena.\label{fig:TA_to_tpMDP_b}] {
\begin{tikzpicture}[shorten >=1pt,auto,node distance=.9 cm, scale = 0.7, transform shape]
        \tikzstyle{action} = [shape=rectangle, draw]

        \node[state](s1){$s_1$};
        \node[state](s2)[right=of s1]{$s_2$};
        \node[state](s3)[right=of s2]{$s_3$};
        \node[](dots)[right=of s3]{$\dots$};
        \node[state](sn)[right=of dots]{$s_n$};
        \node[action](a1)[below=of s1]{$a_1$};
        \node[action](a2)[below=of s2]{$a_2$};
        \node[action](a3)[below=of s3]{$a_3$};
        \node[action](an)[below=of sn]{$a_n$};
        \node[state](sa)[below=of a3]{$s_a$};
        \node[action](a)[below=of sa]{$a$};
                
        \path[->] 
        (s1)   edge [above] node {} (a1)
        (s2)   edge [above] node {} (a2)
        (s3)   edge [above] node {} (a3)
        (sn)   edge [above] node {} (an)
        (a1)   edge [above] node {} (sa)
        (a2)   edge [above] node {} (sa)
        (a3)   edge [above] node {} (sa)
        (an)   edge [above] node {} (sa)
        (sa)   edge [above] node {} (a)
        ;
    \end{tikzpicture}
}
\caption{A description of how we convert a target arena with actions having multiple incoming edges into one where each action has exactly one incoming edge.} \label{fig:TA_to_tpMDP}
\end{figure}
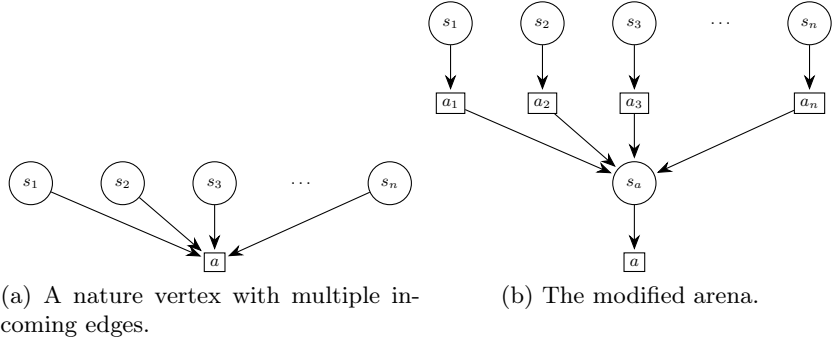

It is left to the reader to verify that this is indeed a \~pMDP and none of the reachability probabilities change from the construction described above.

Now, we move to proving \cref{thm:pNWR_coETR_complete}. We split the proof into two parts. First, we show that given a wpMDP
$\mathcal{M}=(S,A,X,\delta,T,\rho)$, a vertex $v\in V$, a non-empty subset
$W\subseteq V$, deciding whether $v\trianglelefteq W$ is \textCoETR-hard.
Next, we will give an \textETR{} formula equivalent to asking the complement of
the aforementioned question.

\subsection{Deciding the NWR is \textCoETR-hard}
\begin{definition}[\textsc{BCon4Ineq}] 
The \textnormal{bounded-conjunction-of-inequalities} problem asks: given a family of polynomials $f_1,\cdots,f_m$, of degree 4, does there exist some valuation $\mathsf{val}:X\rightarrow (0,1)$ such that $\bigwedge_{i=0}^mf_i(\mathsf{val})<0$?
\end{definition}

It is known that \textsc{BCon4Ineq} is \textETR-hard \cite[Lemma 5]{junges2021}.

To show \textCoETR-hardness, we first reduce \textsc{BCon4Ineq} to the problem of deciding whether there exists a graph-preserving valuation for a given $\mathcal{M}$. Let $f_1,\cdots,f_m$ be the polynomials. Let $k$ be a constant such that $k>\max_{1\leq i\leq m}\set{k_i}$ where $k_i$ is the sum of the absolute values of the coefficients of the polynomial $f_i$. We define new polynomials $f_1',\dots,f_m'$ where $f_i'=\frac{f_i}{k}$. Note that for any $\val:X\rightarrow (0,1)$, we have $\bigwedge_{i=0}^mf_i[\val]<0$ if and only if $\bigwedge_{i=0}^mf'_i[\val]<0$ and we have $-1<f'_i[\val]<1$ for any such $\val$. Let $X=\set{x_1,\dots,x_n}$ be the set of all variables occurring in $f_i$ for $1\leq i\leq m$.

We now define the pMDP $\mathcal{M}=(S,A,X,\delta,T)$ as follows:
\[
\begin{array}{rl}
    S &= \set{s1,\dots,s_m}\cup\set{s_1',\dots,s_n'}\cup\set{\fail,\fin}, \\
    A &= \set{n_1,\dots,n_m}\cup\set{v_1,\dots,v_n}, \\
    X &= \set{x_1,\dots,x_n}, \\
    \delta(s_i,n_i,\fail) &= -f_i'\text{ for }1\leq i\leq m, \\
    \delta(s_i,n_i,\fin) &= 1+f_i'\text{ for }1\leq i\leq m, \\
    \delta(s_i',v_i,\fail) &= x_i\text{ for }1\leq i\leq n, \\
    \delta(s_i',v_i,\fin) &= 1-x_i\text{ for }1\leq i\leq n, \\
    T'   &= \Set{\fail,\fin}.\\
\end{array}
\]
\begin{figure}[ht]
    \centering
        \begin{tikzpicture}[shorten >=1pt,auto,node distance=3 cm, scale = 0.7, transform shape]
            \tikzstyle{protagonist} = [shape=circle, draw]
            \tikzstyle{nature} = [shape=rectangle, draw]
                
            \node[nature](n1){$n_1$};
            \node[nature](n2)[right=of n1]{$n_2$};
            \node[nature](n3)[right=of n2]{$n_3$};
            \node[](dots)[right=of n3] {$\cdots$};
            \node[nature](nm)[right=of dots]{$n_m$};
            \node[nature](x1)[below=5cm of n1]{$v_1$};
            \node[nature](x2)[below=5cm of n2]{$v_2$};
            \node[nature](x3)[below=5cm of n3]{$v_3$};
            \node[](duts)[right=of x3]{$\cdots$};
            \node[nature](xn)[below=5cm of nm]{$v_n$};
            \node[state, accepting](fail)[below=2cm of n2]{$\fail$};
            \node[state, accepting](fin)[below=2cm of dots]{$\fin$};
            
            \path[->] (n1) edge [left,bend right] node [align=center] {$-f_1'$} (fail)
            (n2) edge [left,bend right] node [align=center] {$-f_2'$} (fail)
            (n3) edge [above] node [left,pos=0.7] {$-f_3'$} (fail)
            (nm) edge [above] node [above,pos=0.2] {$-f_m'$} (fail)
            (n1) edge [above] node [above,pos=0.2] {$1+f_1'$} (fin)
            (n2) edge [above] node [above,pos=0.2] {$1+f_2'$} (fin)
            (n3) edge [above] node [right,pos=0.2] {$1+f_3'$} (fin)
            (nm) edge [right,bend left] node [align=center] {$1+f_m'$} (fin)
            
            (x1) edge [left,bend left] node [align=center] {$x_1$} (fail)
            (x2) edge [left,bend left] node [align=center] {$x_2$} (fail)
            (x3) edge [above] node [left,pos=0.7] {$x_3$} (fail)
            (xn) edge [above] node [above,pos=0.2] {$x_n$} (fail)
            (x1) edge [above] node [below,pos=0.2] {$1-x_1$} (fin)
            (x2) edge [above] node [below,pos=0.2] {$1-x_2$} (fin)
            (x3) edge [above] node [right,pos=0.2] {$1-x_3$} (fin)
            (xn) edge [right,bend right] node [align=center] {$1-x_n$} (fin)
            ;
        \end{tikzpicture}
    \caption{The pMDP used to show \textCoETR-hardness of determining whether
    there exists a graph-preserving valuation for a pMDP.}
    \label{fig:coETR_pTA_appendix}
\end{figure}
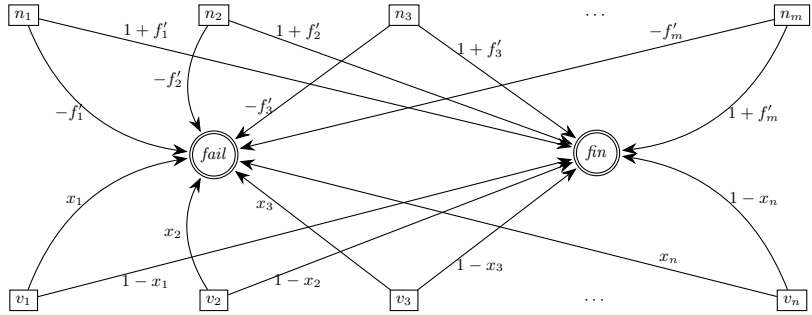
All the edges from the set $\Set{ n_1, \dots, n_m }$ to $\fail$ ensure that $0
< -f_i' < 1$ for $1 \leq i \leq m$ and the edges to $\fin$ ensure that
$0<1+f_i'<1$. Both of these give us that $-1<f_i'<1$. The edges from the set
$\Set{ v_1, \dots, v_n }$ to $\Set{ \fail, \fin }$ ensure that the variables
can only take values from the open set $(0,1)$. Thus, $\mathcal{M}$ has a
graph-preserving valuation if and only if there is some $\val : X \rightarrow
(0,1)$ for which $0 <  -f_i', 1+f_i' < 1$. But note that the range of $f_i$ is always
included in the interval $(-1,1)$ by our construction of $f_i'$ for
$1\leq i\leq m$. Thus, $\mathcal{M}$ has a graph-preserving valuation if and
only if there is some $\val : X \rightarrow (0,1)$ such that $f_i' < 0$, which
in turn holds if and only if $f_i<0$. To conclude, we observe that $\fin
\not\trianglelefteq \fail$ if and only if there is a graph-preserving
valuation for the pMDP we constructed. This concludes the proof. \qed

\subsection{Deciding the NWR is in \textCoETR}
The length of an ETR formula is defined as the number of symbols used to write
it. We assume binary encoding of the coefficients in $\delta$ and in the
constants in ETR formulas. Further, we write $|\delta|$ to denote the maximal
representation size of a polynomial in its image: the sum of $\log_2 c$ for
all of its coefficients $c$.

\begin{lemma}
\label{lem:copNWR_in_ETR}
Consider a wpMDP $\mathcal{M}=(S,A,X,\delta,T,\rho)$, a vertex $v\in V$ and a
subset of vertices $W\subseteq V$. We construct, in polynomial time, an ETR
sentence $\Phi_{\lnot \text{wpNWR}}$ of length $O(|S|^2|A|^2|\delta|)$ such that $\varphi$ is true
if and only if there is no graph preserving valuation $\mathsf{val}\in\GraphPresVals$ such that
for all $w\in W$ we have $\RewardOptValM(v) > \RewardOptValM(w)$.
\end{lemma}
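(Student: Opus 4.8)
The plan is to write down, as a single existentially quantified first-order sentence over the reals, the assertion $\neg(v \trianglelefteq W)$ --- namely, that there is a graph-preserving valuation under which $v$ strictly dominates every vertex of $W$. This is exactly the sentence the body calls a ``symbolic'' version of the linear programs for MDP values, and I would take it to be the promised $\Phi_{\lnot\text{wpNWR}}$: it is true precisely when a witnessing valuation exists, so that $v \trianglelefteq W$ holds if and only if $\Phi_{\lnot\text{wpNWR}}$ is \emph{false}, which places the NWR problem in \textCoETR. I would use two blocks of existentially quantified variables: the parameters $x_1,\dots,x_k$ of $X$, and one fresh variable $y_u$ for every vertex $u \in V$, intended to carry the value $\RewardOptValM(u)$.

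The quantifier-free matrix is then a conjunction of three groups of polynomial (in)equalities. First, the graph-preservation constraints, copied verbatim from the definition of $\GraphPresVals$: for each $(s,a,s')$ the inequality $\delta(s,a,s')[\val] \ge 0$; for each $(s,a)$ the equality $\sum_{s''}\delta(s,a,s'')[\val] = 1$; and for each edge present in $\mathcal{G}(\mathcal{M})$ (i.e. $\delta(s,a,s')\not\equiv 0$) the strict inequality $\delta(s,a,s')[\val] > 0$. Second, a symbolic copy of the Bellman system of \cref{lem:bellman_eq} forcing each $y_u$ to be the optimal value: $y_t = \rho(t)$ for $t\in T$; $y_u = 0$ for every $u$ in the set $\boldsymbol{Z}$ of vertices that cannot reach $T$; $y_{(s,a)} = \sum_{s' \in (s,a)E}\delta(s,a,s')[\val]\cdot y_{s'}$ for each nature vertex; and, for each remaining state $s$, the maximum over actions encoded by the gadget $\bigwedge_{a}\big(y_s \ge y_{(s,a)}\big) \wedge \bigvee_{a}\big(y_s = y_{(s,a)}\big)$. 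Third, the target constraint $\bigwedge_{w \in W}\big(y_v > y_w\big)$.

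Two points need care, and the second is where I expect the real work to lie. The set $\boldsymbol{Z}$ must be precomputed by a purely graph-theoretic reachability analysis on $\mathcal{G}(\mathcal{M})$ rather than left implicit: since a graph-preserving valuation keeps exactly the syntactically-nonzero edges, $\boldsymbol{Z}$ is identical for every $\val \in \GraphPresVals$, and fixing $y_u = 0$ on $\boldsymbol{Z}$ is precisely what makes the Bellman system have the \emph{unique} solution guaranteed by \cref{lem:bellman_eq}. This uniqueness is the crux of correctness: it forces any assignment satisfying the first two groups to set $y_u = \RewardOptValM(u)$ for all $u$, so the third group holds iff $\val$ witnesses $\neg(v\trianglelefteq W)$; conversely, any witnessing $\val$ extends to a satisfying assignment by reading off the optimal values. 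The hard part will be arguing this equivalence rigorously, in particular that the gadget $\bigwedge_a(y_s\ge y_{(s,a)}) \wedge \bigvee_a(y_s = y_{(s,a)})$ genuinely pins $y_s$ to $\max_a y_{(s,a)}$ once combined with the rest of the system, and that the strict-versus-nonstrict bookkeeping of the graph-preservation block is handled without gaps.

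Finally, for the length bound I would simply count symbols. Writing all transition polynomials costs $O(|S|^2|A|\,|\delta|)$ under binary encoding of coefficients; each nature-vertex equation is a sum of at most $|S|$ such terms and there are $O(|S|\,|A|)$ of them; the $\max$-gadgets contribute $O(|S|\,|A|)$ comparisons; and the graph-preservation block is dominated by the same transition-polynomial cost. Summing over all vertices and edges yields a polynomial bound of the stated order $O(|S|^2|A|^2|\delta|)$, and since every ingredient ($\boldsymbol{Z}$, the polynomials, the gadgets) is computed in polynomial time, the whole sentence is produced in time polynomial in the size of $\mathcal{M}$, as required.
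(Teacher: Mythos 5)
Your proposal is correct and follows essentially the same route as the paper's proof: the same existential blocks over $X$ and the value variables $y_u$, the same symbolic Bellman system with the precomputed set $\boldsymbol{Z}$ and the $\bigwedge_a(y_s\ge y_{(s,a)})\wedge\bigvee_a(y_s=y_{(s,a)})$ max-gadget whose uniqueness is discharged via \cref{lem:bellman_eq}, the same graph-preservation block, and the same strict-dominance conjunct $\bigwedge_{w\in W}(y_v>y_w)$. Your extra nonnegativity constraints are redundant (transitions with $\delta\equiv 0$ vanish under every valuation, and the paper's $\Phi_{gp}$ already forces strict positivity on the remaining edges) but harmless, and your symbol count substantiates the stated $O(|S|^2|A|^2|\delta|)$ bound that the paper asserts without detail.
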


The idea is to encode the concept of rewards into logical formulas using the equivalent notation described in \cref{lem:bellman_eq}.

For the encoding, for all vertices $v\in V$, we have a variable $y_v$ along with
all the variables in $X$. We define $Y= \Set{y_v \mid v\in V}$. We partition $V$
into 4 sets $\Set{T,Z,P,N}$ where $T$ is the set of target states, $Z$ is the
set of vertices which have no path to any target state, $P=S\setminus(T\cup
Z)$ and $N=S_N\setminus Z$ (recall that $S_N$ is the set of all state-action
pairs). This partition can be computed in polynomial time.
We first encode the values of vertices in ETR.

First, we encode the values of the target vertices and the vertices that do not have paths to target states:
\[
\Phi_T=\bigwedge_{v\in T}(y_v=\rho(v)) \quad \text{and} \quad \Phi_Z=\bigwedge_{v\in Z}(y_v=0).
\]
Next, we encode the values of the states using the fact that the value of a state is exactly the maximum of the values of its children:
\[
\Phi_P=\bigwedge_{v\in S}\bigg(\Big(\bigwedge_{u\in vE}y_v\geq y_u\Big)\wedge\Big(\bigvee_{u\in vE}y_v=y_u\Big)\bigg).
\]
Now we encode the values of the state-action pairs which is a convex combination of the values of its children and the combination is determined by $\delta$:
\[
\Phi_N=\bigwedge_{v\in S_N}\Big(y_v=\sum_{u\in vE}\delta(v,u)\cdot y_u\Big).
\]

Thus, we obtain the following formula that encodes the rewards of all the vertices in $\mathcal{M}$:
\[
\Phi_{\RewardOptVal}=\Phi_T\wedge\Phi_Z\wedge\Phi_P\wedge\Phi_N.
\]
The reader can easily verify, using \cref{lem:bellman_eq}, that $\Phi_{\RewardOptVal}$ has a unique solution which assigns the optimal reward of the vertex $v$ to $y_v$ for any $v\in V$.

Now, we will encode the graph preserving condition, which says that no outgoing edge from a state-action pair can have probability 0 and the sum of all the probabilities on edges going out of a vertex in $S_N$ must be 1.

\[
\Phi_{gp}=\Big(\bigwedge_{\substack{v\in S_N \\ u\in vE}}\delta(v,u)>0\Big)\wedge\Big(\bigwedge_{v\in S_N}\sum_{u\in vE}\delta(v,u)=1 \Big).
\]

The last thing to encode is the complement of the never-worse
relation:
\[
\Phi_{v\not\trianglelefteq W}=\bigwedge_{w\in W}y_v>y_w.
\]

Now, the final encoding of the complement of the NWR in ETR will be:
\[
\Phi_{\lnot \text{wpNWR}}=\exists X\exists Y:\Phi_{\RewardOptVal} \wedge\Phi_{gp}\wedge\Phi_{v\not\trianglelefteq W}.
\]

This gives us an encoding into \ETR{} for deciding whether $W\subseteq V$ is \textit{not} never-worse than $v\in V$ which shows that checking the never-worse relation is in \textCoETR{}.

\section{Proof of \cref{thm:equivalence_complexity}}
We first show that deciding equivalences in pMDPs and \~pMDPs is \textCoETR-complete and \textCoNP-complete respectively.
\subsection{Reducing the NWR decision problem to the equivalence decision problem}
Let $\mathcal{M}=(S,A,X,\delta,T)$ be any pMDP, and $u,v\in S$ any two states in the pMDP. We will now construct another pMDP $\mathcal{M'}=(S',A',X,\delta',T)$ where $S'=S\cup\set{p,q}$ and $A'=A\cup\set{b,c}$ such that $p\sim q$ in $\mathcal{M'}$ iff $u\trianglelefteq v$ in $\mathcal{M}$. We do this by defining $\delta'$ as:
\[\delta'(s,a,s')=
\begin{cases}
\delta(s,a,s')      & \text{if $s,s'\in S$ and $a\in A$,}\\
1 & \text{if $s\in\set{p,q}$, $a=b$ and $s'=v$,}\\
1 & \text{if $s=p$, $a=c$ and $s'=u$,}\\
0 & \text{otherwise.}
\end{cases}\]
This is described in \cref{fig:MDP_nwr_to_equivalence}.
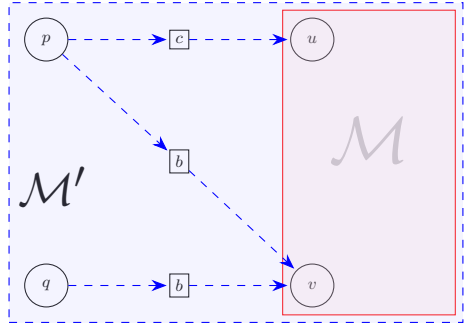
\begin{figure}[ht]
    \centering
    \begin{tikzpicture}[shorten >=1pt,auto,node distance=1.9 cm, scale = 0.7, transform shape]
    
        \node[state](p){$p$};
        \node[nature](pc)[right=of p]{$c$};
        \node[nature](pb)[below=of pc]{$b$};
        \node[nature](qb)[below=of pb]{$b$};
        \node[state](q)[left=of qb]{$q$};
        \node[state](u)[right=of pc]{$u$};
        \node[state](v)[right=of qb]{$v$};
        \node[](inv1)[right=of u]{};
        \node[](inv2)[right=of v]{};
        \node[](m')[above=1cm of q, text opacity=0.2]{\Huge{$\mathcal{M'}$}};
        \node       (X)    [draw=red, fit= (inv1) (v) (u), inner sep=0.1cm, fill=red!20, fill opacity=0.2] {\Huge{$\mathcal{M}$}};
        
        \node (Y) [draw=blue, fit= (inv1) (q) (p), inner sep=0.2cm, fill=blue!20, fill opacity=0.2, dashed] {};
        
        \path[->,color=blue,dashed] 
        (p)   edge [above] node [align=center] {} (pc)
              edge [above] node [align=center] {} (pb)
        (pc)  edge [above] node [align=center] {} (u)
        (pb)  edge [above] node [align=center] {} (v)
        (q)   edge [above] node [align=center] {} (qb)
        (qb)  edge [above] node [align=center] {} (v)
        ;
    \end{tikzpicture}
    \caption{The MDP $\mathcal{M'}$ where $p\sim q$ iff $u\trianglelefteq V$ in $\mathcal{M}.$}
    \label{fig:MDP_nwr_to_equivalence}
\end{figure}

The proof follows from the fact that $q\sim v$; and $v\sim p$ holds iff there is no valuation for which the reward of $u$ is greater than the reward of $v$, that is, $u\trianglelefteq v$.

\subsection{Finding equivalence classes in Markov chains is in \textbf{P}}

We first define a \textit{parametric Markov chain} $\mathcal{M}=(S,A,X,\delta,T)$ as pMDPs in which each state which is not a target has an edge to exactly one action. 

To simplify the notation, we will merge the states with their unique actions to get an equivalent model to pMCs. We will refer to these models as pMCs from here onward. That is, we now only have states with edges to other states and polynomials on these edges.

\begin{definition}[trivially parametric Markov chains]
    A \textnormal{trivially parametric Markov chain} $\mathcal{M}=(S,E,T)$ is a pMC where the variable on each edge is unique.
\end{definition}
Note that we omit $X$ the set of variables since the each edge has a unique variable, and hence, the only restriction on the probabilities is that the distribution induced on the outgoing edges from any state must be full support. So, we only care about the underlying graph of the chain, that is, the set of states $S$, the set of edges $E$ and the set of target states $T=\set{\fail,\fin}$.

Note that a similar reduction like the one shown for \~pMDPs will not work on Markov chains because the reduction uses states with multiple choices.

We provide a few preliminaries before giving the polynomial time algorithm for computing/deciding equivalences.
We will fix $\mathcal{M}=(S,E,T)$ to be the Markov chain we use throughout this section. Let the number of NWR equivalence classes in $\mathcal{M}$ be $n$. We have a few assumptions to make the proof simpler. Without loss of generality we assume that the only vertices equivalent to $\fin$ and $\fail$ are $\fin$ and $\fail$ respectively. Next, we assume that there are no self loops , since if a state $s$ has a self loop with probability $x$, we can multiply the probabilities on all the other outgoing edges from $s$ by $\frac{1}{1-x}$ (there must be at least one such edge since $s$ has a path to $\fin$) and the reader can easily verify that the new valuation is a graph preserving one and preserves all the reachability probabilities as the original one. We also assume that each state that is not a target has at least two outgoing edges, because if a state has no outgoing edges, then it does not have a path to $\fin$, and hence, it is collapsed with $\fail$ by the previous assumption; and if it has exactly one outgoing edge, it is equivalent to its unique successor and hence, we can collapse it with the unique successor. Note that the collapse might lead to new self loops but the process will terminate since there are only finitely many states.

We show that for any Markov chain, we can create an equivalent Markov chain such that in the new Markov chain, each state that is not a target has exactly 2 outgoing edges.
\begin{lemma}\label{lem:MC_2children}
    For any given Markov chain $\mathcal{M}$, we can create another Markov chain $\mathcal{N}=(S',E',T)$ such that $sE'=2$ for all $s\in S'\backslash T'$, $S\subseteq S'$, for all $\val\in\GraphPresVals$, there is $\val'\in\GraphPresValsN$ such that for any $s\in S$, $\ReachProbM{s}=\ReachProb{s}{\mathcal{N}^\valprime}{\fin}$ and vice versa.
\end{lemma}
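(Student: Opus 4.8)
The plan is to replace each non-target state $s$ that has $k > 2$ outgoing edges by a small gadget of binary-branching auxiliary states, leaving every state that already has exactly two successors untouched, and to apply this gadget simultaneously to all such states. Concretely, if $s$ has successors $s_1,\dots,s_k$, I would introduce fresh states $u_2,\dots,u_{k-1}$, set $u_1 := s$, and give each $u_i$ (for $1 \le i \le k-1$) exactly two outgoing edges: one to $s_i$ and one to $u_{i+1}$, with the convention that the second edge of $u_{k-1}$ leads to $s_k$ rather than to a further auxiliary node. This \emph{caterpillar} has $k-1$ internal nodes of out-degree exactly $2$ whose only exits land in $\{s_1,\dots,s_k\}$, so the resulting chain $\mathcal{N}=(S',E',T)$ satisfies $sE'=2$ for every non-target state, and $S \subseteq S'$ since the original states are retained and only $u_2,\dots,u_{k-1}$ are new.

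For the forward direction, given $\val \in \GraphPresVals$ assigning probabilities $p_1,\dots,p_k > 0$ (summing to $1$) to the edges out of $s$, I would define $\val'$ on the gadget by the conditional probabilities $q_i := p_i / \sum_{j \ge i} p_j$ on the edge $(u_i,s_i)$ and $1 - q_i$ on the edge $(u_i,u_{i+1})$. Since every $p_j$ is strictly positive, each denominator is positive and each $q_i$ lies strictly in $(0,1)$ (strictly below $1$ because $p_{i+1},\dots,p_k > 0$), so $\val'$ is a full-support, hence graph-preserving, valuation of $\mathcal{N}$. A short telescoping computation, using $1-q_j = \sum_{l \ge j+1}p_l / \sum_{l \ge j}p_l$, then shows that the probability of first leaving the gadget at $s_i$ equals $\prod_{j<i}(1-q_j)\cdot q_i = p_i$ for $i<k$ and $\prod_{j<k}(1-q_j) = p_k$, so the gadget reproduces exactly the original branching distribution over $\{s_1,\dots,s_k\}$.

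From there, preserving the reachability value of every original state follows by decomposing runs at the entries and exits of the gadgets. I would argue, via \cref{lem:essential_reach} with $U = \{s_1,\dots,s_k\}$ applied at each gadget, that $\mathbb{P}^{s}_{\mathcal{N}^{\val'}}[\Diamond\fin] = \sum_i p_i \cdot \mathbb{P}^{s_i}_{\mathcal{N}^{\val'}}[\Diamond\fin]$; since the auxiliary states have no incoming edges from outside their own gadget and carry no target weight, an induction over the finite set of original states shows the reachability probabilities agree with those of $\mathcal{M}$ for every $s \in S$. The converse is easier: any $\val' \in \GraphPresValsN$ yields positive exit probabilities $p_i$ via the same product formulas, and these automatically sum to $1$, defining a graph-preserving $\val \in \GraphPresVals$ with matching reachability. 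The main obstacle I anticipate is not any single calculation but the bookkeeping needed to make the gadget-exit decomposition rigorous when the gadgets are chained together: in particular, confirming that absorption into $\{s_1,\dots,s_k\}$ is almost sure (each internal node exits with positive probability and $u_{k-1}$ always exits), so that \cref{lem:essential_reach} applies cleanly and no probability mass is trapped inside the auxiliary states.
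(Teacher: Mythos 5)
Your proposal is correct and follows essentially the same route as the paper: the identical caterpillar gadget with $k-2$ fresh binary-branching states, the same conditional-probability valuation (your $q_i = p_i/\sum_{j\ge i}p_j$ is exactly the paper's $\val'(s_{i,k},s_i)=x_i/(1-\sum_{j<i}x_j)$), the same telescoping verification of the exit distribution, and the same product construction for the converse direction. Your extra care in invoking \cref{lem:essential_reach} and checking almost-sure absorption out of the gadget only makes explicit what the paper leaves to the reader.
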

\begin{proof}
    \cref{fig:MC_2edges} describes how we use new states to ensure that each state has exactly two successors.

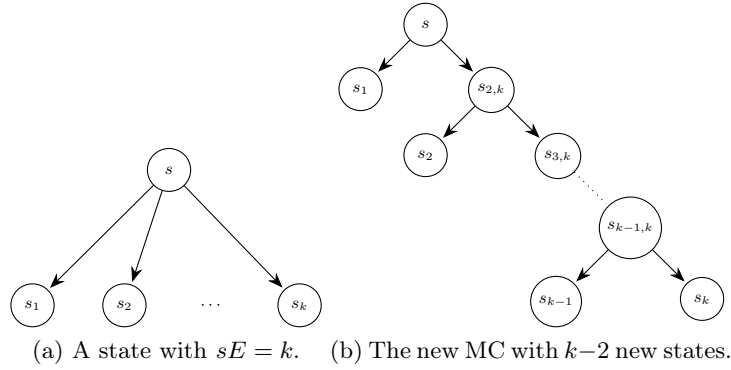
\begin{figure}[htbp]
\centering
\subfloat[A state with $sE=k$.\label{fig:MC_2edges_a}]{
\begin{tikzpicture}[shorten >=1pt,auto,node distance=.9 cm, scale = 0.7, transform shape]
        \tikzstyle{action} = [fill=black, shape=rectangle, draw]

        \node[state](s){$s$};
        \node[state](s1)[below left=2.8cm of s]{$s_1$};
        \node[state](s2)[right=of s1]{$s_2$};
        \node[](dots)[right=of s2]{$\cdots$};
        \node[state](sk)[right=of dots]{$s_k$};
                
        \path[->] 
        (s)   edge [above] node {} (s1)
                edge [above] node {} (s2)
                edge [above] node {} (sk)
        ;
    \end{tikzpicture}
}
\subfloat[The new MC with $k-2$ new states.\label{fig:MC_2edges_b}] {
\begin{tikzpicture}[shorten >=1pt,auto,node distance=.9 cm, scale = 0.7, transform shape]
        \tikzstyle{action} = [fill=black, shape=rectangle, draw]

        \node[state](s){$s$};
        \node[state](s1)[below left= of s]{$s_1$};
        \node[state](s2k)[below right=of s]{$s_{2,k}$};
        \node[state](s2)[below left= of s2k]{$s_2$};
        \node[state](s3k)[below right=of s2k]{$s_{3,k}$};
        \node[state](skminusonek)[below right=of s3k]{$s_{k-1,k}$};
        \node[state](skminusone)[below left=of skminusonek]{$s_{k-1}$};
        \node[state](sk)[below right=of skminusonek]{$s_k$};

        \path[->]
        (s) edge [above] node {} (s1)
            edge [above] node {} (s2k)
        (s2k) edge [above] node {} (s2)
            edge [above] node {} (s3k)
        (skminusonek) edge [above] node {} (skminusone)
            edge [above] node {} (sk)
         ;
        \draw[dotted] (s3k) to (skminusonek);
    \end{tikzpicture}
}
\caption{A description of how we convert a state with more than two outgoing edges to it having exactly two outgoing edges by adding $|sE|-2$ new states.} \label{fig:MC_2edges}
\end{figure}

Let $\val(s,s_i)=x_i$ for $1\leq i\leq k$. We have $x_i>0$ for all $i$ and $\sum_{1\leq i\leq k}s_i=1$ since $val\in\GraphPresVals$. We now define $\val'$ for all the edges in \cref{fig:MC_2edges_b}: $\val'(s_{i,k},s_i)=\frac{x_i}{1-\sum_{j<i}x_j}$ and $\val'(s_{i,k},s_{i+1},k)=\frac{1-\sum_{j\leq i}x_j}{1-\sum_{j<i}x_j}$ where we consider $s_{1,k}=s$ and $s_{k,k}=s_k$. Now the reader can easily verify  that the probabilities on the unique path from $s$ to $s_i$ is exactly $x_i$ for any $1\leq i\leq k$.

For the opposite direction, let $\val'\in\GraphPresValsN$. We define $\val(s,s_i)$ to be the product of the probabilities on the path $s,s_{2,k},\dots,s_{i,k},s_i$ induced by $\val'$. One can easily check that if $\val'$ was a graph preserving valuation, then so is $\val$.

Now, since we have shown that we can create valuations from a given valuation on either $\mathcal{M}$ or $\mathcal{N}$ that preserve the probabilities of reaching $s_i$ from $s$ in the subgraph, we can deduce that the probabilities of reaching $\fin$ from the new valuation must be the same as the old valuation.\qed 
\end{proof}

\begin{lemma}\label{lem:AS_implies_equivalence}
    If a vertex $s\in S$ can almost-surely reach an equivalence class $A$ then $s\in A$.
\end{lemma}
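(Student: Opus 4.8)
The plan is to prove that almost-sure reachability of the equivalence class $A$ forces $s$ to realise exactly the same reward value as every member of $A$ under \emph{all} graph-preserving valuations; since in a Markov chain the single-strategy value coincides with the probability of reaching $\fin$ and $s\sim a$ (for $a\in A$) is, by definition, equality of $\RewardOptVal(s)$ and $\RewardOptVal(a)$ for all $\val\in\GraphPresVals$, this yields $s\in A$. Recall that all members of one equivalence class share the same reward value for every valuation (immediate from $a\sim a'$). If $s\in A$ there is nothing to prove, so I would assume $s\notin A$ and work towards $\RewardOptVal(s)=\RewardOptVal(a)$ for every $\val$ and every $a\in A$.

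First I would dispose of the two extremal cases using the standing assumptions that $\fin$ and $\fail$ are each equivalent only to themselves. If $\fin\in A$ then $A=\{\fin\}$; here the chain used in the definition of $\xrightarrow{\mathrm{a.s.}}$ is the original chain, so $s\xrightarrow{\mathrm{a.s.}}\{\fin\}$ gives $\RewardOptVal(s)=1=\RewardOptVal(\fin)$ for all $\val$, whence $s\sim\fin$ and $s\in A$. Symmetrically, if $\fail\in A$ then $A=\{\fail\}$ and almost-sure reachability of $\fail$ forces $\RewardOptVal(s)=0$ for all $\val$, so $s\sim\fail$. The remaining case is $A\cap\{\fin,\fail\}=\emptyset$, which I treat next.

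The key step is to convert almost-sure reachability into the premise of the essential-reachability decomposition \cref{lem:essential_reach}. Since reaching $A$ depends only on the first visit to $A$, making $A$ absorbing does not change the probability of ever reaching it, so $s\xrightarrow{\mathrm{a.s.}}A$ yields $\mathbb{P}^{s}_{\mathcal{M}^{\val}}[\Diamond A]=1$ for every $\val\in\GraphPresVals$. Because $\fin\notin A$ and $\fin$ is a sink, any run reaching $\fin$ before visiting $A$ never visits $A$ at all, and hence lies in the null event $\{\neg\Diamond A\}$; thus $\mathbb{P}^{s}_{\mathcal{M}^{\val}}[(S\setminus A)\until\fin]=0$ for every $\val$. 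This is exactly the hypothesis of \cref{lem:essential_reach} with $U=A$ and reach target $\{\fin\}$, giving
\[
  \RewardOptVal(s)=\mathbb{P}^{s}_{\mathcal{M}^{\val}}[\Diamond\fin]
  =\sum_{a\in A}\mathbb{P}^{s}_{\mathcal{M}^{\val}}[(S\setminus A)\until a]\cdot\RewardOptVal(a).
\]
I would then use that all $a\in A$ share a common value $r=\RewardOptVal(a)$, factor it out, and note that $\sum_{a\in A}\mathbb{P}^{s}_{\mathcal{M}^{\val}}[(S\setminus A)\until a]=\mathbb{P}^{s}_{\mathcal{M}^{\val}}[\Diamond A]=1$ (the first-entry decomposition of $\Diamond A$, valid since $s\notin A$). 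Hence $\RewardOptVal(s)=r=\RewardOptVal(a)$ for every $\val$ and every $a\in A$, so $s\sim a$ and $s\in A$.

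I expect the main obstacle to be the bookkeeping in the key step: justifying cleanly that the definition of $\xrightarrow{\mathrm{a.s.}}$, which is phrased through value-$1$ states of a chain in which $A\cup\{\fail\}$ are turned into targets, simultaneously delivers $\mathbb{P}^{s}_{\mathcal{M}^{\val}}[\Diamond A]=1$ and the until-condition $\mathbb{P}^{s}_{\mathcal{M}^{\val}}[(S\setminus A)\until\fin]=0$, and that both hold \emph{uniformly} over all graph-preserving valuations, so that the resulting equality $\RewardOptVal(s)=\RewardOptVal(a)$ is a genuine NWR-equivalence rather than an equality at a single valuation. The extremal cases and the first-entry decomposition are routine once the assumption on contracted extremal states is invoked.
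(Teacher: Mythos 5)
Your proof is correct and takes essentially the same route as the paper's: both turn almost-sure reachability of $A$ into the hypothesis of \cref{lem:essential_reach} (with $U=A$ and target $\fin$), decompose $\ReachProbM{s}$ over the first entry into $A$, factor out the common class value, and conclude $s\sim a$ for $a\in A$ since the valuation $\val\in\GraphPresVals$ was arbitrary. Your separate handling of the cases $\fin\in A$ and $\fail\in A$, and the explicit first-entry identity $\sum_{a\in A}\ReachProbu{s}{M^{\val}}{(S\setminus A)}{a}=1$, merely spell out details the paper's proof leaves implicit rather than constituting a different argument.
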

\begin{proof}
    Since $\mathcal{M}$ is a Markov chain, $s$ almost surely reaching $A$ implies that $A$ must be essential for $s$. Let $v\in A$ be any state in $A$ and $\val\in\GraphPresVals$. Using \cref{lem:essential_reach}, we get \begin{align*}
        \ReachProbM{s}&=\sum_{u\in A}\ReachProbu{s}{M^{\val}}{(S \setminus A)}{u}\cdot \ReachProbM{u},\\
        &=\ReachProbM{v}\cdot\sum_{u\in A}\ReachProbu{s}{M^{\val}}{(S \setminus A)}{u},\\
        &=\ReachProbM{v}.
    \end{align*}
    Since $\val$ was chosen arbitrarily, this implies that $s$ is equivalent to $v$ and hence, in the equivalence class $A$.\qed
\end{proof}

\begin{lemma}\label{lem:strict_bound_01}
    For all states $s\not\in T$ and all valuations $\val\in\GraphPresVals$, we have that $0<\ReachProbM{s}<1$.
\end{lemma}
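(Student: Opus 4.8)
The plan is to prove the two strict inequalities separately. The lower bound is essentially immediate from graph-preservation, while the upper bound relies on \cref{lem:AS_implies_equivalence} together with the standing assumption (made at the start of this section) that the only state equivalent to $\fin$ is $\fin$ itself.

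For the lower bound $0 < \ReachProbM{s}$, recall that by our standing assumptions every non-target state of $\mathcal{M}$ has a path to $\fin$ in the underlying graph: a state with no such path would have value $0$ under every valuation, hence be NWR-equivalent to $\fail$, and would therefore already have been collapsed into $\fail$. Fix any such path $s = s_0, s_1, \dots, s_k = \fin$. Since $\val \in \GraphPresVals$ places strictly positive probability on every edge of the graph, the product of the probabilities along the edges of this path is strictly positive, and it is a lower bound on the probability of reaching $\fin$ from $s$. Hence $\ReachProbM{s} > 0$.

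For the upper bound $\ReachProbM{s} < 1$, I argue by contradiction: suppose $\ReachProbM{s} = 1$ for this particular $\val$. The key point is that almost-sure reachability of $\fin$ in a graph-preserving Markov chain is a purely structural property. Indeed, since every non-target state has a path to $\fin$, no set of non-target states can be closed under the transition relation, so the underlying graph has no bottom strongly connected component among the non-target states; its only such components are the sinks $\Set{\fin}$ and $\Set{\fail}$. Consequently every run almost surely reaches $\Set{\fin, \fail}$, and $\ReachProbM{s} = 1$ holds if and only if $\fail$ is unreachable from $s$ in the graph (as all edges carry positive probability, reachability of $\fail$ is equivalent to reaching it with positive probability). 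Since this condition depends only on the graph and not on the concrete probabilities, $\mathbb{P}^{s}_{\mathcal{M}^{\val''}}[\Diamond \fin] = 1$ for \emph{every} $\val'' \in \GraphPresVals$; that is, $s$ is a value-$1$ state and hence almost-surely reaches $\Set{\fin}$. By \cref{lem:AS_implies_equivalence}, $s$ then lies in the equivalence class of $\fin$, which by assumption equals $\Set{\fin}$, contradicting $s \notin T$. Therefore $\ReachProbM{s} < 1$.

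I expect the main obstacle to be the justification that $\ReachProbM{s} = 1$ for a single graph-preserving valuation already forces $s$ to be a value-$1$ state (almost-sure reachability under \emph{all} valuations). This is exactly where the bottom-strongly-connected-component structure induced by the assumption ``every non-target state can reach $\fin$'' is essential, as it is what reduces the analytic condition $\ReachProbM{s} = 1$ to the valuation-independent graph condition ``$\fail$ is unreachable from $s$'' and thereby lets me invoke \cref{lem:AS_implies_equivalence}.
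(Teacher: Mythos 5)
Your proof is correct and takes essentially the same route as the paper's: the lower bound comes from a path to $\fin$ carrying positive probability under any graph-preserving valuation, and the upper bound from the fact that absence of a path to $\fail$ would make $s$ almost-surely reach $\fin$ under \emph{all} graph-preserving valuations, whence $s \sim \fin$ by \cref{lem:AS_implies_equivalence}, contradicting the standing assumption that $\fin$ is alone in its equivalence class. Your contrapositive phrasing, with the bottom-strongly-connected-component justification, merely spells out the valuation-independence step that the paper's one-line argument leaves implicit.
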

\begin{proof}
    We have the assumption that $\fin$ and $\fail$ are the only states in their respective equivalence classes. This implies that each other vertex must have a path to $\fin$ (otherwise they would always have value $0$) and a path to $\fail$ (otherwise they would almost surely reach $\fin$ and, hence, in the same equivalence class). Since $\val$ is graph preserving, these paths to $\fin$ and $\fail$ must have nonzero probabilities and therefore $\ReachProbM{}$ can neither be $0$ nor $1$.\qed
\end{proof}

We will now give two proofs. The first one is a short and simple proof and the second one is a long but beautiful proof.

\subsubsection{First proof}

If we assume that any equivalence class has exactly one state that has edges going out from the equivalence class, then all the other states in the equivalence class must almost surely reach that state. This, along with the fact that almost surely reaching implies equivalence in states gives us an easy algorithm to find equivalence classes in a Markov chain by checking the almost sure reachability for all pairs of states in $S$.

A state $s$ is an \textit{exit} for a set $U\subseteq S$ if $s\in U$ and $sE\not\subseteq U$.

\begin{lemma}\label{lem:exit_children}
    If $u$ exits an equivalence class $U$, then both of the successors of $u$ must lie outside of $U$, that is, $uE\cap U=\emptyset$.
\end{lemma}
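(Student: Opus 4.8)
The plan is to reduce to the two-successor normal form of \cref{lem:MC_2children} and then argue by contradiction using the first-step (Bellman) equation for Markov chains. By \cref{lem:MC_2children} I may assume every non-target state---in particular the exit $u$, which is non-target since it has an outgoing edge---has exactly two outgoing edges, say to $v_1$ and $v_2$. If both edges point to the same state then $uE$ consists of a single state, which lies outside $U$ because $u$ exits $U$, and the claim is immediate; so I would assume $v_1 \neq v_2$. Since $u$ is an exit, at least one successor is outside $U$, and I would relabel so that $v_2 \notin U$. The goal is then to derive a contradiction from the assumption $v_1 \in U$.

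The key observation I would exploit is that $u$ and $v_1$ both lie in the equivalence class $U$, so $u \sim v_1$, i.e. $\ReachProbM{u} = \ReachProbM{v_1}$ for every $\val \in \GraphPresVals$. Fixing such a $\val$ and writing $p \in (0,1)$ for the probability $\val$ places on the edge to $v_1$ (so the edge to $v_2$ carries $1-p$), the first-step equation of \cref{lem:bellman_eq}, specialised to a Markov chain, reads
\[
  \ReachProbM{u} = p\cdot \ReachProbM{v_1} + (1-p)\cdot \ReachProbM{v_2}.
\]
Substituting $\ReachProbM{u} = \ReachProbM{v_1}$ and rearranging would give $(1-p)\big(\ReachProbM{v_1} - \ReachProbM{v_2}\big) = 0$, and since $1-p > 0$ I conclude $\ReachProbM{v_1} = \ReachProbM{v_2}$. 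As $\val$ was arbitrary, this equality holds for all graph-preserving valuations, so $v_1 \sim v_2$ and hence $v_2 \in U$, contradicting $v_2 \notin U$. Thus $v_1 \notin U$ as well, that is, $uE \cap U = \emptyset$.

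The one genuinely load-bearing ingredient is the two-successor normal form. With $k \geq 3$ successors $v_1, \dots, v_k$ and edge probabilities $p_1, \dots, p_k$ summing to $1$, the same substitution would yield only the single identity $\sum_{j=2}^{k} p_j\big(\ReachProbM{v_1} - \ReachProbM{v_j}\big) = 0$, whose terms may cancel with opposite signs, so one cannot isolate an individual equivalence $v_1 \sim v_j$; this is exactly why reducing to two successors is needed, as it collapses the sum to a single term and makes the cancellation exact. The only other delicate point is the strict inequality $0 < p < 1$, which is precisely what graph-preservation of $\val$ provides and what licenses dividing out the factor $1-p$; the absence of self-loops (part of the standing assumptions on $\mathcal{M}$) ensures $v_1, v_2 \neq u$ so that the first-step equation genuinely expresses $\ReachProbM{u}$ through the successor values.
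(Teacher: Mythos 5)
Your proof is correct and matches the paper's own argument: both work under the two-successor normal form of \cref{lem:MC_2children} and derive a contradiction from the one-step equation $\ReachProbM{u} = p\cdot\ReachProbM{v_1} + (1-p)\cdot\ReachProbM{v_2}$, substituting the equivalence $u \sim v_1$ and dividing out a strictly positive factor (guaranteed by graph-preservation) to force the outside successor into $U$. The only cosmetic difference is bookkeeping --- the paper cancels $\val(u,a)$ to conclude $\ReachProbM{u}=\ReachProbM{a}$ directly, while you cancel $1-p$ to get $v_1 \sim v_2$ --- which is the same computation up to relabelling.
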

\begin{proof}
    We know that $u$ must have a successor outside of $U$, say $a$. We assume, for the sake of contradiction, that the other successor, say $b$,  is inside $U$. Now, for any given valuation $\val\in\GraphPresVals$, we have \begin{align*}
        \ReachProbM{u}&=\val(u,a)\ReachProbM{a}+\val(u,b)\ReachProbM{b}\\
        &=\val(u,a)\ReachProbM{a}+(1-\val(u,a))\ReachProbM{b}&\text{since }\val\in\GraphPresVals\text{ and }|uE|=2\\
        &=\val(u,a)\ReachProbM{a}+(1-\val(u,a))\ReachProbM{u}&\text{since }u\text{ and }b\text{ are equivalent}\\
        &=\ReachProbM{a}.
    \end{align*}
    Since $\val$ was chosen arbitrarily, we get that $a$ must be in $U$ which is a contradiction.\qed
\end{proof}

\begin{theorem}
        Any non-trivial equivalence class (i.e., one that does not contain $\fin$ or $\fail$) has exactly one exit.
\end{theorem}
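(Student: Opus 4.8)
The plan is to prove the two inclusions separately: a non-trivial equivalence class $U$ has \emph{at least} one exit and \emph{at most} one exit. Throughout I would invoke \cref{lem:MC_2children} to assume every non-target state, in particular every state of $U$, has exactly two successors, and I write $f(\cdot)=\ReachProbM{\cdot}$ for the reachability value under a fixed valuation $\val$. For the lower bound, first I would fix any $s\in U$ and any $\val\in\GraphPresVals$. Since $U$ is non-trivial it contains neither $\fin$ nor $\fail$, and by \cref{lem:strict_bound_01} we have $f(s)>0$, so there is a finite path from $s$ to $\fin$; as $\fin\notin U$ this path leaves $U$, and the last vertex on it that still lies in $U$ is an exit. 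Hence $U$ has at least one exit.

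For the upper bound I would argue by contradiction, assuming two distinct exits $u_1\neq u_2$. By \cref{lem:exit_children} both successors of each exit lie outside $U$, so write $u_1E=\{a_1,b_1\}$ and $u_2E=\{a_2,b_2\}$ with all four vertices outside $U$. The first observation is that $a_1\not\sim b_1$: otherwise $f(a_1)=f(b_1)$ for \emph{every} valuation, whence $f(u_1)=\val(u_1,a_1)f(a_1)+\val(u_1,b_1)f(b_1)=f(a_1)$ for every valuation, forcing $u_1\sim a_1$ and thus $a_1\in U$, a contradiction. Symmetrically $a_2\not\sim b_2$.

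The heart of the proof is a sensitivity computation. Regarding the single parameter $\theta=\val(u_1,a_1)$ (so $\val(u_1,b_1)=1-\theta$) as free in $(0,1)$ while freezing every other edge-probability keeps the valuation graph-preserving, and differentiating the linear system $f=Pf$ (with $P$ the induced sub-stochastic matrix on non-target states, $f(\fin)=1$, $f(\fail)=0$) yields $(I-P)\tfrac{df}{d\theta}=P'f$, where $P'$ is non-zero only in row $u_1$. Since $(I-P)^{-1}=N$ is the fundamental matrix, with $N(s,t)$ the expected number of visits to $t$ from $s$, this gives $\tfrac{df(s)}{d\theta}=N(s,u_1)\,(f(a_1)-f(b_1))$ for every vertex $s$. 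Because $u_1\sim u_2$ forces $f(u_1)=f(u_2)$ identically in the parameters, their $\theta$-derivatives coincide, so $N(u_1,u_1)(f(a_1)-f(b_1))=N(u_2,u_1)(f(a_1)-f(b_1))$. At any valuation where $f(a_1)\neq f(b_1)$ this gives $N(u_1,u_1)=N(u_2,u_1)$; since $N(u_2,u_1)=\mathbb{P}^{u_2}_{\mathcal{M}^\val}[\Diamond u_1]\cdot N(u_1,u_1)$ and $N(u_1,u_1)\geq 1$, I conclude $\mathbb{P}^{u_2}_{\mathcal{M}^\val}[\Diamond u_1]=1$.

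To finish I would exploit that $a_1\not\sim b_1$ and $a_2\not\sim b_2$ make $f(a_1)-f(b_1)$ and $f(a_2)-f(b_2)$ non-zero rational functions of the parameters, so both are non-zero at some common graph-preserving valuation $\val^*$ (the valuation space is a non-empty open set, hence not covered by the two proper zero-sets). Running the sensitivity argument at $\val^*$ with $\theta=\val^*(u_1,a_1)$ gives $\mathbb{P}^{u_2}_{\mathcal{M}^{\val^*}}[\Diamond u_1]=1$, and the symmetric run with $\val^*(u_2,a_2)$ gives $\mathbb{P}^{u_1}_{\mathcal{M}^{\val^*}}[\Diamond u_2]=1$. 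Thus at $\val^*$ the chain almost surely oscillates between $u_1$ and $u_2$, visiting $u_1$ infinitely often and never being absorbed; this contradicts the fact that, under the standing assumptions (no self-loops, at least two successors per non-target state, and each such state having a path to $\fin$), the only sinks are $\fin$ and $\fail$, so every state almost surely reaches $T=\{\fin,\fail\}$. Combined with the first part, $U$ has exactly one exit. The hard part will be the sensitivity identity $\tfrac{df(s)}{d\theta}=N(s,u_1)(f(a_1)-f(b_1))$ together with the genericity argument selecting $\val^*$; the remaining steps are bookkeeping on top of \cref{lem:exit_children}, \cref{lem:strict_bound_01}, and \cref{lem:MC_2children}.
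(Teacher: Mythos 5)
Your proof is correct, and it takes a genuinely different route from the paper's. The paper argues combinatorially: at a valuation witnessing a strict value gap between the successors of one exit, it partitions the states into value \emph{layers}, observes that every layer exit has one edge going up and one going down, extracts vertex-disjoint monotone paths from the two exits to $\fin$ and to $\fail$ respectively (with a two-case analysis, using \cref{lem:essential_reach} and \cref{lem:AS_implies_equivalence}, for when the second exit does not itself exit the layer), and then builds a fresh valuation putting probability $1-\varepsilon$ on the edges of these paths, driving one exit's value near $1$ and the other's near $0$, contradicting their equivalence. You instead differentiate: the identity $\tfrac{df(s)}{d\theta}=N(s,u_1)\,(f(a_1)-f(b_1))$ is the standard influence formula, and your derivation via $(I-P)f'=P'f$ is sound provided the right-hand side also absorbs the derivative of the absorption vector when $a_1$ or $b_1$ is a target, which your convention $f(\fin)=1$, $f(\fail)=0$ handles; equating the derivatives of the identically equal $f(u_1)$ and $f(u_2)$, and using $N(u_2,u_1)=\mathbb{P}^{u_2}_{\mathcal{M}^{\val}}[\Diamond u_1]\cdot N(u_1,u_1)$ with $N(u_1,u_1)\geq 1$, yields mutual almost-sure reachability at your generic $\val^*$, which indeed contradicts almost-sure absorption (under the standing assumptions every non-target state has a positive-probability path to $\fin$, so all non-target states are transient, and $N=(I-P)^{-1}$ exists). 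Your preliminary step that $a_1\not\sim b_1$ is both needed and correct --- it is the same convexity trick that proves \cref{lem:exit_children} --- and the selection of $\val^*$ is legitimate because reachability values are rational in the edge parameters, so the zero sets of the two not-identically-zero differences are nowhere dense in the open cube of graph-preserving valuations. As for what each approach buys: the paper's layer machinery is elementary, stays entirely within graph-theoretic and probabilistic-decomposition reasoning, and is reused wholesale in its ``second proof'' (equivalence partitions, finest valuations, \cref{lem:partition_to_valuation}); your analytic argument is conceptually shorter, dispenses with the disjoint-path construction and its case analysis, and isolates a reusable sensitivity identity, at the price of importing the fundamental matrix, differentiability of the value in a single edge parameter, and a genericity argument, none of which the paper otherwise needs.
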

\begin{proof}
    First, we observe that an equivalence class must have at least one exit since each state has a path to $\fin$. Now, we will show that it cannot have more than one exit. The proof is by contradiction. Suppose there is an equivalence class $U\subseteq S$ with at least two exits, say, $a$ and $b$. Since $a$ is an exit, there exist two states $c$ and $d$ such that the equivalence classes $\Tilde{c}$ and $\Tilde{d}$ are disjoint from each other and from $U$ from \cref{lem:AS_implies_equivalence} and \cref{lem:exit_children}. Since $c$ is not never-worse equivalent to $a$, there must be a valuation $\val\in\GraphPresVals$ such that $\ReachProbM{c}>\ReachProbM{a}$ or $\ReachProbM{c}<\ReachProbM{a}$. If the first is true, then $\ReachProbM{c}>\ReachProbM{a}>\ReachProbM{d}$ and if the second is true, then $\ReachProbM{c}<\ReachProbM{a}<\ReachProbM{d}$ since the reward value of $a$ is a convex combination of those of $c$ and $d$. Without loss of generality, assume $\ReachProbM{c}<\ReachProbM{a}<\ReachProbM{d}$. Let $m=|\set{\RewardOptValM{u}|u\in S}|$. We now construct a partition $\set{S_0,\dots,S_m}$ using $\val$ such that $\RewardOptValM{p}<\RewardOptValM{q}$ iff $p\in S_i$ and $q\in S_j$ with $i<j$. In this partition, similar to the one in the proof of \cref{thm:w_to_b_trivparam}, we refer to each $S_i$ as ``layers'' with $S_0=\set{\fail}$ being the bottom layer and $S_m=\set{\fin}$ being the top layer. Here are some properties of this partition:\begin{itemize}
        \item Any layer consists exactly of the states with the same probability of reaching $\fin$ with $\val$ with the layers with higher probability being higher than those with lower probabilities.
        \item Any equivalence class must be fully contained inside some layer.
        \item Each layer (other than the top and bottom one) must have at least one state that exits the layer (since each state must have a path to $\fin$).
        \item Each such exit must have one edge to a layer above and another edge to a layer below (since the probability of reaching $\fin$ from a state is a convex combination of those from its successors).
        \item From any state, there must be a path to a state exiting the layer such that the path is contained inside the layer. 
        \item From any state exiting a layer, there is a path to $\fin$ such that the second state in the path has strictly larger value and the values among the path are non-decreasing (by using the previous two properties).
        \item From any state exiting a layer, there is a path to $\fail$ such that the second state in the path has strictly smaller value and the values among the path are non-increasing (same reasoning as the previous property).
    \end{itemize}
    By our assumption, we can see that $\Tilde{c}$ is below the layer containing $U$ and $\Tilde{d}$ is above $U$. Let $S_i$ be the layer containing $U$\\
    Case 1: $b$ exits $S_i$ too. We observe that there are vertex disjoint paths from $a$ to $\fin$ and $b$ to $\fail$ by using the last two properties of the partition. Since these paths are vertex disjoint, we can define a new valuation which gives probability $1-\varepsilon$ to each edge in this path (for some small $\varepsilon$). This gives us a graph preserving valuation for which the probability of reaching $\fin$ from $a$ is close to $1$ and that from $b$ is close to $0$ (since $b$ reaches $\fail$ with probability close to $1$). This is a contradiction since $a\sim b$. Hence, this case cannot occur.
    \\
    Case 2: $b$ does not exit $S_i$. Since $b$ is an exit for $U$, this means that there exists a state $e\in bE\cap(S_i\backslash U)$. We claim that $e$ must have a path to another exit $f\in S_i\backslash U$ which does not visit any state in $U$. If this was not the case, it would mean that each path from $e$ to $\fin$ must contain a state in $U$. Now, we use \cref{lem:essential_reach} to get:\begin{align*}
            \ReachProbM{e}&=\sum_{u\in U}\ReachProbu{e}{M}{(S \setminus U)}{u}\cdot \ReachProbM{u}\\
                        &=\sum_{u\in U}\ReachProbu{e}{M}{(S \setminus U)}{u}\cdot \ReachProbM{e}&\text{$e$ is in the same layer as $U$}\\
                        &=\ReachProbM{e}\cdot\sum_{u\in U}\ReachProbu{e}{M}{(S \setminus U)}{u}\\
            \sum_{u\in U}\ReachProbu{e}{M}{(S \setminus U)}{u}&=1.
    \end{align*}
    This means that $e$ almost surely reaches $U$, and from \cref{lem:AS_implies_equivalence}, $e\in U$ which is a contradiction to our assumption. Thus, $b$ has a path to another state, say $f$, in the same layer which exits the layer such that no state apart from $b$ in the path is in $U$. Now, we have state disjoint paths from $a$ to $\fin$ and $f$ to $\fail$ and hence, $b$ to $\fail$. We can define a new valuation which gives probability $1-\varepsilon$ to each edge in this path (for some small $\varepsilon$). This gives us a graph preserving valuation for which the probability of reaching $\fin$ from $a$ is close to $1$ and that from $b$ is close to $0$. This is again, a contradiction to the fact that $a\sim b$.

    Since both the cases cannot occur, we can conclude that an equivalence class cannot have more than one exit.\qed
\end{proof}

Since we have shown that each equivalence class has exactly one exit, it immediately follows that each state in the equivalence class must almost surely reach the unique exit.

\subsubsection{Second proof}

Recall that for a state $s\in S$, $\Tilde{s}$ denotes the NWR equivalence class of $s$.
\begin{definition}[Equivalence partition]
    An \text{equivalence partition} of a Markov chain $\mathcal{M}$ is a partition $\set{S_0,\dots,S_m}$ such that the following hold:
    \begin{itemize}
        \item $S_0=\set{\fail}$, $S_m=\set{\fin}$ and for any $u\in S$, $\Tilde{u}\subseteq S_i$ for some $0<i<m$;
        \item Any $u\in S_j$ either has edges to some $S_i$ and $S_k$ with $i<j<k$ or it only has edges inside $S_j$;
        \item $S_i$ has at least one vertex for all $0\leq i\leq m$
    \end{itemize}
\end{definition}

The individual elements of such a partition are called \textit{layers} where $S_0$ is the bottom layer, $S_m$ is the top layer (similar to the partitioning in the proof of \cref{thm:w_to_b_trivparam}). Each layer consists of a (non-empty) union of equivalence classes.

We denote the set of all vertices ``above'' the layer $S_i$ as $S_{i+}=S_{i+1}\cup\dots\cup S_m$ and the set of all vertices ``below'' $S_i$ as $S_{i-}=S_0\dots S_{i-1}$.

\begin{definition}[Exits]
    For any partition of states, a state $s$ \text{exits} the partition if there is at least one edge from that vertex to another partition.
\end{definition}

Note that each layer (except the top and bottom layers with $\fin$ and $\fail$ respectively) has at least one exit since each state can reach $\fin$.

The following lemma extends \cref{lem:leave-set} to get lower bounds on the states in a Markov chain as long as they have a path to at least one exit.

\begin{lemma}
\label{lem:bound_exits}
Consider a MC $\mathcal{M}$ and let $U\subseteq S\backslash T$,  $\val\in\GraphPresVals$, and $V$ denote the set of states $s$ such that $sE \setminus U\neq\emptyset$. If $|V|\neq 0$ and for all $s\in V$ we have $x\leq\ReachProbM{s} \leq y$, then for all $u\in U$, $x\leq\ReachProbM{u} \leq y$.
\end{lemma}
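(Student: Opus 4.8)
The plan is to prove the two inequalities separately: the upper bound $\ReachProbM{u}\le y$ comes essentially for free from \cref{lem:leave-set}, while the lower bound $\ReachProbM{u}\ge x$ is the genuinely new content and will be obtained by a symmetric monotonicity argument. For the upper bound, observe that since $U\subseteq S\setminus T$ we have $U\cap T=\emptyset$, so the only vertices of $U$ that can leave $U$ are exactly the states of $V$ (the exits), each of value at most $y$ by hypothesis. Instantiating \cref{lem:leave-set} with its bound set to $y$ then immediately yields $\ReachProbM{u}\le y$ for every $u\in U$, with no further work.

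The substance is the lower bound. First I would show that every state of $U$ \emph{almost-surely} reaches the exit set $V$. By the standing assumptions of this section (see the proof of \cref{lem:strict_bound_01}), every non-target state has a path to $\fin$; since $\fin\notin U$, any such path must leave $U$, so from every state of $U$ there is a graph-path to some exit in $V$. In a finite Markov chain under a graph-preserving, hence full-support, valuation $\val\in\GraphPresVals$, such a path is traversed with positive probability, so one can bound the probability of escaping $U$ within $|U|$ steps below by a uniform $p>0$; then the probability of remaining inside $U$ for $k|U|$ steps is at most $(1-p)^k\to 0$, and every $u\in U$ reaches $V$ with probability $1$.

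Next I would pass to a modified chain $\mathcal{N}$ obtained from $\mathcal{M}$ by clamping every exit $s\in V$ to value exactly $x$, i.e. replacing its outgoing transitions by an edge to $\fin$ with probability $x$ and to $\fail$ with probability $1-x$, while leaving all states of $U\setminus V$ untouched. Because this only lowers the values of the exits from their true values (each $\ge x$) to exactly $x$, monotonicity of the values in the target rewards — a consequence of the uniqueness of the Bellman solution from \cref{lem:bellman_eq}, argued exactly as in the proof of \cref{lem:leave-set} — gives $\mathbb{P}^{u}_{\mathcal{N}^{\val}}[\Diamond \fin]\le \ReachProbM{u}$ for all $u\in U$. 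On the other hand, in $\mathcal{N}$ the only targets reachable from $U$ are the clamped exits, all of value $x$, and each $u\in U$ reaches $V$ almost surely; hence $\mathbb{P}^{u}_{\mathcal{N}^{\val}}[\Diamond \fin]=x$. Combining the two facts yields $x\le \ReachProbM{u}$, which together with the upper bound completes the proof.

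I expect the main obstacle to be precisely this almost-sure-reachability step, which is what breaks the symmetry between the two bounds: a state of $U$ trapped in a bottom component of $U$ with no exit would have value $0$, still consistent with the upper bound but fatal to the lower bound. The key is that the standing assumption (every state reaches $\fin$) forbids such traps, and that in a finite full-support chain ``graph-path to an exit from every state'' upgrades to ``almost-surely reaches an exit''. By contrast, the value-monotonicity under clamping exit rewards is routine once phrased through the uniqueness of the Bellman system.
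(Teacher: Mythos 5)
Your proposal is correct, but for the lower bound --- the actual content of the lemma --- it takes a genuinely different route from the paper. The paper's proof is a short extremal (discrete minimum-principle) argument: take a state $u \in U$ of minimum value in $U$ and suppose $\ReachProbM{u} < x$; then $u \notin V$, so all its successors lie in $U$, and since the value of $u$ is a full-support convex combination of the successors' values, minimality forces every successor to share that value; propagating, every state reachable from $u$ has value $\ReachProbM{u}$, contradicting the standing assumption that every state has a path to $\fin$, whose value is $1 > \ReachProbM{u}$. Your argument is instead probabilistic: you first show every state of $U$ almost surely hits the exit set $V$ (your reduction --- a within-$U$ path to $V$ from each state, then finiteness and full support to get a uniform escape probability --- is sound, since states of $U \setminus V$ have all successors inside $U$, so the walk cannot leave $U$ without first visiting $V$), and then transfer the exit bound inward by clamping the exits to value exactly $x$. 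This is a symmetric counterpart of the paper's own clamping proof of \cref{lem:leave-set}, and you correctly isolate the asymmetry the paper leaves implicit: the upper bound needs no reachability hypothesis, whereas the lower bound fails for states trapped in a bottom component of $U$, which the path-to-$\fin$ assumption rules out (the paper's proof invokes that assumption at exactly the same point). Two minor remarks: your phrase ``monotonicity \ldots{} a consequence of the uniqueness of the Bellman solution'' is slightly loose, since uniqueness alone does not imply monotonicity; the clean justification is the first-hit decomposition of \cref{lem:essential_reach}, which, combined with your almost-sure-reachability step, gives $\ReachProbM{u} = \sum_{s \in V} f_u(s)\,\ReachProbM{s} \geq x$ directly (where $f_u(s)$ is the probability of first hitting $V$ at $s$) and lets you dispense with the clamped chain entirely. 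As for what each approach buys: the paper's argument is shorter, purely combinatorial, and avoids any almost-sure-reachability machinery; yours is more quantitative, mirrors the upper-bound technique symmetrically, and makes the role of the no-trap assumption explicit --- also, unlike the paper's write-up, it nowhere relies on the two-successor normalization of \cref{lem:MC_2children} (which is in any case inessential to the paper's convexity step).
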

\begin{proof}
    
    The proof of the upper bound is the same as the proof of \cref{lem:leave-set}. We prove the lower bound by contradiction. Suppose there is a state $u\in U$ such that $x_u=\ReachProbM{u}<x$. Let $u$ be the vertex with the least value in $U$. Since $uE=2$, the value of $u$ must be a convex combination of its two successors, say $u'$ and $u''$. But, due to the assumption that $u$ had the least value in $U$ and $u\not\in V$, we get that $u',u''$ must have the same value as $u$, and hence, both of these are also not in $V$. We continue this for $u'$ and $u''$. This procedure must stop since there are only finitely many states. Call this set of states that are reachable by $u$ as $S_u$. The reward value of each state in $S_u$ is equal to $x_u$. Since we have the assumption that each state in $U$ has a path to $\fin$, we have $\fin\in S_u$ which is a contradiction because $\ReachProbM{\fin}=1>x_u$.\qed
\end{proof}

\begin{lemma}\label{lem:partition_to_valuation}
    For any equivalence partition $\set{S_0,\dots,S_m}$, there exists a valuation $\val\in\GraphPresVals$ such that for any $s_i\in S_i$ and $s_j\in S_j$ with $i<j$, we have $\ReachProbM{s_i}<\ReachProbM{s_j}$.
\end{lemma}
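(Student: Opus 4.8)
The plan is to realise the required ordering by an explicit valuation under which every state of a given layer receives \emph{exactly} the same reachability value, the per-layer values being chosen strictly increasing. Concretely, I would first fix any strictly increasing sequence $0 = c_0 < c_1 < \dots < c_m = 1$ (for instance $c_j = j/m$) and then build a graph-preserving valuation $\val$ for which $\ReachProbM{s} = c_j$ for every $s \in S_j$. Since the lemma only asks for a strict inequality across \emph{distinct} layers, pinning each layer to a single value $c_j$ is more than sufficient and keeps the bookkeeping trivial.

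The construction exploits the two cases in the definition of an equivalence partition, recalling that (by \cref{lem:MC_2children}) we may assume every non-target state has exactly two successors. If $u \in S_j$ is an \emph{exit}, its two successors lie in layers $S_i$ and $S_k$ with $i < j < k$; I would place weight $\frac{c_j - c_i}{c_k - c_i}$ on the edge toward $S_k$ and the complementary weight on the edge toward $S_i$. As $c_i < c_j < c_k$, both weights lie strictly in $(0,1)$, so the distribution has full support, and the resulting convex combination of the successor values $c_k$ and $c_i$ equals $c_j$. If $u \in S_j$ is not an exit, both successors lie inside $S_j$, so I would simply assign the uniform distribution; whatever the successor values turn out to be, they will both equal $c_j$, so the combination is again $c_j$. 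Every assigned probability is strictly positive and each local distribution sums to one, hence $\val \in \GraphPresVals$.

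It then remains to check that the vector $(c_j)$ is genuinely the reachability vector and not merely a locally consistent guess. Here I would invoke \cref{lem:bellman_eq}: for a Markov chain the set $\boldsymbol{Z}$ of states that cannot reach a target is empty, since every non-target state reaches both $\fin$ and $\fail$ (cf.\ \cref{lem:strict_bound_01}), so the reachability probability to $\fin$ is the \emph{unique} solution of the Bellman system. By the previous paragraph the assignment $V(s) = c_j$ for $s \in S_j$, together with $V(\fin) = 1$ and $V(\fail) = 0$, satisfies every Bellman equation for the valuation just built, so uniqueness forces $\ReachProbM{s} = c_j$ for all $s \in S_j$. Consequently, for $s_i \in S_i$ and $s_j \in S_j$ with $i < j$ we obtain $\ReachProbM{s_i} = c_i < c_j = \ReachProbM{s_j}$, which is exactly the claim.

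The only conceptual point, and what I expect to be the crux, is the observation that the partition axioms make the constant-per-layer assignment self-consistent: exits straddle a strictly lower and a strictly higher layer (so their target value sits strictly inside the open interval spanned by their two successor values, leaving room for a full-support choice), whereas non-exits stay within their own layer (so nothing couples them to values outside it). This decoupling is precisely what lets me avoid any global fixed-point argument over the mutually dependent values; once it is noticed, the realisability and uniqueness steps are routine.
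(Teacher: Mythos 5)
Your proof is correct, but it takes a genuinely different route from the paper. The paper constructs an $\varepsilon$-hierarchy valuation (probability $0.5$ on within-layer edges, $1-\varepsilon_i$ on up-edges and $\varepsilon_i$ on down-edges out of layer $S_i$), then proves by two inductions --- one from the top layer for lower bounds, one from the bottom for upper bounds, both channelled through \cref{lem:bound_exits} --- that every $s\in S_i$ satisfies $(1-\varepsilon_{m-1})\cdots(1-\varepsilon_i)\leq\ReachProbM{s}\leq 1-\varepsilon_1\cdots\varepsilon_i$, and finally chooses a rapidly decreasing sequence $\varepsilon_{m-1}<\dots<\varepsilon_1$ so that the lower bound of each layer strictly exceeds the upper bound of the layer below. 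You instead prescribe exact target values $c_0<\dots<c_m$, solve for the edge probabilities locally at each exit via the convex-combination formula $\frac{c_j-c_i}{c_k-c_i}$, and certify the guess globally through the uniqueness clause of \cref{lem:bellman_eq}, noting that $\boldsymbol{Z}=\emptyset$ under the section's standing assumptions (every non-target state reaches $\fin$, cf.\ \cref{lem:strict_bound_01}); you correctly observe that the partition axioms (exits straddle strictly lower and strictly higher layers, non-exits stay inside their layer, both successors known since $|sE|=2$ by \cref{lem:MC_2children}) are exactly what makes the constant-per-layer vector a Bellman solution. Your approach buys a shorter argument with no nested-$\varepsilon$ bookkeeping and a strictly stronger conclusion --- all states of a layer get \emph{equal} values, so the valuation realises the partition exactly, which is in the spirit of what the finest-valuation argument later needs; the paper's approach buys uniformity, since the same push-probability-to-$1$ technique recurs in the proofs of \cref{thm:w_to_b_trivparam} and the unique-exit theorem, and it never needs to invoke uniqueness of the Bellman fixed point, only the sandwich bounds of \cref{lem:bound_exits}.
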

\begin{proof}
    Let $0<\varepsilon_{m-1}<\dots<\varepsilon_1<1$ be small enough. Let $\set{E_1,\dots,\E_{m-1}}$ be the set of all exits in the sets $S_1,\dots,S_{m-1}$ respectively. We define the valuation $val$ as follows:
    \[
    \val(s,s')=\begin{cases}
        0.5 & \text{if $s,s'\in S_i$ for some $0<i<m$};\\
        1-\varepsilon_i & \text{if $s\in S_i$ and $s'\in S_{i+}$ for some $0<i<m$};\\
        \varepsilon_i & \text{if $s\in S_i$ and $s'\in S_{i-}$ for some $0<i<m$}.\\
    \end{cases}
    \]
    That is, all the edges within the same layer get probability $0.5$ assigned to them, the edges exiting the layer $S_i$ and going ``up'' get probability $1-\varepsilon_i$ and those going ``down'' get probability $\varepsilon$. Now, we show that assigning the probabilities in such a way gives us some interesting upper and lower bounds on the vertices in a layer.
    \begin{proposition}
        Let $\val$ be defined as above. Then for any $s\in S_i$ for some $0<i<m$ \[(1-\varepsilon_{m-1})\cdot\dots\cdot(1-\varepsilon_i)\leq\ReachProbM{s}\leq1-\varepsilon_1\cdot\dots\cdot\varepsilon_i.\]
    \end{proposition}
    \begin{proof}
        Both bounds are proved by induction. First, we will prove the lower bound by inducting starting from the top layer $S_{m-1}$. Any exit in this layer has an edge to $\fin$ with probability $1-\varepsilon_{m-1}$ (and the other edge might be to $\fail$). Thus, the bound for all vertices in $S_{m-1}$ holds by \cref{lem:bound_exits}. Now, assume that the bound $(1-\varepsilon_{m-1})\cdot\dots\cdot(1-\varepsilon_j)\leq\ReachProbM{s'}$ holds for any $s'\in S_j$ with $j>i$. Let $s$ be an exit in $S_i$. We have \begin{align*}
        \ReachProbM{s}&\geq (1-\varepsilon_i)\min_{j>i}\set{\ReachProbM{s'}|s'\in S_j}\\
        &\geq (1-\varepsilon_i)\min_{j>i}\set{(1-\varepsilon_{m-1})\cdot\dots\cdot(1-\varepsilon_j)}\\
        &= (1-\varepsilon_i)(1-\varepsilon_{m-1})\cdot\dots\cdot(1-\varepsilon_{i+1}).
        \end{align*}
        Thus, the lower bound holds for all exits in the layer and by \cref{lem:bound_exits}, it holds for all vertices in the layer.

        Now, to show the upper bound, we induct starting from the lowest layer $S_1$. Any exit in this layer goes to $\fail$ with probability $\varepsilon_1$ and thus its value can be at most $1-\varepsilon_1$. By \cref{lem:bound_exits} this bound holds for all vertices in $S_1$. Now, assume the bound $\ReachProbM{s'}\leq1-\varepsilon_1\cdot\dots\cdot\varepsilon_j$ holds for any $s'\in S_j$ with $j<i$. Let $s$ be an exit in $S_i$. In the best case scenario, it can have an edge to $\fin$ with probability $1-\varepsilon_i$ and the ``best'' vertex in $S_{i-}$ with probability $\epsilon_i$. We have \begin{align*}
            \ReachProbM{s}&\leq 1-\varepsilon_i+\varepsilon_i\cdot\max_{j<i}\set{\ReachProbM{s'}|s'\in S_j}\\
            &\leq 1-\varepsilon_i+\varepsilon_i\cdot\max_{j<i}\set{1-\varepsilon_1\cdot\dots\cdot\varepsilon_j}\\
            &=1-\varepsilon_i+\varepsilon_i\cdot(1-\varepsilon_1\cdot\dots\cdot\varepsilon_{i-1})\\
            &=1-\varepsilon_1\cdot\dots\cdot\varepsilon_i.
        \end{align*}
        Thus, the upper bound for all the exists and hence it also holds for all states in $S_i$ by \cref{lem:bound_exits}.\qed
    \end{proof}

    Back to the proof of the Lemma, it suffices to show the existence of $0<\varepsilon_{m-1}<\dots<\varepsilon_1<1$ such that the lower bound of any layer $S_i$ is greater than the upper bound of the layer $S_{i-1}$. More precisely, we want to ensure that $(1-\varepsilon_{m-1})\cdot\dots\cdot(1-\varepsilon_i)>1-\varepsilon_1\cdot\dots\cdot\varepsilon_{i-1}$ for any $1<i<m$. We have \begin{align*}
        (1-\varepsilon_{m-1})\cdot\dots\cdot(1-\varepsilon_i)&>(1-\varepsilon_i)^{m-i}&\varepsilon_i>\varepsilon_j\text{ for }j>i\\
        &>(1-\varepsilon_i)^n
    \end{align*}
    and \begin{align*}
        1-\varepsilon_1\cdot\dots\cdot\varepsilon_{i-1}&<1-\varepsilon_{i-1}^{i-1}&\varepsilon_{i-1}<\varepsilon_j\text{ for }j<i\\
        &<1-\varepsilon_{i-1}^n.
    \end{align*}
    Now, we can easily choose $\varepsilon_i$ \textit{much} smaller than $\varepsilon_{i-1}$ so that $(1-\varepsilon_i)^n>1-\varepsilon_{i-1}^n$ holds. Hence, such a sequence of $0<\varepsilon_{m-1}<\dots<\varepsilon_1<1$ exists, which concludes the proof.\qed
\end{proof}

\begin{lemma}\label{lem:valuation_to_partition}
    Any valuation $\val\in\GraphPresVals$ induces an equivalence partition $\set{S_0,\dots,S_m}$ such that two states $s,s'\in S$ are in the same layer iff $\ReachProbM{s}=\ReachProbM{s'}$.
\end{lemma}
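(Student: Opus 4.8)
The plan is to read the partition straight off the reward values induced by the fixed valuation and then to check the three defining clauses of an equivalence partition. Concretely, I would let $0 = k_0 < k_1 < \dots < k_m = 1$ enumerate the distinct values attained by $\set{\ReachProbM{s} \mid s \in S}$ in increasing order and set $S_i = \set{s \in S \mid \ReachProbM{s} = k_i}$. With this definition the biconditional demanded by the lemma --- two states share a layer exactly when they have the same probability of reaching $\fin$ --- holds by construction, so everything reduces to verifying that $\set{S_0,\dots,S_m}$ really is an equivalence partition.

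The boundary, non-emptiness, and equivalence-class clauses are quick. Each $S_i$ is non-empty because $k_i$ was chosen as an attained value. Since $\ReachProbM{\fail}=0$, $\ReachProbM{\fin}=1$, and \cref{lem:strict_bound_01} gives $0 < \ReachProbM{s} < 1$ for every non-target state, the extremal layers collapse to $S_0 = \set{\fail}$ and $S_m = \set{\fin}$. Finally, if $s \sim s'$ then $\ReachProbM{s} = \ReachProbM{s'}$ for \emph{every} graph-preserving valuation, in particular for the fixed $\val$, so an entire NWR equivalence class always falls inside one layer.

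The heart of the argument is the remaining structural clause: every $u \in S_j$ must either keep all outgoing edges inside $S_j$ or else reach both a strictly lower and a strictly higher layer. Here I would invoke the Markov-chain specialisation of \cref{lem:bellman_eq}, namely $\ReachProbM{u} = \sum_{u' \in uE} \val(u,u') \cdot \ReachProbM{u'}$, and exploit that $\val$ is graph preserving, so every weight $\val(u,u')$ is strictly positive and the right-hand side is a \emph{strict} convex combination of the successors' values. If all successors have value $\ReachProbM{u}$, then $u$ stays inside $S_j$; otherwise the combination cannot lie on one side of $\ReachProbM{u}$, since averaging strictly positive weights over successor values all $\ge \ReachProbM{u}$ (with at least one strict) already exceeds $\ReachProbM{u}$, and symmetrically below. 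Hence some successor sits strictly above and some strictly below, yielding edges to a higher and to a lower layer. This convex-combination step is where the real content lies; notably it uses only strict positivity of $\val$, not the normalisation $|uE| = 2$ from \cref{lem:MC_2children}, so it applies to the chain as given.
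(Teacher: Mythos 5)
Your construction is exactly the paper's: the paper defines the layers by grouping states with equal reachability value and ordering them, then leaves the verification as ``one can easily check.'' Your proposal is correct and simply carries out that check explicitly --- the boundary clauses via \cref{lem:strict_bound_01} and the standing assumption that $\fin$ and $\fail$ are alone in their classes, and the structural clause via the strict-convex-combination argument (the Markov-chain instance of \cref{lem:bellman_eq,lem:val_compare_child}) --- so it matches the paper's approach.
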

We define the partition by grouping all the vertices with the same reward into the same layer and then ordering the layers by the rewards of the states in them. That is, layers with higher rewards are above the layers with lower rewards. One can easily check that this is indeed an equivalence partition.

\begin{definition}[Finest valuation]
    A graph preserving valuation $\val$ is a finest valuation if for all $u,v\in S$, $\ReachProbM{u}=\ReachProbM{v}$ iff $u\sim v$.
\end{definition}
\begin{definition}[Finest partition]
    A finest partition is an equivalence partition such that the number of layers in the partition is exactly the number of equivalence classes.
\end{definition}

\begin{lemma}
    A finest valuation always exists for a Markov chain.
\end{lemma}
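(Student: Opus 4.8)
The plan is to exhibit a single graph-preserving valuation that separates every pair of non-equivalent states; by definition such a valuation is finest. First I would recall that in a Markov chain the value of a state equals its reachability probability $\ReachProbM{s}$, and that by \cref{lem:bellman_eq} this quantity is the unique solution of a linear system whose coefficients are the transition probabilities. Since $\mathcal{M}$ is trivially parametric, those probabilities are exactly the edge variables, constrained only to be strictly positive and to sum to one at each state. Hence each $\ReachProbM{s}$ is a \emph{rational function} of the edge variables, defined over the semialgebraic set $\GraphPresVals$ of full-support distributions. This set is nonempty (the uniform distribution at every state is graph-preserving) and full-dimensional inside the affine subspace cut out by the sum-to-one equations, since it contains a ball around the uniform valuation.

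The easy half of ``finest'' holds for \emph{every} valuation: if $u \sim v$ then $u \trianglelefteq v$ and $v \trianglelefteq u$, so $\ReachProbM{u} = \ReachProbM{v}$ for all $\val$. It remains to find one $\val$ for which the converse holds, namely that $u \not\sim v$ implies $\ReachProbM{u} \neq \ReachProbM{v}$. For each unordered pair $\{u,v\}$ with $u \not\sim v$, without loss of generality $u \not\trianglelefteq v$, so there is a graph-preserving valuation at which $\ReachProbM{u} > \ReachProbM{v}$; consequently the rational function $g_{u,v} := \ReachProbM{u} - \ReachProbM{v}$ is not identically zero on $\GraphPresVals$. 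A rational function that is nonzero at some point of the full-dimensional region $\GraphPresVals$ vanishes only on a lower-dimensional (measure-zero) subset of it. There are finitely many such pairs, so the union of their zero sets still has measure zero and its complement in $\GraphPresVals$ is nonempty. Any $\val$ in this complement separates all non-equivalent pairs and is therefore finest. I would then note in passing that this simultaneously settles the existence of a finest partition: feeding the finest valuation just built into \cref{lem:valuation_to_partition} produces a partition whose layers are the reward level sets, which by construction coincide with the NWR equivalence classes, so the two notions are interderivable (and \cref{lem:partition_to_valuation} goes back the other way).

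The main obstacle is the step claiming that each $g_{u,v}$ vanishes only on a negligible part of $\GraphPresVals$. The subtlety is that $\GraphPresVals$ lives inside the affine constraint subspace, not all of Euclidean space, so nonvanishing must be argued \emph{relative} to that subspace rather than absolutely; the witness coming from $u \not\trianglelefteq v$ is exactly a point of $\GraphPresVals$ where $g_{u,v} \neq 0$, which shows $g_{u,v}$ is not the zero function on this connected full-dimensional region and hence that its zero set is a proper subvariety of measure zero there. The only care needed beyond this is to confirm that $\ReachProbM{s}$ is genuinely a well-defined rational function (the determinant of the relevant linear system is nonzero on $\GraphPresVals$, as it corresponds to the almost-sure termination guaranteed under the standing assumptions), after which the finite-union measure-zero argument goes through routinely.
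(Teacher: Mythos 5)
Your proposal is correct, but it takes a genuinely different route from the paper. The paper argues combinatorially: it picks a valuation whose induced equivalence partition (\cref{lem:valuation_to_partition}) has the maximum possible number of layers, shows that a layer containing two equivalence classes can be split into strictly more layers --- using the fact that at least two classes in such a layer must have exits (\cref{lem:AS_implies_equivalence}), a disjoint-paths argument to certify that the split layers remain valid, and an iterative re-layering of classes with edges into the moved layer --- and then converts the refined partition back into a valuation with more layers via the explicit $\varepsilon$-construction of \cref{lem:partition_to_valuation}, contradicting maximality. You instead run a genericity argument: each $\ReachProbM{s}$ is a rational function of the edge parameters with denominator $\det(I-Q)$ nonvanishing on $\GraphPresVals$ (invertibility holds because, after the paper's standing collapses, every non-target state reaches $T$ almost surely under any graph-preserving valuation), each non-equivalent pair supplies a witness point where the difference is nonzero by the very definition of the NWR, so each difference vanishes only on a measure-zero subset relative to the affine constraint subspace, and any valuation avoiding the finite union of these zero sets is finest; the easy direction ($u \sim v$ forces equal values everywhere) holds at every valuation. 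Both subtleties you flag --- working relative to the sum-to-one subspace rather than ambient space, and well-definedness of the rational functions --- are handled correctly, and the standard fact that a polynomial nonzero at a point of an open set has a measure-zero zero set closes the argument. What each approach buys: yours is shorter, dispenses with the two-successor normalization and the exit machinery, and proves the stronger statement that \emph{almost every} graph-preserving valuation is finest; the paper's proof is more laborious but constructive (explicit $\varepsilon$-valuations), and its structural by-products --- each non-trivial class has a unique exit that is almost-surely reached --- are precisely what the polynomial-time procedure of \cref{thm:MC_equivalence_ptime} consumes, whereas a generic valuation by itself offers no direct algorithmic handle for computing the classes.
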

\begin{proof}
    Suppose, that a finest valuation does not exist. Let $\val$ be a valuation such that there does not exist any other valuation such that the induced equivalence partition (\cref{lem:valuation_to_partition}) has more layers than the one induced by $\val$. Let $\set{S_0,\dots,S_m}$ be the equivalence partition induced by $\val$. Since $\val$ is not a finest valuation, there exists a layer, say $S_i$, with at least two equivalence classes. Now, we observe that at least two of these equivalence classes must have states that exit $S_i$ (otherwise, if there was only one equivalence class with exits, all other equivalence classes on that layer would almost surely reach this equivalence class, and by \cref{lem:AS_implies_equivalence}, all of these equivalence classes would be equivalent to the one with the exits, which is not possible). We now move one of these equivalence classes with exits to a new layer $S_{i'}$ which is placed in between $S_i$ and $S_{i+1}$. We call the layer obtained by removing $S_{i'}$ as $S_{i''}$. Note that each exit has one edge in an equivalence partition has an going up and one edge going down. This along with the fact that any state can almost surely reach the set of all exits, ensures that if a state $u$ is on a layer above $v$ in some equivalence partition, then there are state-disjoint paths from $u$ to $\fin$ and $v$ to $\fail$. Consider the new partition $\set{S_0,\dots,S_{i''},S_{i'},S_{i+1},\dots,S_m}$ (which is not necessarily an equivalence partition). Let $s$ be an exit in $S_{i'}$ which was also an exit in $S_i$ in the original partition. That is, $s$ has an edge to $S_{i'+}$ and one to $S_{i''-}$. We claim that $s$ is the unique vertex with outgoing edges from $S_{i'}$ (i.e. its equivalence class). Suppose there is another vertex $s'$ that exits $S_{i'}$, then it must have an edge to $S_{i''}$ or lower. Thus, there are disjoint paths from $s$ to $\fin$ and $s'$ to fail. By making the probabilities on these paths close to $1$, we can make the reward of $s$ close to $1$ and $s'$ close to $0$. This contradicts the fact that both of them are NWR equivalent. Thus, $S_{i'}$ is a ``valid'' layer of an equivalence partition. We can use a similar argument to show that any equivalence class in $S_i$ which has an exit from $S_i$ must have a unique vertex with edges outside the equivalence class. But there might be states in the original $S_i$ that had an edge to a state in $S_{i'}$. We will now make layers iteratively such that all the layers except $S_i''$ are still valid layers in each step. Let $S_{i1}\subset S_{i''}$ be one of the equivalence classes which had an edge to $S_{i'}$. We move $S_{i1}$ between $S_{i''}$ and $S_{i'}$. Using the argument from above, this state going up to $S_{i'}$ and down to $S_{i''}$ is the unique exit of this layer. We continue doing this until there are no more new exits (apart from the exits originally in $S_i$) in $S_{i''}$. This gives us an equivalence partition which is finer than the original one. Now, using \cref{lem:partition_to_valuation}, we can get a valuation $\val'$ finer than $\val$ which is a contradiction to the assumption.\qed
\end{proof}

Now, we use this finest valuation to get a finest partition and use this to show that each equivalence class has a unique state with edges going out of the equivalence class.
\begin{lemma}\label{lem:unique_exit}
    For any equivalence class $U\subset S\backslash T$, there exists a unique state $s\in S$ such that $sE\not\subseteq U$.
\end{lemma}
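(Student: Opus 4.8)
The plan is to exploit the finest valuation, whose existence was just established, together with the structural facts we have already proved about exits. Fix a finest valuation $\val$ and let $\set{S_0,\dots,S_m}$ be the equivalence partition it induces via \cref{lem:valuation_to_partition}; because $\val$ is finest, every layer is exactly one equivalence class, so the given class $U$ coincides with some layer $S_i$ with $0<i<m$. Write $c$ for the common value $\mathbb{P}^{s}_{\mathcal{M}^{\val}}[\Diamond\fin]$ shared by all $s\in U$; by construction every vertex in $S_{i+}$ has value strictly greater than $c$ and every vertex in $S_{i-}$ strictly smaller. First I would record that $U$ has at least one exit: by \cref{lem:strict_bound_01} every $s\in U$ reaches $\fin$, and since $\fin\notin U$ any such path must leave $U$, so its first departing edge witnesses an exit.

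For uniqueness I would argue by contradiction, assuming there are two distinct exits $a,b\in U$. By \cref{lem:exit_children} both successors of each exit lie outside $U$; since $|aE|=2$ and the value $c$ of $a$ is a convex combination of the values of its two successors, one successor $a^+$ must have value $>c$ (hence $a^+\in S_{i+}$) and the other $a^-$ value $<c$ (hence $a^-\in S_{i-}$), and symmetrically for $b$. I would then apply \cref{lem:non-dec-path} to $a^+$ to obtain a (simple, after deleting cycles) path to a target along which the values are non-decreasing; since these values never drop below that of $a^+>c>0$, the target reached must be $\fin$ and every vertex on the path lies in $S_{i+}$. A symmetric (dual) application of \cref{lem:non-dec-path} to $b^-$, aiming for $\fail$, yields a simple path whose values are non-increasing and therefore bounded above by the value of $b^-<c<1$, so it must end in $\fail$ and stays entirely within $S_{i-}$. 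Prepending $a$ and $b$ respectively gives a path $\pi_a$ from $a$ to $\fin$ contained in $\set{a}\cup S_{i+}$ and a path $\pi_b$ from $b$ to $\fail$ contained in $\set{b}\cup S_{i-}$.

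The crucial point --- and the step I expect to be the main obstacle --- is that $\pi_a$ and $\pi_b$ are \emph{state-disjoint}: this is exactly where the layering is used, since $S_{i+}\cap S_{i-}=\emptyset$, neither set contains $a$ or $b$, and $a\neq b$. Given disjoint paths, I would define a graph-preserving valuation $\val'$ placing probability $1-\varepsilon$ on every edge of $\pi_a$ and of $\pi_b$ (each unique sibling edge then carries $\varepsilon$) and arbitrary full-support probabilities elsewhere. Because every state almost surely reaches a target, $\mathbb{P}^{b}_{\mathcal{M}^{\val'}}[\Diamond\fin]=1-\mathbb{P}^{b}_{\mathcal{M}^{\val'}}[\Diamond\fail]$, so for $\varepsilon$ small enough $\mathbb{P}^{a}_{\mathcal{M}^{\val'}}[\Diamond\fin]\geq(1-\varepsilon)^{|\pi_a|}$ is close to $1$ while $\mathbb{P}^{b}_{\mathcal{M}^{\val'}}[\Diamond\fin]\leq 1-(1-\varepsilon)^{|\pi_b|}$ is close to $0$. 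These values differ, contradicting $a\sim b$, which forces equality under \emph{every} valuation. Hence $U$ has at most one exit, and together with existence, exactly one.
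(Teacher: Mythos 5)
Your proof is correct and takes essentially the same route as the paper's: fix a finest valuation, identify the equivalence class $U$ with a layer of the induced finest partition, and from two distinct exits extract state-disjoint paths (one ascending to $\fin$, one descending to $\fail$) whose edge probabilities can be pushed towards $1$, forcing the two exits' values apart and contradicting their NWR-equivalence. You merely make explicit several steps the paper leaves implicit --- the existence of at least one exit, the convex-combination argument placing one successor of each exit strictly above and one strictly below the layer, and the dual (non-increasing) version of \cref{lem:non-dec-path} used for the descending path, which the paper likewise asserts without separate proof.
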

\begin{proof}
    Let $\val$ be a finest valuation. Let $\set{S_0,\dots,S_n}$ be a finest partition. Thus, each equivalence class has its own layer. Let $U=S_i$ for some $0<i<n$. Assume that $S_i$ has two exits $s$ and $s'$. We can now obtain state disjoint paths from $s$ to $\fin$ (by only going up) and from $s'$ to $\fail$ (by going down). We can now make the probabilities on the edges on these paths close to 1 so that the reward of $s$ is close to $1$ and that of $s'$ is close to $0$. This is a contradiction to the fact that they were in the same equivalence class.\qed
\end{proof}

Now, using \cref{lem:unique_exit} and \cref{lem:bound_exits}, we get our main result that all vertices of an equivalence class in a Markov chain can almost surely reach the unique exit.

One can easily give a polynomial time algorithm for this by checking almost-sure reachability between pairs of vertices.

\begin{theorem}\label{thm:MC_equivalence_ptime}
    Computing and collapsing the NWR equivalence classes of a Markov chain can be done in polynomial time.
\end{theorem}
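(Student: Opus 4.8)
The plan is to turn the structural results just established into a direct combinatorial algorithm. Concretely, I would prove the following characterisation of the equivalence relation and then observe that both of its ingredients are computable in polynomial time: for two states $u,w \in S$, we have $u \sim w$ if and only if there is a state $z \in S$ that is almost-surely reachable from both $u$ and $w$. Once this is in place, computing the classes amounts to computing, for every state, the set of states it almost-surely reaches, and then grouping states that share such a witness; collapsing each group gives the reduced chain.

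First I would justify the characterisation using the lemmas already proved. For the forward direction, suppose $u \sim w$, so that $u$ and $w$ lie in a common equivalence class $U$. By \cref{lem:unique_exit}, $U$ has a unique exit $e \in U$, and, as argued immediately before this theorem via \cref{lem:unique_exit} and \cref{lem:bound_exits}, every state of $U$ almost-surely reaches $e$; taking $z = e$ gives the required common witness. For the backward direction, suppose $u$ and $w$ both almost-surely reach some $z$. Since $z \in \Tilde{z}$, reaching $z$ almost surely implies almost-surely reaching the whole class $\Tilde{z}$, so \cref{lem:AS_implies_equivalence} yields $u \in \Tilde{z}$ and $w \in \Tilde{z}$, hence $u \sim w$. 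Equivalently, one can invoke \cref{lem:essential_reach} with the singleton $\{z\}$ to conclude $\ReachProbM{u} = \ReachProbM{z} = \ReachProbM{w}$ for every $\val \in \GraphPresVals$.

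Next I would assemble the algorithm and bound its running time. Under the standing normalisation of this section (only $\fin$ and $\fail$ in their own classes, no self-loops, and exactly two successors per non-target state by \cref{lem:MC_2children}), almost-sure reachability of a fixed target is a purely graph-theoretic property: for each candidate witness $z$, the set of states that almost-surely reach $z$ is exactly the set of value-$1$ states when $z$ is made the only, weight-$1$ target, which is computable in polynomial time by \cite[Algorithm 45]{baier2008}. Iterating over the $O(|S|)$ choices of $z$ and grouping states whose a.s.-reachability sets intersect partitions $S$ into the equivalence classes; since $\sim$ is genuinely an equivalence relation, this grouping is consistent. Finally, collapsing each class to a single state is linear, and the preprocessing reductions recalled at the start of the section (contracting extremal states, eliminating self-loops and single-successor states) are each polynomial, so the whole procedure runs in polynomial time.

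The conceptual heavy lifting is done by the unique-exit result and \cref{lem:AS_implies_equivalence}; given those, the remaining work is routine, so the main obstacle is not the complexity bound but getting the two directions of the characterisation exactly right. In particular, the backward direction must not quietly assume that $z$ is an exit: it need not be, but because in a Markov chain any $z$ that is almost-surely reached from a state is essential for it, the essential-reachability decomposition forces equality of the reach-$\fin$ values for \emph{all} valuations, which is precisely equivalence. I would therefore double-check that the normalisation assumptions (which guarantee, for instance, that no non-target state almost-surely reaches $\fin$, by \cref{lem:strict_bound_01}) are preserved by the preprocessing, so that the characterisation genuinely applies to the normalised chain on which the algorithm operates.
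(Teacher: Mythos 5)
Your proposal is correct and follows essentially the same route as the paper: the characterisation ``$u \sim w$ iff some state $z$ is almost-surely reached from both'' is exactly the paper's criterion, with the forward direction supplied by the unique-exit result (\cref{lem:unique_exit}, together with the observation that every state of a class almost-surely reaches its unique exit) and the backward direction by \cref{lem:AS_implies_equivalence}, followed by the same polynomial-time algorithm based on computing almost-sure reachability sets via \cite[Algorithm 45]{baier2008}. Your write-up is in fact somewhat more explicit than the paper's (which leaves the algorithm and the backward direction largely to the reader), and your cautionary remarks about the normalisation assumptions and about $z$ not needing to be an exit are consistent with the paper's standing assumptions.
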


\section{Proof of \cref{lem:underapprox_init}}

We first state the following lemma which says that the never-worse relation is transitive:
\begin{lemma}\label{lem:transitive}
    For any wpMDP $\mathcal{M}=(S,A,X,\delta,T,\rho)$, for any $P,Q,R\subseteq V$, if $P\trianglelefteq Q$ and $Q\trianglelefteq R$, then $P\trianglelefteq R$.
\end{lemma}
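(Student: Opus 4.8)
The plan is to prove the statement by directly unfolding the definition of $\trianglelefteq$ on sets and chaining the value inequalities, with the only real care needed being the order of quantifiers. Recall that $U \trianglelefteq W$ means $U \neq \emptyset$ together with: for every $u \in U$, every $\val \in \GraphPresVals$ admits some $w \in W$ with $\RewardOptVal(u) \leq \RewardOptVal(w)$. So to establish $P \trianglelefteq R$ I must show $P \neq \emptyset$ and produce, for an arbitrary $p \in P$ and an arbitrary valuation, a witness in $R$ dominating the value of $p$.

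First I would note that $P \neq \emptyset$ is immediate, since it is part of the hypothesis $P \trianglelefteq Q$. Then I would fix an arbitrary $p \in P$ and an arbitrary $\val \in \GraphPresVals$, and the goal becomes to exhibit some $r \in R$ with $\RewardOptVal(p) \leq \RewardOptVal(r)$. Applying $P \trianglelefteq Q$ to this $p$ and this particular $\val$ yields a $q \in Q$ (which may depend on $\val$) with $\RewardOptVal(p) \leq \RewardOptVal(q)$. Crucially, I would then invoke $Q \trianglelefteq R$ on this same $q$ \emph{and the same valuation} $\val$, obtaining an $r \in R$ with $\RewardOptVal(q) \leq \RewardOptVal(r)$. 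Transitivity of $\leq$ on the reals then gives
\[
  \RewardOptVal(p) \leq \RewardOptVal(q) \leq \RewardOptVal(r),
\]
so $r$ is the desired witness. Since $p$ and $\val$ were arbitrary, this establishes $P \trianglelefteq R$.

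The only point that deserves attention — and the ``obstacle'', such as it is — is that the definition interleaves the quantifiers as $\forall \val\, \exists w$, so the intermediate witness $q \in Q$ genuinely depends on $\val$. The argument goes through precisely because both hypotheses are universally quantified over valuations, letting me reuse the \emph{same} $\val$ when passing from $Q$ to $R$; there is no need for a uniform choice of witness across valuations. No properties of $\RewardOptVal$ beyond its being a real-valued function are used, so the reasoning is independent of the wpMDP structure and of whether $P$, $Q$, $R$ are singletons or larger sets.
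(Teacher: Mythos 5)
Your proof is correct and takes essentially the same approach as the paper, which offers no explicit argument at all (it states that ``the reader can easily verify'' the lemma from the definition); your direct unfolding --- fixing $p$ and $\val$, obtaining a $\val$-dependent witness $q \in Q$, and then applying $Q \trianglelefteq R$ to that same $q$ at the same $\val$ --- is exactly the intended verification. Your observation that the only subtlety is the $\forall\,\val\;\exists\,w$ quantifier order, resolved by reusing the same valuation in both hypotheses, is accurate.
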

The reader can easily verify that the above lemma holds from the definition of the NWR.

Now, for the proof of \cref{lem:underapprox_init}, to show that
\eqref{eqn:underapprox} holds, due to \cref{lem:transitive}, it is enough to
show that for any $U,W\in N$, if $U\rightarrow W$ via an edge, then
$U\trianglelefteq W$. We show this for all four types of edges:
\begin{enumerate}
    \item We know that $\fail\trianglelefteq W$ for any $W\subseteq V$. Thus, an edge from $\fail$ to any node satisfies the NWR.
    \item We know that $W\trianglelefteq \fin$ for any $W\subseteq V$. Thus, an edge from any node to $\fin$ satisfies the NWR.
    \item We know, from \cref{lem:val_compare_child}, that for any valuation $\val\in\GraphPresVals$ and vertex $v\in V$, $\RewardOptVal(v)\leq\max_{u\in vE}\RewardOptVal(u)$. Hence, an edge from $v$ to $vE$ would satisfy the NWR.
    \item Again, from \cref{lem:val_compare_child}, we know that for a valuation $\val\in\GraphPresVals$ and a state $v\in S$, $\RewardOptVal(v)=\max_{u\in vE}\RewardOptVal{u}$. Thus, $v$ is never-worse than each state-action pair in $vE$ and hence, the edge satisfies the NWR.
\end{enumerate}
This concludes the proof.\qed

\section{Proof of \cref{lem:underapprox_add_edge}}

From \cref{lem:transitive}, if $\mathcal{U'}$ satisfied invariant \eqref{eqn:underapprox}, to show that $\mathcal{U}$ satisfies invariant \eqref{eqn:underapprox} too, it is enough to show that all the new edges that were added satisfy the NWR. The edges added from (to) the subsets (supersets) of $U$ and $W$ satisfy the NWR because for any $P\subseteq P'\subseteq V$, it is true that $P\trianglelefteq P'$ by definition. It is given that $U\trianglelefteq W$, thus adding the edge $U\rightarrow W$ (if there is no path from $U$ to $W$) should also satisfy the NWR.

\section{Proof of \cref{thm:mec_edge_removal}}

\begin{proof}[Proof of Theorem~\ref{thm:mec_edge_removal}]For simplicity, we rename the state-action pair $(s,a)$ as $x$. We will split the proof into two cases: one which considers the possibility of $s$ getting disconnected from $\fin$ after removing the edge and the other in which $s$ still has a path to $\fin$ after removing the edge.

\textbf{Case 1}, i.e. $s$ has no path to $\fin$ after removing the edge $(s,x)$:
This would imply that no vertex in $sE\backslash\Set{x}$ has a path to $\fin$
after removing the edge. Let $u\in sE\backslash \Set{x}$. Thus, for any
$\sigma$, we have
$\ReachProb{v}{\mathcal{M}_{\sigma}^\val}{\fin}=\ReachProb{v}{\mathcal{M}_{\sigma}^\val}{s}\cdot\ReachProb{s}{\mathcal{M}_{\sigma}^\val}{\fin}$
for all $v\in uE$ (from \cref{lem:essential_reach}). If we have
$\ReachProb{v}{\mathcal{M}_{\sigma}^\val}{s}=1$ for all $v\in uE$, then it
would mean that $v$ and $s$ are in the same MEC, and hence, by our assumption
that $\mathcal{M}$ is isomorphic with $\mmec$, we deduce that $v=s$ and hence
$uE=\Set{s}$. But this cannot be the case since we would not have the edge
from $s$ to $u$ if $uE=\Set{s}$. This means that there must be at least one
vertex $v\in uE$ for which $\ReachProb{v}{\mathcal{M}_{\sigma}^\val}{s}<1$.
Hence, we have
$\ReachProb{u}{\mathcal{M}_{\sigma}^\val}{\fin}<\ReachProb{s}{\mathcal{M}_{\sigma}^\val}{\fin}$.
Since $u$ was chosen arbitrarily from $sE\backslash \Set{x}$, and because $\ReachProb{u}{\mathcal{M}_{\sigma}^\val}{\fin}<\ReachProb{s}{\mathcal{M}_{\sigma}^\val}{\fin}$, we get that
\begin{align*}
    \RewardOptValM(s)&=\max\Set{\RewardOptValM(u)\mid u\in sE}\\
    &=\max\left\{\left(\max_{u \in sE\backslash
    \Set{x}}\RewardOptValM(u)\right)\cup\RewardOptValM(x)\right\}\\
    &=\RewardOptValM(x).
\end{align*}
Now, we get the contradiction that $\RewardOptValM(x)=\RewardOptValM(s)>\RewardOptValM(u)$ for $u\in sE\backslash \Set{x}$ since our assumption was that $x\trianglelefteq(sE \setminus \Set{x})$. Hence, Case 1 cannot happen.

\textbf{Case 2}, i.e. the connectivity of $s$ to $\fin$ does not change after removing the $s\rightarrow x$ edge:
For any valuation $\val\in\GraphPresVals$, using \cref{lem:bellman_eq}, we know that for the following system of linear equations
\[\text{For all }s\in S,\ y(s) = \begin{cases}
    1       & \text{if}\; s=\fin; \\
    0       & \text{if}\; s \in \mathbf{Z}_\fin; \\
    \max_{v \in sE}\sum_{s'\in vE}\delta(v,s')[\val]\cdot y(s') & \text{otherwise}
\end{cases}\]
has a unique solution \[y(u)=\max_\sigma\ReachProb{u}{\mathcal{M}_{\sigma}^\val}{\fin}.\]
Since we have $x\trianglelefteq(sE \setminus \Set{x})$, we assert that the following system of equations has the same unique solution:
\[\text{For all }u\in S,\ y(u) = \begin{cases}
    1       & \text{if}\; u=\fin; \\
    0       & \text{if}\; u \in \mathbf{Z}_\fin; \\
    \max_{v \in uE\setminus\Set{x}}\sum_{s'\in vE}\delta(v,s')\cdot y(s') & \text{if }u=s;\\
    \max_{v \in uE}\sum_{s'\in vE}\mu_v(s')\cdot y(s') & \text{otherwise}.
\end{cases}\]
We have already shown that \textbf{Case 1} cannot happen. That is, removing the edge from $s$ to $x$ does not change the set $\mathbf{Z}_\fin$. Thus, the solution of this modified system has the unique solution: \[y(u)=\max_\sigma\ReachProb{u}{\mathcal{N}_{\sigma}^\val}{\fin}.\]
Thus, the values of the vertices are preserved even after we remove the edge
$s\rightarrow x$.\qed
\end{proof}

\end{document}